\documentclass[12pt]{article}
\usepackage{amsmath}
\usepackage{graphicx}%
\usepackage{amsfonts}%
\usepackage{amssymb}
\usepackage{overpic}
\usepackage{subfigure}
\usepackage{rotating}
\usepackage{url}
\usepackage{fullpage}
\usepackage{comment}
\usepackage{enumerate}
\usepackage{multirow}
\usepackage{booktabs}
\usepackage{tabularx}
\usepackage{makecell}
\usepackage{booktabs}
\usepackage{natbib}
\usepackage[colorlinks = true,
            linkcolor = blue,
            urlcolor  = blue,
            citecolor = blue,
            anchorcolor = blue]{hyperref}
\usepackage[letterpaper,left=1in,top=1in,right=1in,bottom=1in,lines=25]{geometry}

\usepackage{setspace}
\setstretch{1.5}
\doublespacing

\newcommand{\bA}{ {\boldsymbol A} }

\newcommand{\bD}{ {\boldsymbol D} }

\newcommand{\bH}{ {\boldsymbol H} }

\newcommand{\bI}{ {\boldsymbol I} }

\newcommand{\bm}{ {\boldsymbol m} }
\newcommand{\bM}{ {\boldsymbol M} }

\newcommand{\bQ}{ {\boldsymbol Q} }

\newcommand{\bs}{ {\boldsymbol s} }

\newcommand{\bt}{ {\boldsymbol t} }

\newcommand{\bu}{ {\boldsymbol u} }
\newcommand{\bU}{ {\boldsymbol U} }

\newcommand{\bW}{ {\boldsymbol W} }
\newcommand{\bx}{ {\boldsymbol x} }
\newcommand{\bX}{ {\boldsymbol X} }
\newcommand{\by}{ {\boldsymbol y} }

\newcommand{\bgamma}{ {\boldsymbol \gamma} }
\newcommand{\bGamma}{ {\boldsymbol \Gamma} }

\newcommand{\bLambda}{ {\boldsymbol \Lambda} }
\newcommand{\bmu}{ {\boldsymbol \mu} }

\newcommand{\bomega}{ {\boldsymbol \omega} }
\newcommand{\bOmega}{ {\boldsymbol \Omega} }

\newcommand{\bSigma}{ {\boldsymbol \Sigma} }

\newcommand{\bzero}{ {\boldsymbol 0} }

\newcommand{\given}{\,|\,}

\newtheorem{theorem}{Theorem}[section]
\newtheorem{lemma}[theorem]{Lemma}

\newenvironment{proof}[1][Proof]{\begin{trivlist}
\item[\hskip \labelsep {\bfseries #1}]}{\end{trivlist}}

\newcommand{\qed}{\nobreak \ifvmode \relax \else
      \ifdim\lastskip<1.5em \hskip-\lastskip
      \hskip1.5em plus0em minus0.5em \fi \nobreak
      \vrule height0.75em width0.5em depth0.25em\fi}

\title{High Dimensional Bayesian Network Classification with Network Global-Local Shrinkage Priors}
\author{Sharmistha Guha $\&$ Abel Rodriguez}
\begin{document}
\maketitle
\begin{abstract}
This article proposes a novel Bayesian classification framework for networks with labeled nodes. While literature on statistical modeling of network data typically involves analysis of a single network, the recent emergence of complex data in several
biological applications, including brain imaging studies, presents a need to devise a network classifier for subjects. This article considers an application from a brain connectome study, where the overarching goal is to classify subjects into two separate groups based on their brain network data, along with identifying
influential regions of interest (ROIs) (referred to as nodes). Existing approaches either treat all edge weights as a long vector or summarize the network information with a few summary measures. Both these approaches ignore the full network structure, may
lead to less desirable inference in small samples and are not designed to identify significant network nodes. We propose a novel binary logistic regression framework with the network as the predictor and a binary response, the network predictor coefficient being modeled using a novel class \emph{global-local} shrinkage priors. The framework is able to accurately detect nodes and edges in the network influencing the classification. Our framework is implemented using an efficient Markov Chain Monte Carlo algorithm. Theoretically, we show asymptotically optimal classification for the proposed framework when the number of network edges grows faster than the
sample size. The framework is empirically validated by extensive simulation studies
and analysis of a brain connectome data.
\end{abstract}
{\em Keywords:} Brain Connectome, High dimensional binary regression; Global-Local shrinkage
prior; Node selection; Network predictor; Posterior consistency.

\section{Introduction}
Of late, the statistical literature has paid substantial attention to the unsupervised analysis of a single network, thought to be generated from a variety of classic models, including random graph models \cite{erdos1960evolution}, exponential random graph models \cite{frank1986markov}, social space models \cite{hoff2002latent, hoff2005bilinear, hoff2009multiplicative} and stochastic block models \cite{nowicki2001estimation}. These models have found prominence in social networking applications where the nodes in the network are exchangeable. However, there are pertinent biological and physiological applications in which network nodes are labeled and a network is available corresponding to every individual. Section \ref{connectome} presents one such example from a brain connectome study, where brain networks are available for multiple individuals who are classified as subjects with high or low IQ (Intelligence Quotient). In this study, the human brain has been divided according to the Desikan Atlas \cite{desikan2006automated} that identifies 34 cortical regions of interest (ROIs) both in the left and the right hemispheres of the human brain, implying $68$ cortical ROIs in all. A \emph{brain network} for each subject is represented by a symmetric adjacency matrix whose rows and columns are labeled corresponding to different ROIs (shared among networks corresponding to all individuals) and entries correspond to estimates of the number of \emph{fibers} connecting pairs of brain regions. The scientific goal in this setting pertains to developing a predictive rule for classifying a new subject as having low or high IQ based on his/her observed brain network with labeled nodes. Additionally, it is of specific interest for neuroscientists to identify influential brain regions (nodes in the brain network) and significant connections between different brain regions predictive of IQ.

\cite{guha2020bayesian} discuss the network regression problem with a continuous response and an undirected network predictor. However, there are pertinent biological and physiological studies where a network along with a binary response is obtained for each subject. The goal of these studies is usually to classify the networks according to the binary response and predict the associated binary response from a network. We refer to this problem as the network or graph classification problem. Additionally, \cite{guha2020bayesian} focus on a specific network shrinkage prior, whereas this article generalizes the inference to a class of network global-local shrinkage priors, which includes the prior specification in \cite{guha2020bayesian} as a special case.

 Earlier literature on network or graph classification has been substantially motivated by the problem of classification of chemical compounds \cite{srinivasan1996theories}, \cite{helma2001predictive}, where a graph represents a compound's molecular structure. In such analyses, certain discriminative patterns in a graph are identified and used as features for training a standard classification method
\cite{deshpande2005frequent}, \cite{fei2010boosting}. Another type of method is based on graph kernels \cite{vishwanathan2010graph}, which defines a similarity measure between two networks. Both of these approaches are computationally feasible only for small networks, do not account for uncertainty, and do not facilitate influential network node identification. When the number of network nodes is moderately large, a common approach to network classification is to use a few summary measures (average degree, clustering coefficient, or average path length) from the network and then apply statistical procedures in the context of standard classification methods (see, for e.g., \cite{bullmore2009complex} and references therein). These procedures have been recently employed in exploring the relationship between the brain network and neuropsychiatric diseases, such as Parkinson's \cite{olde2013disrupted} and Alzheimer's \cite{daianu2013breakdown}, but the analyses are sensitive to the chosen network topological measures, with substantially different results obtained for different types of summary statistics. Indeed, global summary statistics collapse all local network information, which can affect the accuracy of classification. Furthermore, identification of the impact of specific nodes on the response, which is of clear interest in our setting, is not feasible. As with network regression problems, an alternate approach proceeds to vectorize the network predictor and treat edge weights together as a long vector followed by developing a high dimensional regression model with this long vector of edge weights as predictors \cite{richiardi2011decoding}; \cite{craddock2009disease}; \cite{zhang2012pattern}. This approach can take advantage of the recent developments in high dimensional binary regression, consisting of both penalized optimization \cite{tibshirani1996regression} and Bayesian shrinkage \cite{park2008bayesian}; \cite{carvalho2010horseshoe}; \cite{armagan2013generalized} perspectives. However,  as mentioned in \cite{guha2020bayesian}, this treats the links of the network as exchangeable, ignoring the fact that coefficients involving common nodes can be expected to be correlated a priori. In a related work, \cite{vogelstein2013graph} propose to look for a minimal set of nodes which best explains the difference between two groups of networks. This requires solving a combinatorial problem. Again, \cite{durante2017} propose a high dimensional Bayesian tensor factorization model for a population of networks that allows to test for local edge differences between two groups of subjects. Both of these approaches tend to focus mainly on classification and are not designed to detect important nodes and edges impacting the response.

Our goal in this article is to develop a high-dimensional Bayesian network classifier that additionally infers on influential nodes and edges impacting classification. To achieve this goal, we formulate a high dimensional logistic network regression model with the binary response regressed on the network predictor corresponding to each subject. The network predictor coefficient is assigned a prior from the class of \emph{Bayesian network global-local shrinkage priors} discussed in this article. The proposed prior imparts low-rank and near sparse structures a priori on the network predictor coefficient. The low-rank structure of the coefficient is designed to address the transitivity effect on the network predictor coefficient and captures the effect of network edge coefficients on classification due to the interaction between nodes. On the other hand, the near sparse structure accounts for the residual effect due to edges.

One important contribution of this article is a careful study of the asymptotic properties of the proposed binary network classification (BNC) framework. In particular, we focus on consistency properties for the posterior distribution of the BNC framework using a specific network global-local shrinkage prior, namely the  \emph{ Bayesian Network Lasso prior}. Theory of posterior contraction for high dimensional regression models has gained
traction lately, though the literature is less developed in shrinkage priors compared to point-mass priors. For example, \cite{castillo2012needles} and \cite{belitser2015needles} have established posterior concentration and variable selection
properties for certain point-mass priors in the normal-means models. The latter article also establishes asymptotically nominal coverage of Bayesian credible sets. Results on posterior concentration and variable selection in high dimensional linear models are also established by \cite{castillo2015bernstein} and \cite{martin2017empirical} for certain point-mass priors. In contrast, literature on posterior contraction properties for high dimensional Bayesian shrinkage priors is relatively limited. To this end, \cite{armagan2013posterior} were the first to show posterior consistency in the ordinary linear regression model with shrinkage priors for low-dimensional settings under the assumption that the number of covariates \emph{does not} exceed the number of observations. Using direct calculations, \cite{van2014horseshoe} show that the posterior based on the ordinary horseshoe prior concentrates at the optimal rate for normal-mean problems. Recently, \cite{song2017nearly} considers a general class of continuous shrinkage priors and obtains posterior contraction rates in ordinary high dimensional linear regression models. In the same vein, \cite{wei2017contraction} offers analysis of posterior concentration for logistic regression models with shrinkage priors on coefficients. While \cite{wei2017contraction} are the first to delineate a theoretical approach for ordinary high dimensional binary classification models with shrinkage priors, the study of posterior contraction properties for more structured binary network classification problems in the Bayesian paradigm has not appeared in the literature. In fact, developing the theory for Bayesian network classification with the Bayesian Network Lasso prior proposed in this article is faced with two major challenges. First, the novel Bayesian Network Lasso prior imparts a more complex prior structure (incorporating a low-rank structure in the prior mean of edge coefficients, as described in \cite{guha2020bayesian} than that in \cite{wei2017contraction}, introducing additional theoretical challenges. Second, we aim at proving a challenging but practically desirable result of asymptotically optimal classification when the number of edges in the network predictor grows at a super-linear rate as a function of the sample size. Both of these present obstacles which we overcome in this work. The theoretical results provide insights on how the number of nodes in the network predictor, or the sparsity in the true network predictor coefficients should vary with sample size $n$ to achieve asymptotically optimal classification. We must mention that developing a similar theory for the Bayesian Network Horseshoe prior proposed in this article faces more challenges due to complex prior structure in parameters. We plan to tackle that problem as part of future work.

Section~\ref{sec3} develops the model and the prior distributions. Section~\ref{post_con} discusses theoretical developments justifying the asymptotically desirable prediction from the proposed model. Section~\ref{post_comp} details posterior computation. Results from various simulation experiments and a brain connectome data analysis have been presented in Sections~\ref{simulation} and \ref{connectome} respectively. Finally, Section~\ref{conclusion} concludes the article with a brief discussion of the proposed methodology.

\section{Model Formulation}\label{sec3}

In the context of network classification, we propose the high dimensional logistic regression model of the binary response $y_i\in\{0,1\}$ on the undirected network predictor $\bA_i$ as
\begin{align}\label{initial_model_bin}
y_i \sim Ber \left[\frac{\exp(\psi_i)}{1+\exp(\psi_i)}\right],\:\:\psi_i=\mu+\langle \bA_i,\bGamma\rangle_F,
\end{align}
where $\bGamma$ is a $V\times V$ symmetric network coefficient matrix whose $(k,l)$th element is given by $\gamma_{k,l}/2$, with $\gamma_{k,k}=0$, for all $k=1,...,V$.

Model (\ref{initial_model_bin}) can be expressed in the form of a generalized linear model. To be more specific, $\langle\bA_i, \bGamma\rangle_F=\sum\limits_{1 \leq k < l \leq V} a_{i,k,l} \gamma_{k,l}$, so that
$\psi_i = \mu+\sum\limits_{1 \leq k < l \leq V} a_{i,k,l} \gamma_{k,l} $ and
the probability mass function of $y_i$ can be written as
\begin{align}\label{bin132}
p(y_i) = \frac{\exp(\psi_i)^{y_i}}{1+\exp(\psi_i)}
\end{align}
Note that, if $\bx_i=(a_{i,1,2},...,a_{i,(V-1),V})'\in\mathbb{R}^{V(V-1)/2}$ is the collection of all upper triangular elements of $\bA_i$, and $\bgamma=(\gamma_{1,2},...,\gamma_{(V-1),V})'\in\mathbb{R}^{V(V-1)/2}$ is the vector of corresponding upper triangular elements of $2\bGamma$, then (\ref{initial_model_bin}) can be written as
\begin{align}\label{bin_new}
y_i\sim Ber\left(f_{\bgamma}(\bx_i)\right),\:\: f_{\bgamma}(\bx_i)=\frac{\exp(\mu+\bx_i'\bgamma)}{(1+\exp(\mu+\bx_i'\bgamma))}.
\end{align}
Although the binary network regression model is proposed for the logit link, it assumes natural extension for any other link function.
The next section describes a class of network global-local shrinkage priors on network coefficients.

\subsection{Bayesian network global-local shrinkage prior on the network predictor coefficient}\label{sec223}

In this article, we propose  the network global-local shrinkage prior given by,
\begin{align}\label{global-local_loc}
\gamma_{k,l}|s_{k,l},\sigma^2\sim N(\bu_k'\bLambda\bu_l,\sigma^2s_{k,l}^2),\:\:\sigma\sim H_1(\cdot),\:\:s_{k,l}\sim H_2(\cdot).
\end{align}
Note that this framework a priori centers $\gamma_{k,l}$ at a low-rank decomposition and controls the spread of the prior distribution of $\gamma_{k,l}$ using a global-local shrinkage prior. The formulation includes a wide variety of network shrinkage priors by choosing different functions $H_1(\cdot)$ and $H_2(\cdot)$. For example, \cite{guha2020bayesian}
 have investigated a particular class of such prior distributions, obtained by choosing $H_1(\sigma)=\delta_{1}(\sigma),$ where $\delta_1(\sigma)$ is the Dirac-delta function that is defined as $\delta_1(\sigma)=1$ if $\sigma=1$, and $0$ otherwise; and  $H_2(s_{k,l}^2)$ as an exponential density, referred to as the \emph{Network Lasso prior}. To show the generality of (\ref{global-local_loc}), we additionally investigate performance of (\ref{global-local_loc}) in binary regression with $s_{k,l}\sim C^{+}(0,1)$ and $\sigma\sim C^{+}(0,1)$. The resulting prior is referred to as the \emph{Network Horseshoe prior}. The rest of the hierarchy on $\lambda_r$'s, $\bu_k$'s follows as in \cite{guha2020bayesian}.

\section{Posterior Contraction of the Binary Network Classification Model}\label{post_con}
This section establishes convergence results for (\ref{initial_model_bin}) with $\gamma_{k,l}$'s following the Bayesian Network Lasso shrinkage prior. From the hierarchical specification given in (\ref{global-local_loc}), the Bayesian Network Lasso shrinkage prior is given by $\gamma_{k,l}|s_{k,l}\sim N(\bu_k'\bLambda\bu_l,s_{k,l}^2),\:s_{k,l}^2\sim Exp(\theta_n/2)$. For the theoretical study, a common practice is to fix $\theta_n$ as a function of $n$ \cite{armagan2013generalized}. Our theoretical investigations will also fix $\theta_n$ (the exact expression is given in Condition (\ref{theta_1111}) in the next subsection) with the fixed values specified later.

Here we consider an asymptotic setting in which the number of nodes in the network predictor, $V_n$, grows with the sample size $n$. This paradigm attempts to capture the fact that the number of elements in $\bA_i$,
given by $V_n^2$ can be substantially larger than sample size. Since model (\ref{initial_model_bin}) is equivalent to model (\ref{bin_new}), the size of the coefficient $\bgamma$ in (\ref{bin_new}) is also a function of $n$, given by $q_n=\frac{V_n(V_n-1)}{2}$. This creates theoretical challenges, related to (but distinct from) those faced in showing posterior consistency for high dimensional continuous \cite{armagan2013generalized} and binary regressions \cite{wei2017contraction}.

Let $\by_n=(y_1,...,y_n)'$. Using the superscript $(0)$ to indicate true parameters, the true data generating model is given by
\begin{align}\label{initial_model_bin_tr}
y_i \sim Bernoulli \left[\frac{\exp(\psi_{i}^{(0)})}{1+\exp(\psi_{i}^{(0)})}\right],\:\:\psi_{i}^{(0)}=\langle \bA_i,\bGamma^{(0)}\rangle_F.
\end{align}
where $\bGamma^{(0)}$ is the true network coefficient. Let $\bgamma^{(0)}$ be the vectorized upper triangular part of $\bGamma^{(0)}$. We assume, $\gamma_{k,l}^{(0)}=\bu_k^{(0)'}\bLambda\bu_l^{(0)}+\gamma_{2,k,l}^{(0)}$, where
$\bu_k^{(0)}$ is a $R_0$ dimensional vector, $k=1,...,V$. $\bgamma_2^{(0)}$ is the vector of all $\gamma_{2,k,l}^{(0)}$, $k<l$, and we denote the number of nonzero elements of $\bgamma_2^{(0)}$ by $s_{2,n}^0$, i.e. $||\bgamma_2^{(0)}||_0=s_{2,n}^0$.

For any $\epsilon>0$, define
$\mathcal{A}_n=\left\{\bgamma:\frac{1}{n}\sum\limits_{i=1}^n|f_{\bgamma}(\bx_i)-f_{\bgamma^{(0)}}(\bx_i)|\leq\epsilon\right\}$ as a neighborhood around the true density.
Further suppose $\pi_n(\cdot)$ and $\Pi_n(\cdot)$ are the prior and posterior densities of $\bgamma$ with $n$ observations, so that
\begin{align*}
\Pi_n(\mathcal{A}_n^c)=\frac{\int_{\mathcal{A}_n^c}p_{\bgamma}(\by_n)\pi_n(\bgamma)}
                          {\int p_{\bgamma}(\by_n)\pi_n(\bgamma)},
\end{align*}
where $p_{\bgamma}(\by_n)$ denotes the likelihood of the $n$dimensional response vector $\by_n$.

\subsection{Main Results}\label{theorem-binary-new}
To show the posterior contraction results, we follow \cite{wei2017contraction} and \cite{armagan2013generalized}, with substantial modifications required due to the nature of our proposed network lasso prior distribution. In proving the results, we make a couple of simplifications. It is assumed that the dimension $R$ of $\bu_k$ is fixed and is the same as $R_0$, the dimension of $\bu_k^{(0)}$. Consequently, \emph{effective dimensionality} is not required to be estimated, and hence $\bLambda=\bI$ is a non-random matrix. Additionally, we assume $\bM$ to be non-random and $\bM=\bI$. We emphasize that both these assumptions are \emph{not} essential for the posterior contraction rate result to be true, and are only introduced for simplifying calculations.

For two sequences $\{C_{1,n}\}_{n\geq 1}$ and $\{C_{2,n}\}_{n\geq 1}$, $C_{1,n}=o(C_{2,n})$ if $C_{1,n}/C_{2,n}\rightarrow 0$, as $n\rightarrow\infty$. To begin with, we state the following assumptions under which posterior contraction will be shown.
\begin{enumerate}[(A)]
\item $\sup\limits_{r=1,..,R;k=1,..,V_n}|u_{k,r}^{(0)}|<\infty$;
\item $V_n=o(\frac{n}{\log(n)})$;
\item $||\bA_i||_{\infty}$ is bounded for all $i=1,..,$, w.l.o.g assume $||\bA_i||_{\infty}\leq 1$.
\item $s_{2,n}^0\log(q_n)= o(n)$
\item $||\bgamma_{2}^{(0)}||_{\infty}<\infty$;
\item $\theta_n=\frac{C}{q_nn^{\rho/2}\log(n)}$ for some $C>0$ and some $\rho\in(1,2)$.n \label{theta_1111}
\end{enumerate}

\textbf{Remark:} Conditions (A), (C) and (E) are technical conditions ensuring that each of the entries in the true network coefficient and the network predictor are bounded. Condition (B) puts an upper bound on the growth of the number of network nodes with sample size to achieve asymptotically optimal classification. Similarly, (D) puts a restriction on the number of nonzero elements of $\bgamma_2^{(0)}$ with respect to $n$.

The following theorem shows contraction of the posterior asymptotically under mild sufficient conditions on $V_n,s_{2,n}^0$.
The proof of the theorem is provided in Appendix F.
\begin{theorem}\label{theorem:main}
Under assumptions (A)-(F) for the Bayesian Network Lasso prior on $\bgamma$, $\Pi_n(\mathcal{A}_n)\rightarrow 0$ in $P_{\bgamma^{(0)}}$ as $n\rightarrow\infty$, for any $\epsilon>0$.
\end{theorem}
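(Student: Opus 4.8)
We establish the desired posterior concentration, i.e.\ $\Pi_n(\mathcal{A}_n^c)\to0$ in $P_{\bgamma^{(0)}}$-probability. The plan is to run the now-standard ``prior mass plus testing'' argument (in the spirit of \cite{armagan2013generalized} and \cite{wei2017contraction}), adapted to the empirical pseudo-metric $d_n(\bgamma,\bgamma')=\frac1n\sum_{i=1}^n|f_{\bgamma}(\bx_i)-f_{\bgamma'}(\bx_i)|$ in which $\mathcal{A}_n$ is phrased, and with the prior-mass step rebuilt to handle the low-rank mean $\bu_k'\bu_l$ of the Network Lasso prior. Write $\Pi_n(\mathcal{A}_n^c)=N_n/D_n$ with
\begin{equation*}
N_n=\int_{\mathcal{A}_n^c}\frac{p_{\bgamma}(\by_n)}{p_{\bgamma^{(0)}}(\by_n)}\,\pi_n(\bgamma)\,d\bgamma,\qquad D_n=\int\frac{p_{\bgamma}(\by_n)}{p_{\bgamma^{(0)}}(\by_n)}\,\pi_n(\bgamma)\,d\bgamma .
\end{equation*}
Since $\epsilon>0$ is fixed, it suffices to produce constants $b_1<b_2$ so that, on a $P_{\bgamma^{(0)}}$-event of probability tending to one, $D_n\ge e^{-b_1 n}$ while $E_{\bgamma^{(0)}}N_n\le e^{-b_2 n}+o(1)$; Markov's inequality then finishes. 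Two elementary reductions are used throughout. By Condition (C), $\|\bx_i\|_\infty\le\|\bA_i\|_\infty\le1$, so the $\tfrac14$-Lipschitz property of the logistic link gives $|f_{\bgamma}(\bx_i)-f_{\bgamma'}(\bx_i)|\le\tfrac14|\bx_i'(\bgamma-\bgamma')|\le\tfrac14\|\bgamma-\bgamma'\|_1$; and for the logit parametrisation the per-observation Kullback--Leibler divergence, its variance, and the squared Hellinger distance are all bounded by a universal multiple of $(\psi_i(\bgamma)-\psi_i(\bgamma'))^2=(\bx_i'(\bgamma-\bgamma'))^2$ whenever $|\bx_i'(\bgamma-\bgamma')|$ is bounded. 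Thus all three relevant metrics can be controlled through $\frac1n\sum_i|\bx_i'(\bgamma-\bgamma^{(0)})|$ and $\frac1n\sum_i(\bx_i'(\bgamma-\bgamma^{(0)}))^2$.

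For the denominator I would lower bound $\pi_n(B_n)$ for a Kullback--Leibler-type set $B_n=\{\bgamma:\frac1n\sum_i|\bx_i'(\bgamma-\bgamma^{(0)})|\le\epsilon',\ \frac1n\sum_i(\bx_i'(\bgamma-\bgamma^{(0)}))^2\le{\epsilon'}^2\}$ for a small $\epsilon'\asymp\epsilon$, and then invoke the standard evidence lower bound (the version used in \cite{wei2017contraction}), which yields $D_n\ge e^{-2n{\epsilon'}^2}\pi_n(B_n)$ on an event of probability $\to1$. To bound $\pi_n(B_n)$ from below, decompose $\bgamma-\bgamma^{(0)}$ into (i) the change in the low-rank mean, $\bu_k'\bu_l-\bu_k^{(0)'}\bu_l^{(0)}$; (ii) the residual on the $s_{2,n}^0$ coordinates where $\gamma_{2,k,l}^{(0)}\ne0$; and (iii) the residual on the remaining $q_n-s_{2,n}^0$ coordinates where $\gamma_{2,k,l}^{(0)}=0$. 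I would restrict each of the $V_n$ latent vectors $\bu_k$ to a small ball around $\bu_k^{(0)}$ (a prior cost of order $\exp\{-cRV_n\log(V_n/\epsilon')\}$, admissible because $V_n=o(n/\log n)$ by (B), using the boundedness in (A)); ask the $s_{2,n}^0$ ``signal'' residuals to lie within a small window of $\gamma_{2,k,l}^{(0)}$ (a cost of order $\exp\{-cs_{2,n}^0(\ldots)\}$, admissible because $s_{2,n}^0\log q_n=o(n)$ by (D) and $\|\bgamma_2^{(0)}\|_\infty<\infty$ by (E)); and, for the $q_n-s_{2,n}^0$ remaining residuals, require only that their \emph{aggregate} contribution $\frac1n\sum_i|\sum_{(k,l)}a_{i,k,l}\gamma_{2,k,l}|$ to the linear predictors be at most $\epsilon'$, rather than pinning down each coordinate. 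Because $|a_{i,k,l}|\le1$, this aggregate is, in the prior, a sum of mean-zero Laplace variables whose common scale is governed by $\theta_n$; the calibration $\theta_n=C/(q_n n^{\rho/2}\log n)$ in (F) is exactly what makes this scale small enough that the aggregate is below $\epsilon'$ with prior probability bounded away from zero, while still permitting the signal step. Multiplying these pieces gives $\pi_n(B_n)\ge e^{-b_1 n}$, hence $D_n\ge e^{-b_1 n}$ with probability tending to one.

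For the numerator I would introduce a sieve $\mathcal{F}_n=\{\bgamma:\ \max_{k}\|\bu_k\|_\infty\le M_n,\ \#\{(k,l):|\gamma_{2,k,l}|>\delta_n\}\le\bar s_n,\ \|\bgamma_2\|_\infty\le L_n\}$ with $M_n,L_n$ slowly growing, $\delta_n$ small, and $\bar s_n\ge s_{2,n}^0$ chosen with $\bar s_n\log q_n=o(n)$, and establish two things. First, a prior-tail bound $\pi_n(\mathcal{F}_n^c)\le e^{-b_3 n}$ with $b_3>b_1$: the number of residual coordinates exceeding $\delta_n$ is, under the Laplace marginals, stochastically dominated by a Binomial with success probability of order $e^{-\delta_n\sqrt{\theta_n}}$, and the choice of $\theta_n$ together with (B), (D) makes the corresponding tail exponentially small in $n$; the $\|\bu_k\|_\infty$ and $\|\bgamma_2\|_\infty$ parts are routine Gaussian/Laplace tail bounds. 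Second, a metric-entropy bound: parametrising $\mathcal{F}_n$ by a support set for the large residuals (at most $\binom{q_n}{\bar s_n}$ choices, contributing $\bar s_n\log q_n$), a bounded box on that support, the $q_n-\bar s_n$ coordinates confined to $[-\delta_n,\delta_n]$, and the $RV_n$ latent coordinates in $[-M_n,M_n]$, and using $d_n\le\tfrac14\|\cdot\|_1$ and (on sufficiently fine cells) $d_{H,n}\lesssim\|\cdot\|_1$, one gets $\log N(\zeta,\mathcal{F}_n,d_{H,n})\lesssim \bar s_n\log q_n+RV_n\log(M_n q_n/\zeta)=o(n)$ for $\zeta\asymp\epsilon$, again by (B) and (D). Since $\mathcal{A}_n^c\subseteq\{d_{H,n}(\cdot,\bgamma^{(0)})>\epsilon/C\}$ for a universal $C$, the Le Cam--Birg\'e testing lemma produces tests $\phi_n$ with $E_{\bgamma^{(0)}}\phi_n\to0$ and $\sup_{\bgamma\in\mathcal{F}_n\cap\mathcal{A}_n^c}E_{\bgamma}(1-\phi_n)\le e^{-b_2 n}$; splitting $N_n\le(\text{total mass})\phi_n+\int_{\mathcal{F}_n^c}(\cdots)+(1-\phi_n)\int_{\mathcal{F}_n\cap\mathcal{A}_n^c}(\cdots)$ and taking expectations gives $E_{\bgamma^{(0)}}N_n\le E_{\bgamma^{(0)}}\phi_n+e^{-b_3 n}+e^{-b_2 n}$. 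Combining with the denominator bound, $\Pi_n(\mathcal{A}_n^c)\le e^{-(\min(b_2,b_3)-b_1)n}+o(1)\to0$ once the sieve is taken large enough that $\min(b_2,b_3)>b_1$; as $\epsilon>0$ was arbitrary, the claim follows.

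The main obstacle is the prior-mass lower bound of the second step in the super-linear regime $q_n\gg n$. Because the Network Lasso prior centres $\gamma_{k,l}$ at $\bu_k'\bu_l$ rather than at $0$, the set $B_n$ cannot be a plain $\ell_1$-ball around $\bgamma^{(0)}$ (whose prior mass would decay like $e^{-cq_n\log q_n}$ and destroy the argument); instead one must localise the $V_n$ latent vectors, match only the $s_{2,n}^0$ genuinely nonzero residuals, and absorb the remaining $q_n-s_{2,n}^0$ diffuse residuals through their averaged effect on the $\psi_i$'s. Making all three contributions simultaneously of size $e^{-o(n)}$ is precisely what forces Conditions (B), (D) and, most delicately, (F): $\theta_n$ must be small enough that the $q_n-s_{2,n}^0$ ``shrink-toward-mean'' coordinates do not perturb the linear predictors, yet large enough that the Laplace prior still places adequate mass near the $s_{2,n}^0$ true signals and that the sieve complement has exponentially small prior mass. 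Reconciling these competing demands, and verifying that the implied entropy and prior-tail bounds remain $o(n)$ when $q_n$ is super-linear in $n$, is the technical heart of the proof.
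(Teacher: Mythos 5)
You read the statement (correctly) as asserting $\Pi_n(\mathcal{A}_n^c)\rightarrow 0$, and your denominator/prior-mass step is essentially the paper's: the paper also reduces $\mathcal{D}_n$ to the prior mass of a shrinking ball around $\bgamma^{(0)}$ (via a Taylor expansion of the log-likelihood ratio, with the score controlled by Hoeffding on the event $\mathcal{E}_n$ and curvature bounded by $n/8$, rather than the KL-neighborhood evidence bound you invoke), and then splits exactly as you do: localize the $V_n$ latent vectors $\bu_k$ near $\bu_k^{(0)}$ (Lemma~\ref{lemma:priorc} plus Anderson's lemma), match the $s_{2,n}^0$ nonzero residual coordinates individually, and control the null coordinates only through their aggregate via Chebyshev, with (A)--(F) entering where you say they do. Where you diverge is the numerator: the paper builds no sieve and computes no entropy; it invokes an exponentially consistent test of $H_0:\bgamma=\bgamma^{(0)}$ against $\mathcal{A}_n^c$ directly from the non-iid testing theory cited in Lemma~\ref{lemma:numconc}, bounds $E_{\bgamma^{(0)}}[(1-\Phi_n)\mathcal{N}_n]$ by the uniform type II error over all of $\mathcal{A}_n^c$, and finishes with Markov and Borel--Cantelli.

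Your replacement of that step by a sieve-plus-entropy argument has a concrete gap in precisely the super-linear regime the theorem targets. For your entropy bound $\log N(\zeta,\mathcal{F}_n,d_{H,n})\lesssim \bar s_n\log q_n+RV_n\log(M_nq_n/\zeta)$ to hold with the sub-threshold coordinates left undiscretized, their worst-case aggregate contribution $q_n\delta_n$ to $\|\cdot\|_1$ must be $\lesssim\zeta$, forcing $\delta_n\lesssim\epsilon/q_n$ (discretizing them instead costs order $q_n$ in log-covering numbers, which is not $o(n)$ when $q_n\gg n$). But under (F) the marginal Laplace scale of each null residual coordinate is of order $\sqrt{\theta_n}\asymp(q_nn^{\rho/2}\log n)^{-1/2}$, which dwarfs $\epsilon/q_n$ whenever $q_n\gg n^{\rho/2}\log n$ --- a regime allowed by (B), e.g.\ $V_n\asymp n/\log^2 n$. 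Then $P(|\gamma_{2,j}|>\delta_n)=\exp\{-O(\delta_n/\sqrt{\theta_n})\}\rightarrow 1$ (note also your exponent should involve $\delta_n/\sqrt{\theta_n}$, not $\delta_n\sqrt{\theta_n}$), so with prior probability tending to one essentially all $q_n$ null coordinates exceed $\delta_n$, and $\pi_n(\mathcal{F}_n^c)$ is close to one rather than $e^{-b_3n}$; the binomial-domination bound cannot hold. Thus your sieve cannot simultaneously meet the prior-tail and entropy requirements; you would need a different sieve (for instance an $\ell_1$-ball constraint on $\bgamma_2$, whose prior $\ell_1$-mass under (F) is $o(n)$, with the entropy of that $\ell_1$-ball at scale $\epsilon$ being $o(n)$), or simply appeal, as the paper does, to ready-made exponentially consistent tests for the neighborhood $\mathcal{A}_n$, which sidestep entropy control of the full parameter space altogether.
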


\section{Posterior Computation}\label{post_comp}
We have implemented both the Bayesian Network Lasso and Network Horseshoe shrinkage priors on $\bgamma$.
Using the result in \cite{polson2013bayesian}, the data augmented representation of the distribution of $y_i$ given in (\ref{bin132}) follows as below
\begin{align}
p(y_i|\omega_i)=2^{-b}\exp ({k_i \psi_i})\exp (-{\omega_i \psi_i^2}/2),\:\:\omega_i\sim PG(1,0),
\end{align}
where $k_i = y_i - 1/2$. Let $\bx_i=(a_{i,1,2},a_{i,1,3},...,a_{i,1,V}, a_{i,2,3}, a_{i,2,4},..., a_{i,2,V}, ...., a_{i,V-1,V})'$ be of dimension
$q\times 1$, where $q=\frac{V(V-1)}{2}$. Assume $\bX=(\bx_1:\cdots:\bx_n)'$ is an $n\times q$ matrix. Then the conditional likelihood of $\by=(y_1,...,y_n)'$ given $\bomega=(\omega_1,...,\omega_n)'$ and $\bgamma$ is given by
\begin{align*}
p(\by \given \bX, \bgamma, \bomega) &\propto \prod_{i=1}^{n} p(y_i \given \bx_i, \bgamma, \omega_i,...) \\
&\propto \prod_{i=1}^{n} \exp \left\{ (y_i - 0.5) (\mu+\bx'_i \bgamma)  - \omega_i (\mu+\bx'_i \bgamma)^2/2  \right\}\\
&\propto \prod_{i=1}^{n} \exp \left\{ -\frac{\omega_i}{2} \left[ \frac{(y_i - 0.5)}{\omega_i} - (\mu+\bx'_i \bgamma) \right]^2  \right\}
\end{align*}
In matrix notation, the likelihood may be written as
\begin{align*}
p(\by \given \bX, \bgamma, \bomega...) \propto N(\bt \given \mu{\boldsymbol 1}+\bX \bgamma, \bOmega^{-1})
\end{align*}
where $\bt = ((y_1-0.5)/\omega_1,...,(y_n-0.5)/\omega_n)'=(k_1/\omega_1,...,k_n/\omega_n)'$ and $\bOmega=diag(\omega_1,...,\omega_n)$.
While the full posterior distributions for the parameters are not in closed forms, they mostly belong to the standard families. Hence drawing posterior samples using MCMC can be readily implemented.
Appendix D and Appendix E describe full conditional distributions of parameters for Bayesian Network Lasso and Network Horseshoe priors on $\bgamma$, respectively.

Let $\bOmega^{(1)},...,\bOmega^{(L)}$, $\bGamma^{(1)},...,\bGamma^{(L)}$ and $\mu^{(1)},...,\mu^{(L)}$ be the $L$ post burn-in MCMC samples for $\bOmega$, $\bGamma$ and $\mu$ respectively after suitable thinning. To classify a newly observed network $\bM_{*}$ as a member of one of the two groups, we compute $S^{(l)}=\frac{\exp(\mu^{(1)}+\langle \bM_*,\bGamma^{(l)}\rangle)}{1+\exp(\mu^{(1)}+\langle \bM_*,\bGamma^{(l)}\rangle)}$ for $l=1,...,L$. $\bM_*$ is classified as a member of group `low' or `high' if $\frac{1}{L}\sum_{l=1}^L S^{(l)}$ is less than or greater than $0.5$, respectively. To judge sensitivity to the choice of the cut-off, the simulation section presents Area under Curve (AUC) of ROC curves with True Positive Rates (TPR) and False Positive Rates (FPR) of classification corresponding to a range of cut-off values.

Node $k$ is recognized to be influential in the classification process if $\frac{1}{L}\sum_{l=1}^{L}\xi_k^{(l)}>0.5$, where $\xi_k^{(1)},...,\xi_{k}^{(L)}$ are the $L$ post burn-in MCMC samples of $\xi_k$. Again, one of the goals of the proposed framework is to identify influential network edges impacting the response. We employ the algorithm described in Appendix C to identify influential edges. The algorithm takes care of multiplicity correction by controlling the false discovery rate (FDR) at 5\% level. 
Finally, we present an estimate of $P(R_{eff}=r\given Data)$ computed by $\frac{1}{L}\sum_{l=1}^{L}I(\sum_{m=1}^{R}\lambda_m^{(l)}=r)$, where $I(A)$ for an event $A$ is 1 if the event $A$ happens and $0$ otherwise, and $\lambda_m^{(1)},...,\lambda_m^{(L)}$ are the $L$ post burn-in MCMC samples of $\lambda_m$.

\section{Simulation Studies}\label{simulation}
This section evaluates the inferential and classification ability of our proposed Bayesian network classification (BNC) framework, along with a number of competitors, using synthetic networks generated under various simulation settings. Our proposed network classification approach with the Bayesian Network Lasso prior and the Bayesian Network Horseshoe prior are referred to as the Bayesian Network Lasso classifier (BNLC) and Bayesian Network Horseshoe classifier (BNHC), respectively. In each simulation, we assess the ability of the BNLC and BNHC approaches to correctly identify influential nodes and edges, to accurately estimate predictive edge coefficients and to classify a network with precise characterization of uncertainties.
Classification performance of both methods are assessed using the area under the Receiving Operating Characteristics (ROC) curve (AUC).

To study all competitors under various data generation schemes, we simulate the response from (\ref{initial_model_bin}) given by
\begin{align}\label{data_generate}
y_i\sim Ber\left(\frac{\exp(\mu_0+\langle \bA_i,\bGamma_0\rangle_F)}{1+\exp(\mu_0+\langle \bA_i,\bGamma_0\rangle_F)}\right),\:\:
\end{align}
where $\bGamma_{0}$ is a symmetric matrix with zero diagonal entries. The intercept $\mu_0$ is fixed at 2 in all simulation scenarios. We consider two different schemes of generating the network $\bA_i$, referred to as
\emph{Simulation 1} and \emph{Simulation 2}, respectively.\\

\noindent \underline{\textbf{Simulation 1.}} In \emph{Simulation 1}, the network edges (i.e., the elements of the matrix $\bA_i$) are simulated
from $\rm{N}(0,1)$. Thus, \emph{Simulation 1} assumes that the network predictor follows an Erdos-Renyi graph.

\noindent \underline{\textbf{Simulation 2.}} In \emph{Simulation 2}, the network predictor $\bA_i$ corresponding to the $i$th sample is generated from a stochastic blockmodel. Here nodes in a simulated network are organized into communities so that nodes in the same community tend to have stronger connections than nodes belonging to different communities. This simulation scenario simulates networks which closely mimic brain connectome networks \cite{bullmore2009complex}. To simulate networks with such community structures, we assign each node a community label, $f_k\in\{1,2,...,3\}$, $k=1,...,V$. The node assignments are the same for all networks in the population. Given the community labels, the $(k,k')$th element of $\bA$ is simulated from $N(m_{f_k,f_{k'}},\sigma_0^2)$, where $m_{k,l}=0.5$ when $k=l$. When $k\neq l$, i.e., the concerned edges connect nodes belonging to different clusters, we sample a fixed number of edge locations randomly and simulate the values from $N(0,1)$, assigning the values at the remaining locations to be $0$. We set $\sigma_0^2=1$ and the three clusters with $8$, $9$ and $8$ nodes respectively, in the three communities. We note that the network predictors are simulated from a stochastic blockmodel in \emph{Simulation 2} which also ensures transitivity in the network predictor.

\noindent\underline{\emph{Simulating the network predictor coefficient $\bGamma_{0}$.}}  In both Simulations 1 and 2, the network predictor coefficient $\bGamma_{0}$ is constructed as the sum of two matrices $\bGamma_{0,1}$ and $\bGamma_{0,2}$. We provide the details of constructing the two matrices as below.

In both Simulations 1 and 2, we draw $V$ latent variables $\bu_{k,0}$, each of dimension $R_{g}$, from a mixture distribution given by
\begin{align}\label{wk}
\bu_{k,0} \sim \pi N_{R_{g}} (\bu_{m,g}, u_{s,g}^2) + (1 - \pi) \delta_{{\boldsymbol 0}}; \: k \in \{1,...,V\},
\end{align}
where $\delta_{{\boldsymbol 0}}$ is the Dirac-delta function and $\pi$ is the probability of any $\bu_{k,0}$ being nonzero. Define a symmetric matrix $\bGamma_{0,1}$ whose $(k,l)$th element is given by $\frac{\bu_{k,0}'\bu_{l,0}}{2}$, $k<l$ and $=0$ if $k=l$. Note that if $\bu_{k,0}$ is zero, then the $k$th node has no contribution to the mean function in (\ref{data_generate}), i.e., the $k$th node becomes non-influential in predicting the response. Since $(1-\pi)$ is the probability of a node being inactive, it is referred to as the \emph{node sparsity} parameter in the context of the data generation mechanism under \emph{Simulations 1} and \emph{2}. 
All elements of $\bu_{m,g}$ are taken to be $0.5$ and $u_{s,g}$ is taken to be $1$.

We also construct another symmetric sparse matrix $\bGamma_{0,2}$ to add additional edge effects corresponding to edges connecting a few randomly selected nodes. Let $\pi_2$ be the proportion of nonzero elements of $\bGamma_{0,2}$, set randomly at either $0.05$ or $0.1$. We randomly choose $\pi_2$ proportion of locations from the set of all $(k,l)$. The nonzero entries are drawn using one of the three following strategies:\\
\textbf{Strategy 1:} Nonzero entries are simulated from $\rm{N}(1,0.1)$.\\
\textbf{Strategy 2:} Nonzero entries are simulated from $\rm{N}(0.5,0.1)$.\\
\textbf{Strategy 3:} All nonzero entries are fixed at $0.5$.\\
The quantity $(1-\pi_2)$ is referred to as the \emph{residual edge sparsity}. 

Note that the specification of true edge coefficients largely preserves the transitivity property in $\bGamma_{0}$. To see this, note that $\bGamma_{0,2}$ is highly sparse, so that $\gamma_{0,1,k,l}=\gamma_{0,k,l}$ for most pairs $(k,l)$, $k<l$. For those pairs, $\gamma_{0,k,l}\neq 0$ and $\gamma_{0,l,l'}\neq 0$ imply that $\bu_{k,0}\neq\bzero$, $\bu_{l,0}\neq \bzero$ and $\bu_{l',0}\neq \bzero$. Thus it follows that $\gamma_{0,k,l'}=\frac{\bu_{k,0}'\bu_{l',0}}{2}\neq 0.$

For a comprehensive picture of \emph{Simulation 1} and \emph{Simulation 2}, we consider $4$ different cases each in both simulations as summarized in Table~\ref{Tab1} and \ref{Tab2} respectively. In each of these cases, the network predictor coefficient and the response are generated by changing the node sparsity $(1-\pi)$, the residual edge sparsity $(1-\pi_2)$ and the true dimension $R_{g}$ of the latent variables $\bu_{k,0}$'s.
The table also presents the maximum fitted dimension $R$ of the latent variables $\bu_k$ for the logistic regression model (\ref{bin132}). Note that the various cases also allow model mis-specification with unequal choices of $R$ and $R_{g}$.

\begin{table}[!th]
\begin{center}

\begin{tabular}
[c]{cccccccc}
\hline
Cases & $R_{g}$ & $R$ & Node & Residual Edge & Strategy \\
 &  &  & Sparsity ($1-\pi$) & Sparsity ($1-\pi_2$) & \\
\hline
Case - 1 & 2 & 2 & 0.5 & 0.95 & Strategy 1\\
Case - 2 & 3 & 5 & 0.6 & 0.95 & Strategy 1\\
Case - 3 & 2 & 5 & 0.5 & 0.90 & Strategy 2\\
Case - 4 & 2 & 5 & 0.4 & 0.90 & Strategy 3\\
\hline
\end{tabular}
\caption{Table presents different cases for \textbf{\emph{Simulation 1}}. The true dimension $R_{g}$ is the dimension of vector object $\bu_{k,0}$ using which data has been generated. The maximum dimension $R$ is the dimension of vector object $\bu_k$ using which the model has been fitted. Node sparsity and residual edge sparsity are described in the text.}\label{Tab1}
\end{center}
\end{table}

\begin{table}[!th]
\begin{center}

\begin{tabular}
[c]{cccccccc}
\hline
Cases & $R_{g}$ & $R$ & Node & Residual Edge & Strategy \\
           &               &         & Sparsity ($1-\pi$) & Sparsity ($1-\pi_2$) & \\
\hline
Case - 1 & 2 & 2 & 0.5 & 0.95 & Strategy 1\\
Case - 2 & 2 & 4 & 0.5 & 0.95 & Strategy 1\\
Case - 3 & 2 & 3 & 0.7 & 0.95 & Strategy 1\\
Case - 4 & 2 & 5 & 0.4 & 0.90 & Strategy 3\\
\hline
\end{tabular}
\caption{Table presents different cases for \textbf{\emph{Simulation 2}}. The true dimension $R_{g}$ is the dimension of vector object $\bu_{k,0}$ using which data has been generated. The maximum dimension $R$ is the dimension of vector object $\bu_k$ using which the model has been fitted. Node sparsity and residual edge sparsity are described in the text.}\label{Tab2}
\end{center}
\end{table}

\noindent As competitors, we use generic variable selection and shrinkage methods that treat edges between nodes together as a long predictor vector to run high dimensional regression, thereby ignoring the relational nature of the predictor. More specifically, we use Lasso \cite{tibshirani1996regression}, which is a popular penalized optimization scheme, and the Bayesian Lasso (BLasso for short)\cite{park2008bayesian} and Bayesian Horseshoe (BHS for short) priors \cite{carvalho2010horseshoe}, which are popular Bayesian shrinkage regression methods, all three under the logistic regression framework. We use the \texttt{glmnet} package in \texttt{R} \cite{friedman2010regularization} to implement the frequentist Lasso, while we write our own codes for BLasso and BHS. A comparison with these methods will indicate any relative advantage of exploiting the structure of the network predictor.
Additionally, we compare our methods to a frequentist approach that develops network classification in the presence of a network predictor and a binary response \cite{relion2017network}. We refer to this approach as \emph{Reli\'on.}


All Bayesian competitors are allowed to draw $50,000$ MCMC samples, out of which the first $30,000$ are discarded as burn-ins. Convergence is assessed by comparing different simulated sequences of representative parameters starting at different initial values \cite{gelman2014understanding}. All posterior inference is carried out based on the rest $20,000$ MCMC samples after suitably thinning the post burn-in chain. We monitor the auto-correlation plots and effective sample sizes of the iterates, and they are found to be satisfactorily uncorrelated. In all of our simulations, we set $V=25$ nodes and $n=250$ samples.

We present analysis for $\nu=20$, $a_{\Delta}=b_{\Delta}=1$. For BNLC, there are two additional hyper-parameters $\iota$ and $\zeta$, both of which are set to 1. Note that the choice of  $a_{\Delta}=b_{\Delta}=1$ ensures that the prior on models is such that we have a uniform distribution on the number of active nodes, and conditional on the size of the model, a uniform distribution on all possible models of that size.
 The choice of $\nu=20$ ensures that the prior distribution of $\bM$ is concentrated around a scaled identity matrix. Since model is invariant to rotations of the latent positions, so we want the prior on $\bu_k$'s to also be invariant under rotation.  That requires that we center $\bM$ around a matrix that is proportional to the identity. Our choice of $\iota$ and $\zeta$ set the prior mean of $s_{k,l}$ at $0.5$ which is the suggested prior mean for the local parameters proposed in \cite{park2008bayesian}.
Sensitivity to the choice of hyper-parameters is discussed later, both for simulation studies and for the real data analysis.

\subsection{Identification of Influential Nodes}
Figures \ref{node_select_B_sim1} and \ref{node_select_B_sim2} show the posterior probability of the $k$-th node being detected as
influential, i.e., $P(\xi_k=1|Data)$, by BNLC and BNHC for each node and each case within \emph{Simulations 1} and \emph{2}, respectively. Some interesting observations
emerge from the results. 
We find that both methods work well with lower node sparsity and higher residual edge sparsity.  Decreasing the residual edge sparsity and increasing the node sparsity have adverse effects on the performance. 
In general, BNLC shows relatively better performance than BNHC in cases with higher node sparsity and/or lower residual edge sparsity. We provide a brief discussion below to support these observations.

For BNHC, case 2 exhibits a few false positives, and the separation of posterior probabilities for truly active and truly inactive nodes is much more stark in case 1 than in case 2. BNLC does a better job of node identification than BNHC in case 2.
Residual edge effect does have an impact on the probabilities, which is evident by comparing cases 1 and 3. For BNHC, case 3 (Simulation 1) displays poor performance with a higher number of both false positives and false negatives. Performance of BNLC appears to be better than BNHC in case 3. Fixing the residual edge sparsity and increasing the node sparsity has a negative impact on node identification, as seen by comparing performances in cases 3 and 4 (Simulation 1). For Simulation 2, both competitors perform quite well in cases 1 a nd 2. Again, case 3 (Simulation 2) represents a higher node sparsity, so that both BNHC and BNLC do not perform well in this case. Similar to Simulation 1, BNHC shows inferior performance to BNLC in case 3. While BNHC offers a few false positives and false negatives in case 4 (Simulation 2), the performance appears to be much better than in case 3. Notice that case 3 has both higher node sparsity and residual edge sparsity than case 4. While they have opposing effects, it appears that higher node sparsity demonstrates more of an adverse effect here compared to a small perturbation in the residual edge sparsity.  Recall that \cite{relion2017network} is the only other competitor which is designed to detect influential nodes. It detects all nodes to be influential in all simulation cases.


\begin{figure}[!ht]
   \begin{center}
   \subfigure[BNLC]{\includegraphics[width=7 cm, height = 7 cm]{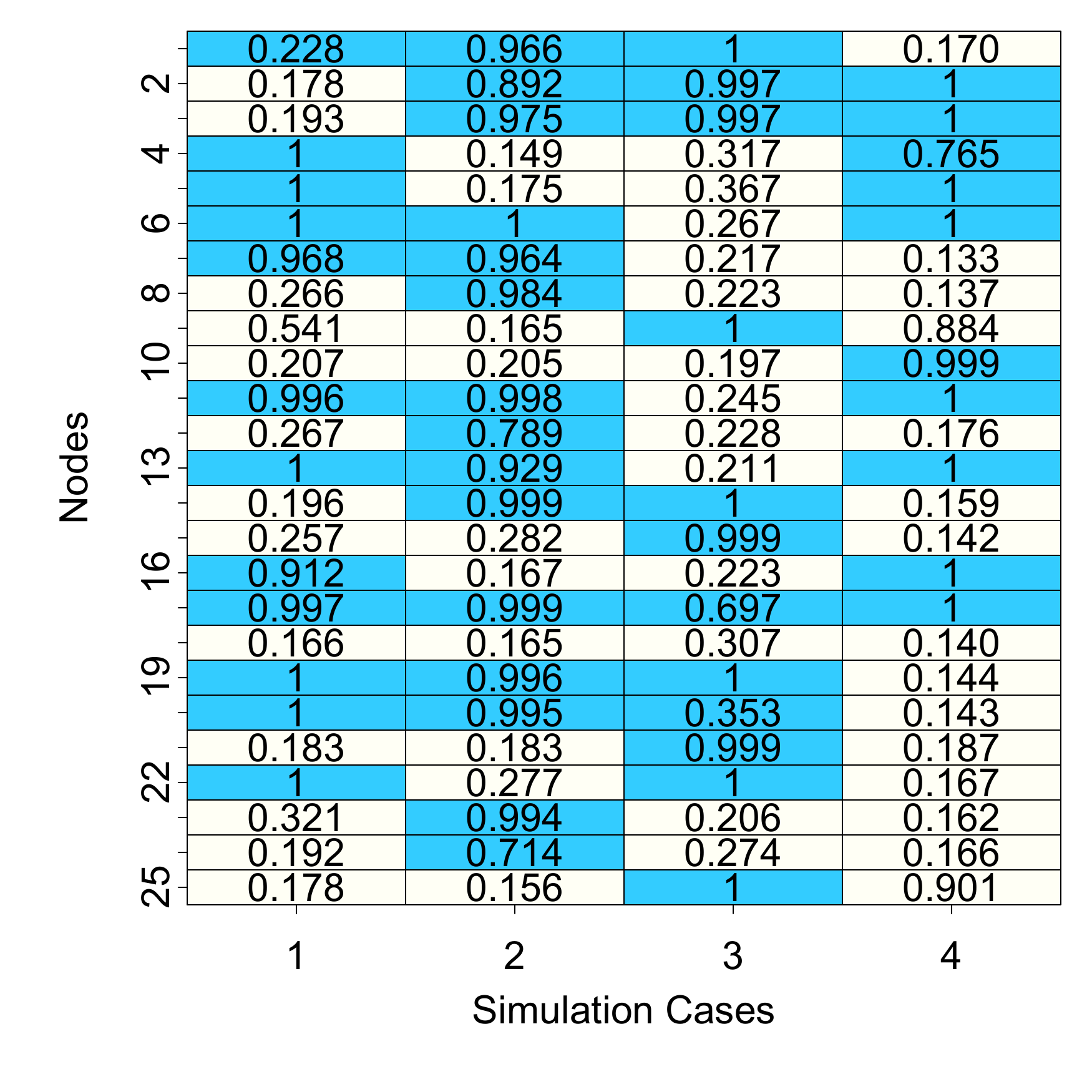}\label{sim_node_bnlc}}
   \subfigure[BNHC]{\includegraphics[width=7 cm, height = 7 cm]{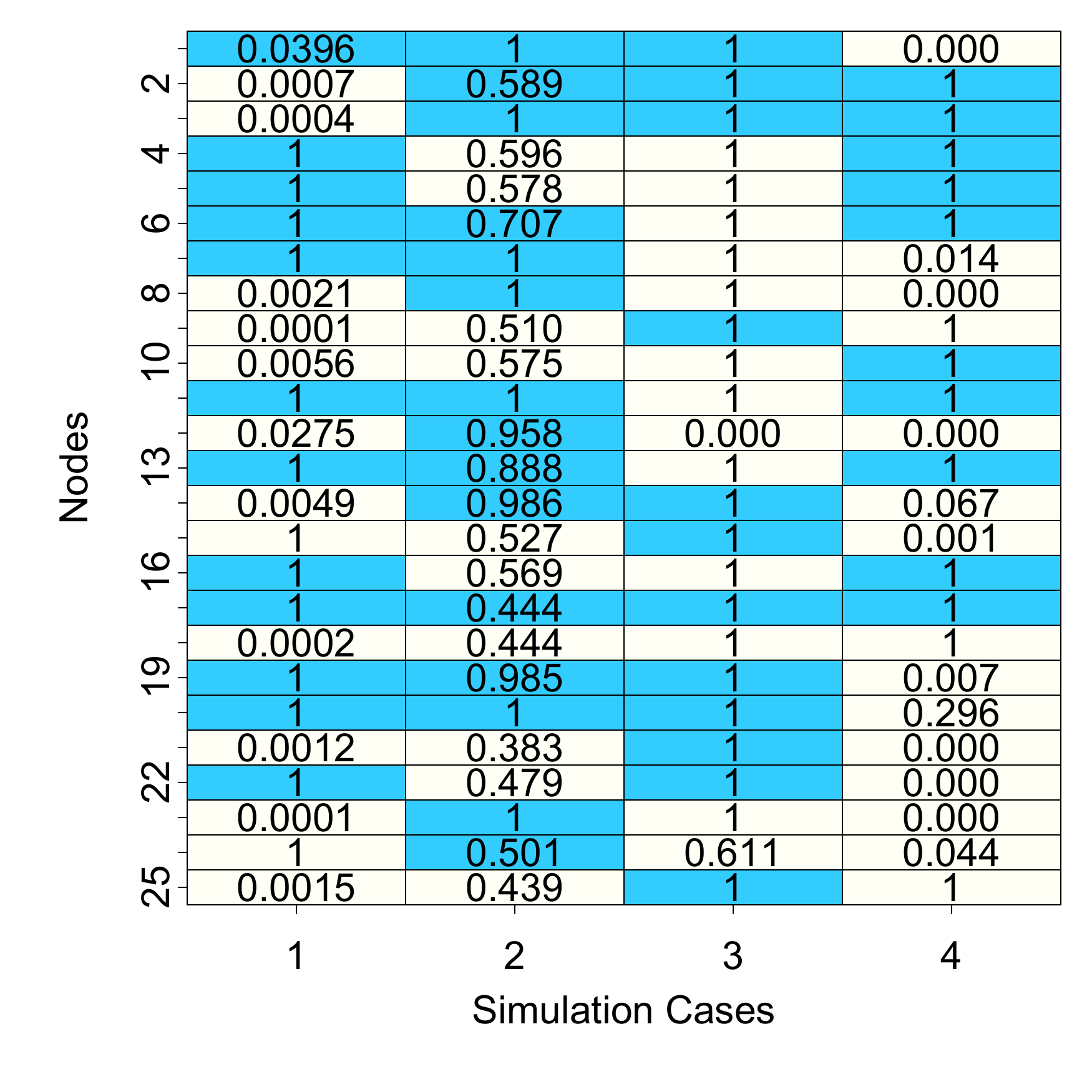}\label{sim_node_hs}}
    \end{center}
\caption{Simulation 1: clear background denotes \emph{uninfluential} and dark background denotes \emph{influential} nodes in the truth for BNLC and BNHC models. Note that there are $25$ rows (corresponding to $25$ nodes) and $4$ columns corresponding to $4$ different cases in Simulation 1. The model-detected posterior probability of being influential has been super-imposed onto the corresponding node.}\label{node_select_B_sim1}
\end{figure}

\begin{figure}[!ht]
   \begin{center}
   \subfigure[BNLC]{\includegraphics[width=7 cm, height = 7 cm]{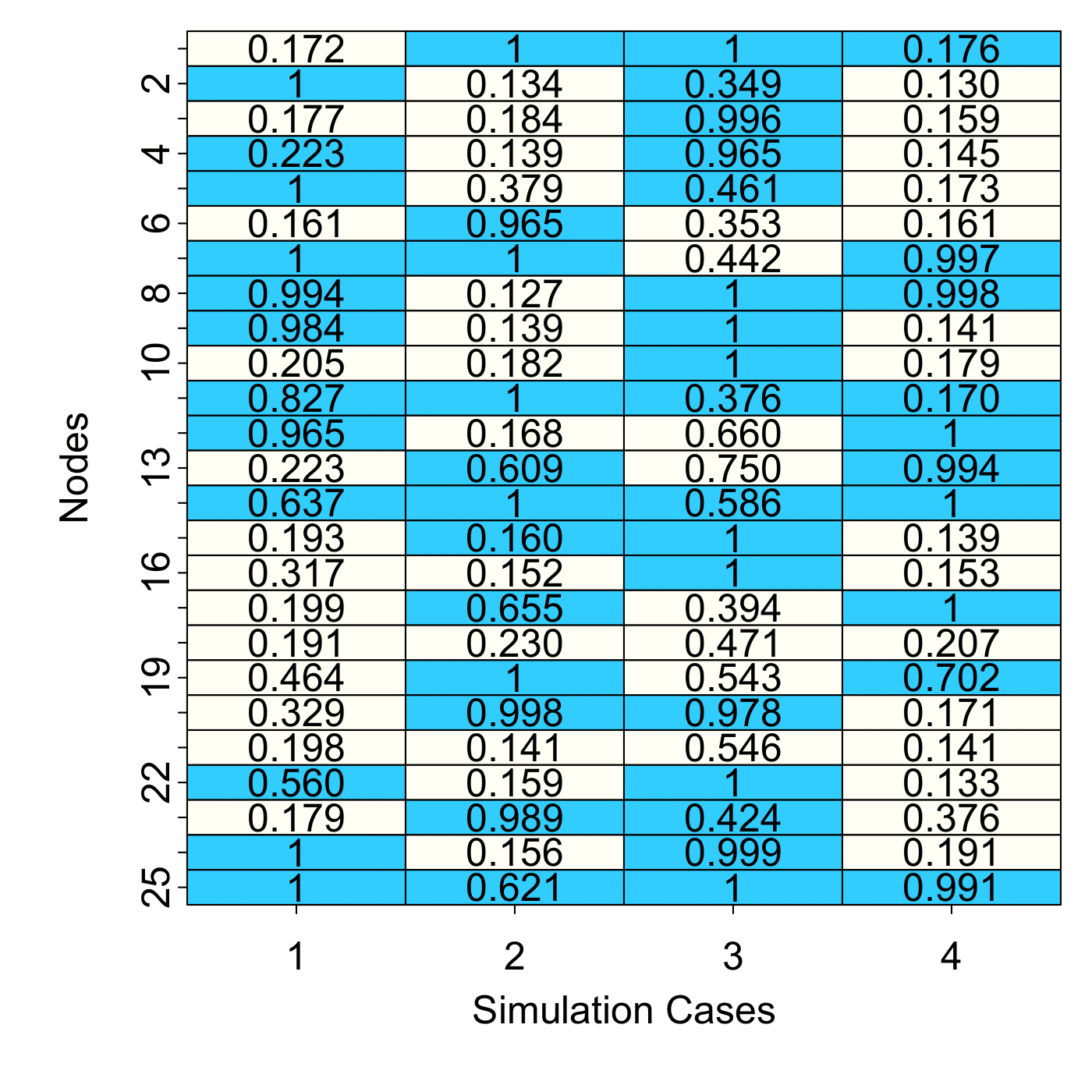}\label{sim2_node_bnlc}}
   \subfigure[BNHC]{\includegraphics[width=7 cm, height = 7 cm]{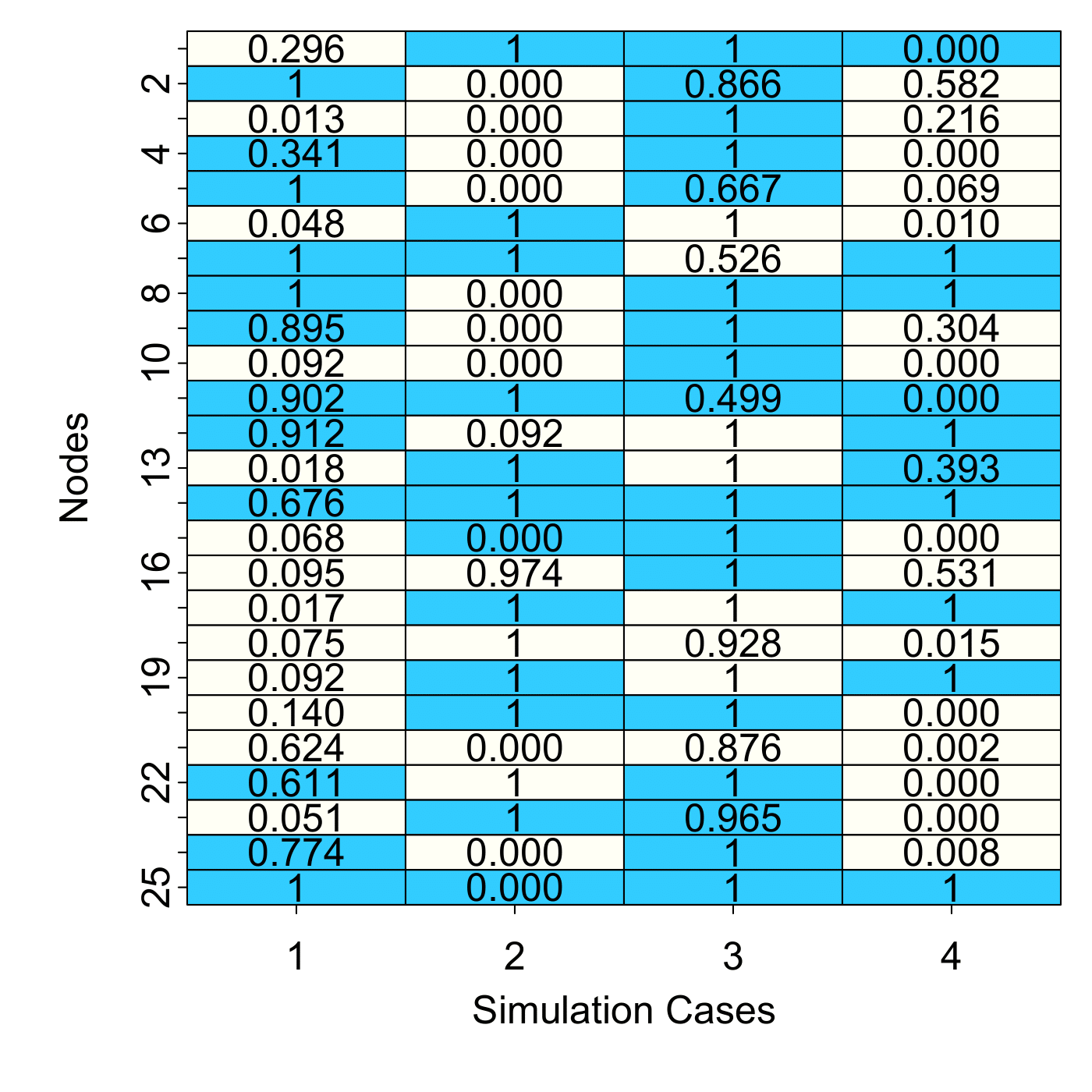}\label{sim2_node_hs}}
    \end{center}
\caption{Simulation 2: clear background denotes \emph{uninfluential} and dark background denotes \emph{influential} nodes in the truth for BNLC and BNHC models. Note that there are $25$ rows (corresponding to $25$ nodes) and $4$ columns corresponding to $4$ different cases in Simulation 2. The model-detected posterior probability of being influential has been super-imposed onto the corresponding node.}\label{node_select_B_sim2}
\end{figure}

\subsection{Identification of Influential Edges}
We apply the algorithm with a mixture of skewed t-distributions described in Appendix C to detect influential edges from the post burn-in MCMC samples of the edge coefficients using a threshold of $t=0.05$. The proposed approach controls FDR below a threshold of $0.05$ to account for multiplicity correction. Tables \ref{Tab_TPR_Edges_sim3_proj2} and \ref{Tab_TPR_Edges_sim3} provide the true positive rates (TPR) and false positive rates (FPR) in detecting important edges for Simulations 1 and 2 for the competitors, respectively. It is observed that when node sparsity is moderate and residual edge sparsity is high (cases 1 and 2), both BNLC and BNHC offer moderate performance in terms of identifying true positives, and include very few false positives. In these cases, BNHC generally exhibits a little higher FPR than BNLC. 
In the case of high node sparsity (e.g., case 3, Simulation 2) both these methods unfortunately show much lower true positive rates. Again, lower edge sparsity (case 3, Simulation 1) has almost no effect on FPR of BNLC, but decreases TPR substantially. For BNHC, both TPR and FPR increase when residual edge sparsity is reduced. Nevertheless, both of them perform significantly better than Lasso in almost all cases. The competitor in \cite{relion2017network} appears to have suboptimal performance, as it identifies all edges as important in all the simulation scenarios, resulting in high FPRs.
\begin{table}[!th]
\begin{center}
\begin{tabular}
[c]{c|cc|cc|cc|cc}
\hline
\multicolumn{1}{c|}{} & \multicolumn{2}{c|}{BNLC} & \multicolumn{2}{c|}{BNHC}  & \multicolumn{2}{c|}{Lasso}  & \multicolumn{2}{c}{Reli\'{o}n (2017)} \\
\hline
Cases & TPR & FPR & TPR & FPR & TPR & FPR & TPR &  FPR\\
\hline
Case - 1 & 0.65 & 0.01 & 0.72 & 0.12 & 0.50 & 0.22 & 1 & 1\\
Case - 2 & 0.64 & 0.00 & 0.63 & 0.02 & 0.40 & 0.14 & 1 & 1\\
Case - 3 & 0.45 & 0.00 & 0.86 & 0.40 & 0.42 & 0.22 & 1 & 1\\
Case - 4 & 0.72 & 0.09 & 0.70 & 0.12  & 0.54 & 0.16 & 1 & 1\\
\hline
\end{tabular}
\caption{True Positive Rates (TPR) and False Positive Rates (FPR) for edges for cases in \emph{Simulation 1}.}\label{Tab_TPR_Edges_sim3_proj2}
\end{center}
\end{table}

\begin{table}[!th]
\begin{center}

\begin{tabular}
[c]{c|cc|cc|cc|cc}
\hline
\multicolumn{1}{c|}{} & \multicolumn{2}{c|}{ BNLC} & \multicolumn{2}{c|}{ BNHC}  & \multicolumn{2}{c|}{Lasso}  & \multicolumn{2}{c}{Reli\'{o}n(2017)} \\
\hline
Cases & TPR & FPR & TPR & FPR & TPR & FPR & TPR &  FPR\\
\hline
Case - 1 & 0.63 & 0.00 & 0.84 & 0.08 & 0.44 & 0.20 & 1 & 1\\
Case - 2 & 0.56 & 0.00 & 0.63 & 0.12  & 0.53 & 0.22 & 1 & 1\\
Case - 3 & 0.46 & 0.02 & 0.59 & 0.08 &  0.31 & 0.16 & 1 & 1\\
Case - 4 & 0.68 & 0.03 & 0.75 & 0.06 & 0.34 & 0.12 & 1 & 1\\
\hline
\end{tabular}
\caption{True Positive Rates (TPR) and False Positive Rates (FPR) for edges for cases in \emph{Simulation 2}.}\label{Tab_TPR_Edges_sim3}
\end{center}
\end{table}

The results in Tables~\ref{Tab_TPR_Edges_sim3_proj2} and \ref{Tab_TPR_Edges_sim3} indicate higher number of edges identified as influential by BNHC than BNLC in all simulations. Digging a bit deeper, we report 
the ratio of the number of edges in the intersection of both methods to the number of  total edges identified by each method independently in Table~\ref{Tab_edge_incisive}.  In all simulation cases, almost all edges identified as influential by BNLC are also identified as influential by BNHC. In cases 2 and 4 (Simulation 1), the fractions corresponding to BNLC and BNHC are very similar, indicating similar edge identification by both of them. However, this fraction appears to be lower in BNHC for cases 1 and 3 (Simulation 1). This again shows that the edges identified by BNLC are also identified by BNHC, with BNHC identifying more edges. The discrepancy turns out to be more in case 3 (Simulation 1) where BNHC has identified many more edges. Simulation 2 shows a similar trend. We further track the top 10, 20 and 30 edges identified from BNLC and record how many of these edges belong to the top 10, 20 and 30 edges identified from BNHC. Table~\ref{Tab_edge_incisive} shows a high level of intersection among the top edges identified by these two methods.

A number of interesting observations emerge from the analysis. First of all, as mentioned earlier, the edges identified by BNLC are generally also identified by BNHC. BNHC tends to identify more edges, leading to higher TPR and FPR. Broadly, in presence of higher node sparsity, the discrepancy is greater, with BNHC having much higher TPR and FPR. Interestingly, the absolute values of the edge coefficients follow very similar rankings for BNHC and BNLC, which leads to high intersections among the top edges selected by these methods. Perhaps the difference in shrinkage mechanism imposed by BNHC and BNLC is responsible for their difference in tail behavior, leading to differences in edge selection. 

\begin{table}[!th]
\begin{center}

\begin{tabular}
[c]{c|ccccc|ccccc}
\hline
\multicolumn{1}{c|}{} & \multicolumn{5}{c|}{Simulation 1} & \multicolumn{5}{c}{Simulation 2} \\
\hline
Cases & $\frac{N_{BL,BH}}{N_{BL}}$ & $\frac{N_{BL,BH}}{N_{BH}}$ & \multicolumn{3}{c|}{Top} & $\frac{N_{BL,BH}}{N_{BL}}$ & $\frac{N_{BL,BH}}{N_{BH}}$ &  \multicolumn{3}{c}{Top}\\
           &                                               &                                               &    10    &    20     &   30       &                                              &                                               &    10     &     20    &  30  \\
\hline
1 & 0.94 & 0.61 & 9 & 19 & 27 & 1.00 & 0.58 & 7 & 17 & 26\\
2 & 0.85 & 0.83 & 8 & 14 & 21 & 1.00 & 0.46 & 8 & 17 & 26\\
3 & 1.00 & 0.25 & 9 & 13 & 24 & 0.97 & 0.70 & 9 & 18 & 28\\
4 & 0.91 & 0.87 & 8 & 18 & 27 & 0.91 & 0.75 & 8 & 17 & 27\\
\hline
\end{tabular}
\caption{$N_{BL,BH}$ represents the number of edges identified by both BNLC and BNHC. Similarly, $N_{BL}$ and $N_{BH}$ represent the number of edges identified by BNLC and BNHC, respectively. Top 10 represents the number of edges common among the top ten edges identified by BNLC and BNHC. Top 20 and Top30 are defined analogously.}\label{Tab_edge_incisive}
\end{center}
\end{table}

\subsection{Estimation of Edge Coefficients and Classification Accuracy}
The mean squared errors (MSE) associated with the point estimation of edge coefficients for different competitors are presented in Tables~\ref{Tab_sim1_B} and \ref{Tab_sim2_B}, corresponding to Simulations 1 and 2, respectively. For the Bayesian competitors, point estimates are computed using the posterior means of the edge coefficients. In all cases, BNLC and BNHC consistently outperform all other competitors, with the binary Bayesian Lasso exhibiting the next best performance. In all simulation cases, BNLC comprehensively outperforms BNHC in terms of estimating edge coefficients. Consistent with earlier observations, both competitors tend to be less accurate when node sparsity increases. Figure~\ref{AUC_sim12_BB} records AUC for all competitors in Simulations 1 and 2. 
In almost all cases, AUC for BNHC and BNLC turn out to be higher than other competitors. On the other hand, \cite{relion2017network} appears to have close to random classification of samples with AUC around $0.5$.

\begin{figure}[!ht]
  \begin{center}
  \subfigure[Simulation 1]{\includegraphics[width=7cm,height=7cm]{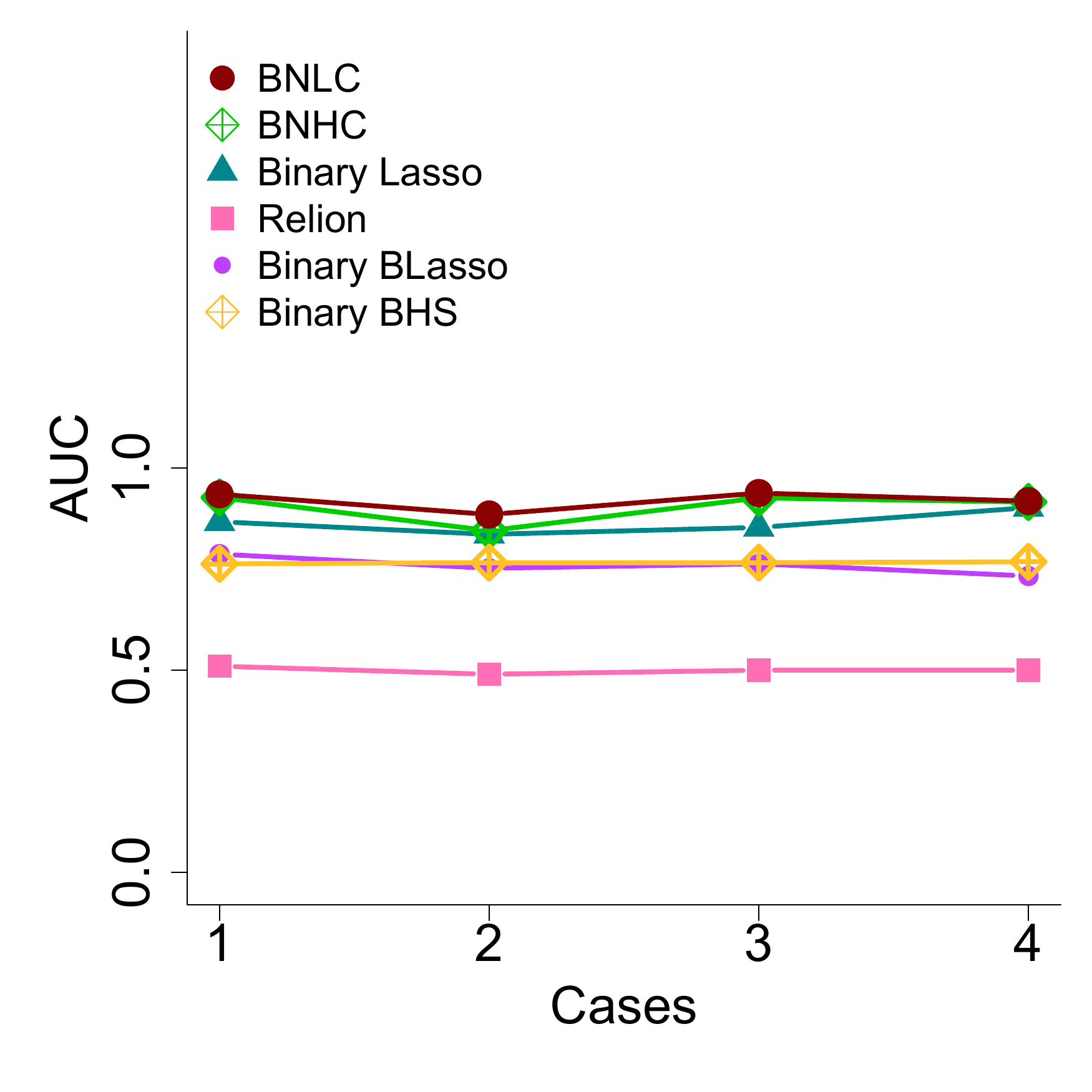}\label{sim1-auc}}
  \subfigure[Simulation 2]{\includegraphics[width=7cm,height=7cm]{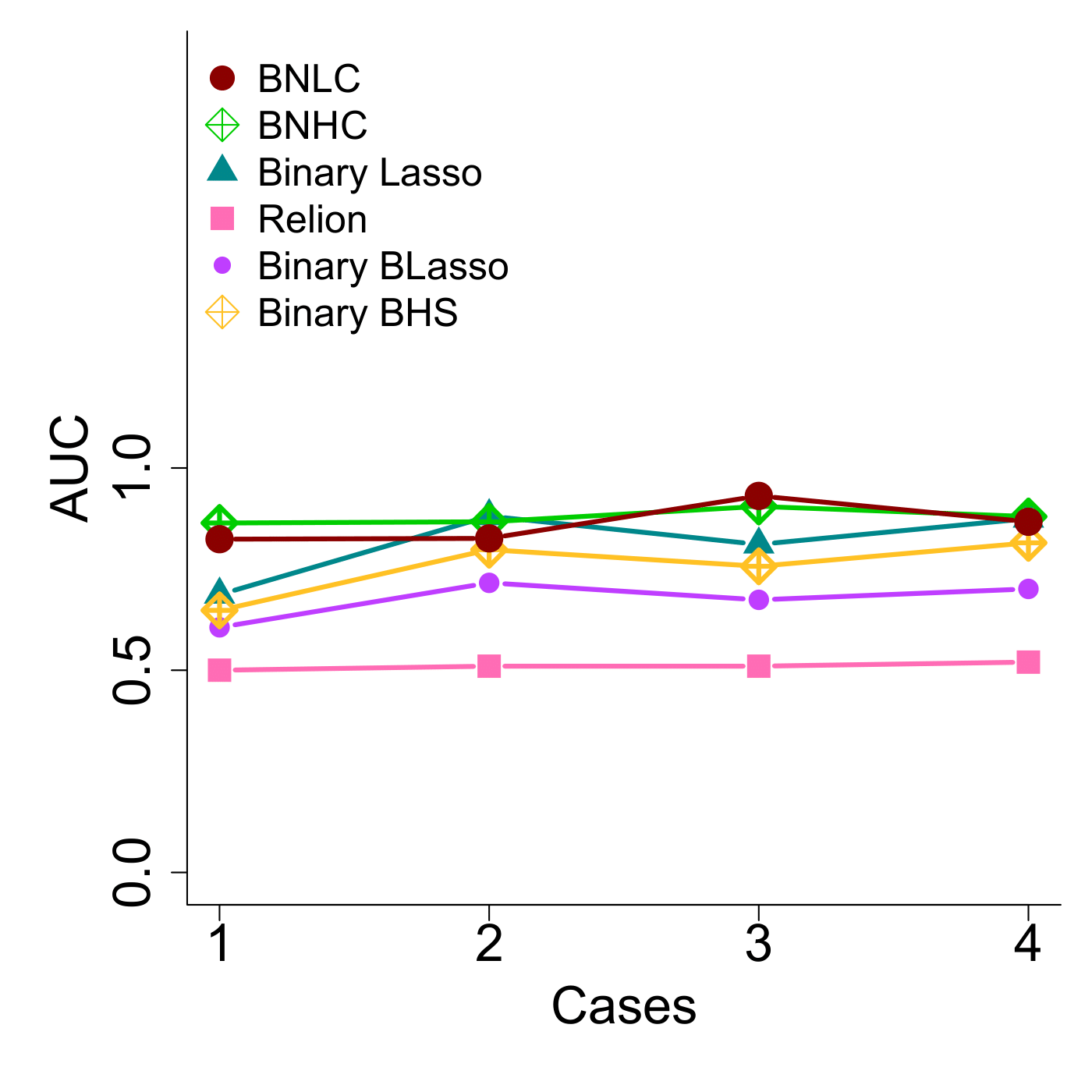}\label{sim2-auc}}
 \end{center}
 \caption{Figure shows classification performance in the form of Area under Curve (AUC) of ROC for all cases in Simulations 1 and 2.} \label{AUC_sim12_BB}
\end{figure}

\begin{figure}
  \begin{center}
    \subfigure[Case 1, BNLC]{\includegraphics[width=4.4 cm,height=4.4cm]{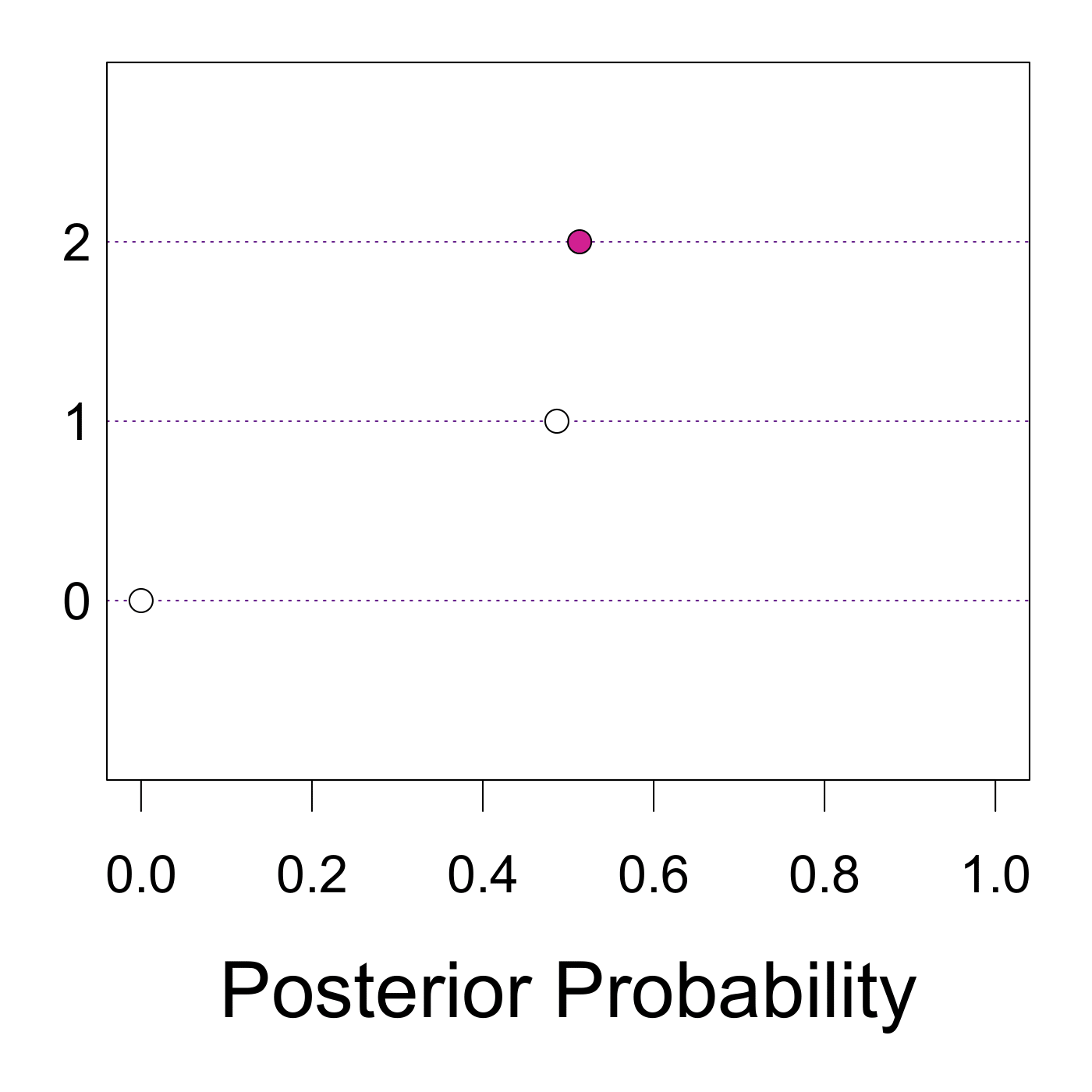}\label{sim1-dc1}}
   \subfigure[Case 2, BNLC]{\includegraphics[width=4.4 cm,height=4.4cm]{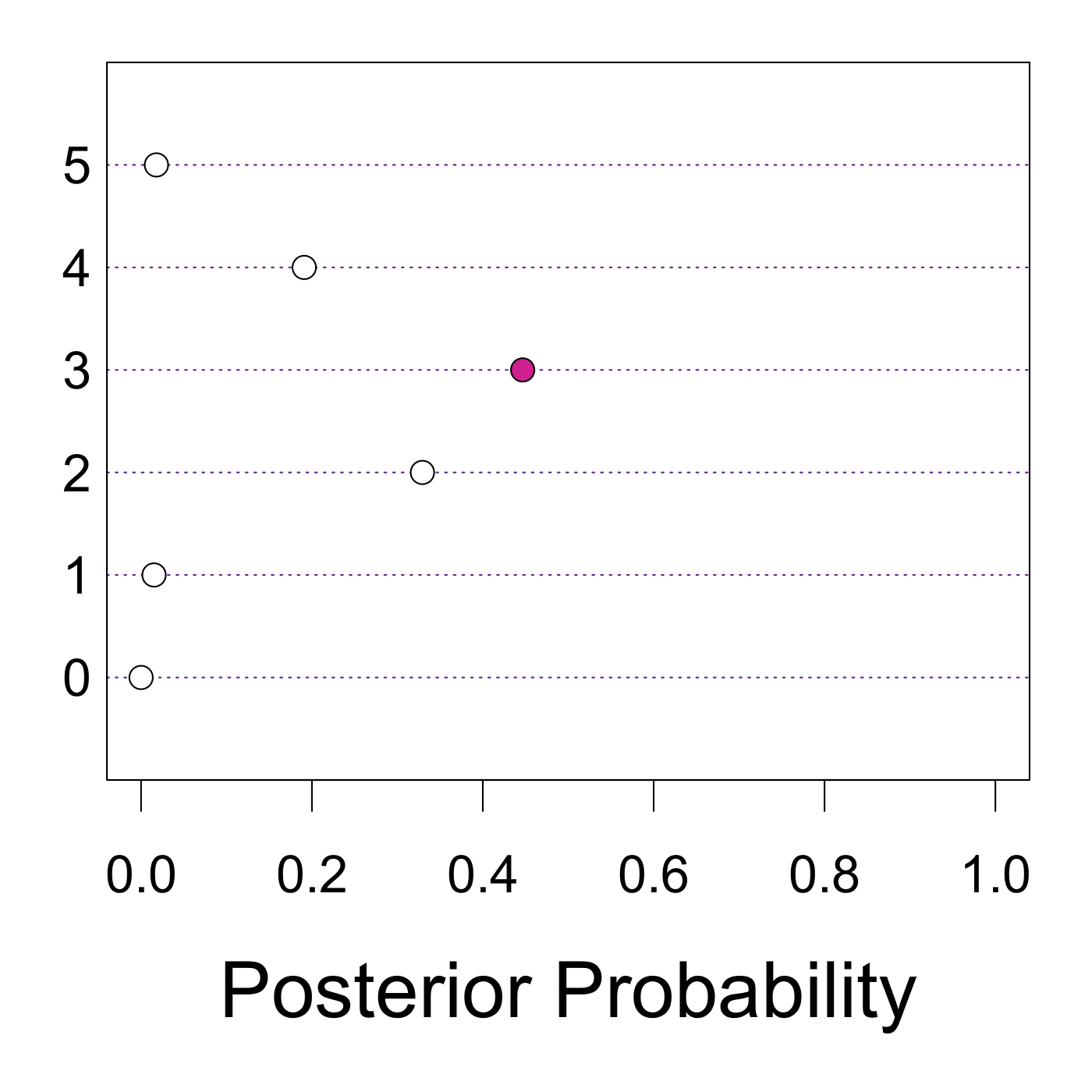}\label{sim1-dc5}}
   \subfigure[Case 3, BNLC]{\includegraphics[width=4.4 cm,height=4.4cm]{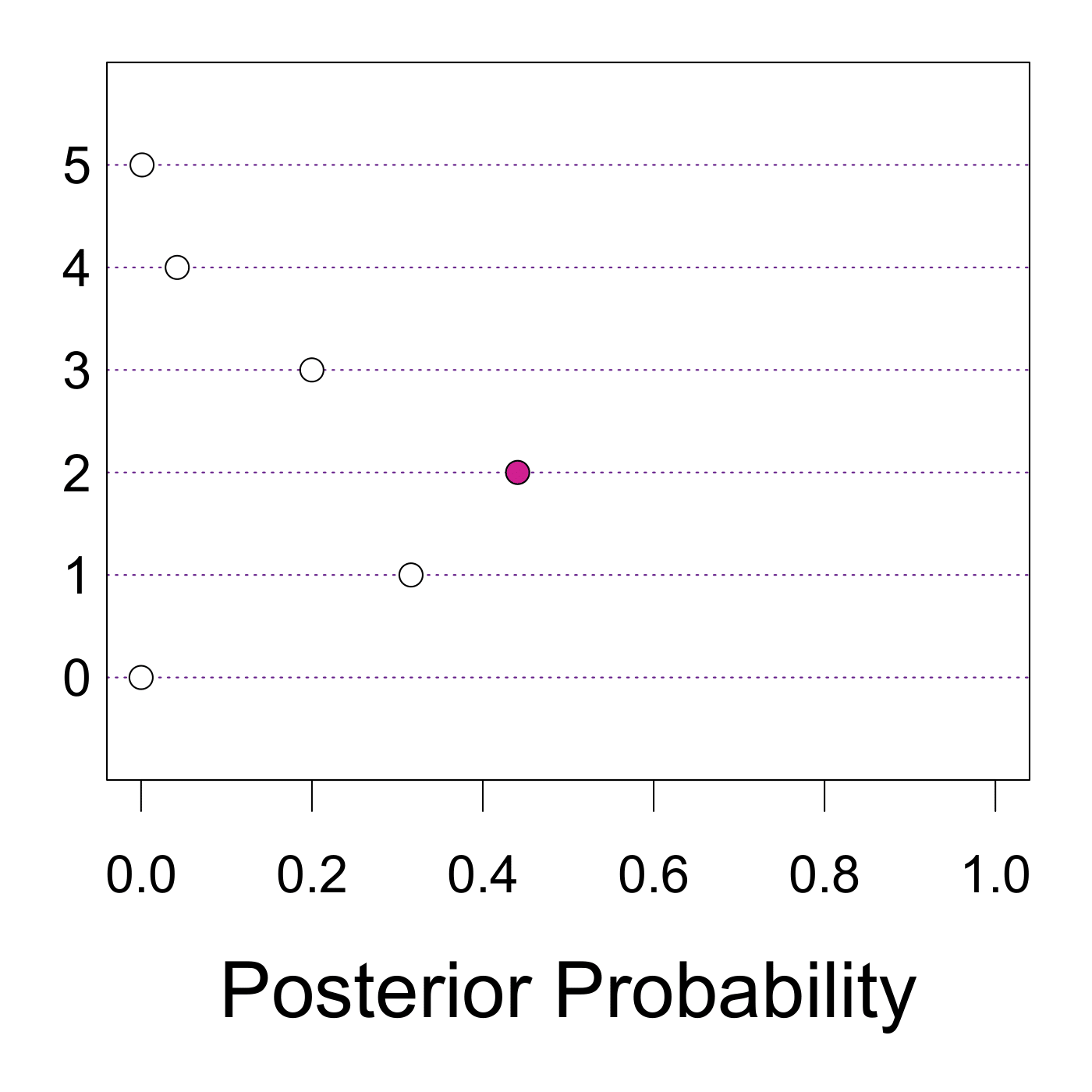}\label{sim1-dc6}}\\
   \subfigure[Case 4, BNLC]{\includegraphics[width=4.4 cm,height=4.4cm]{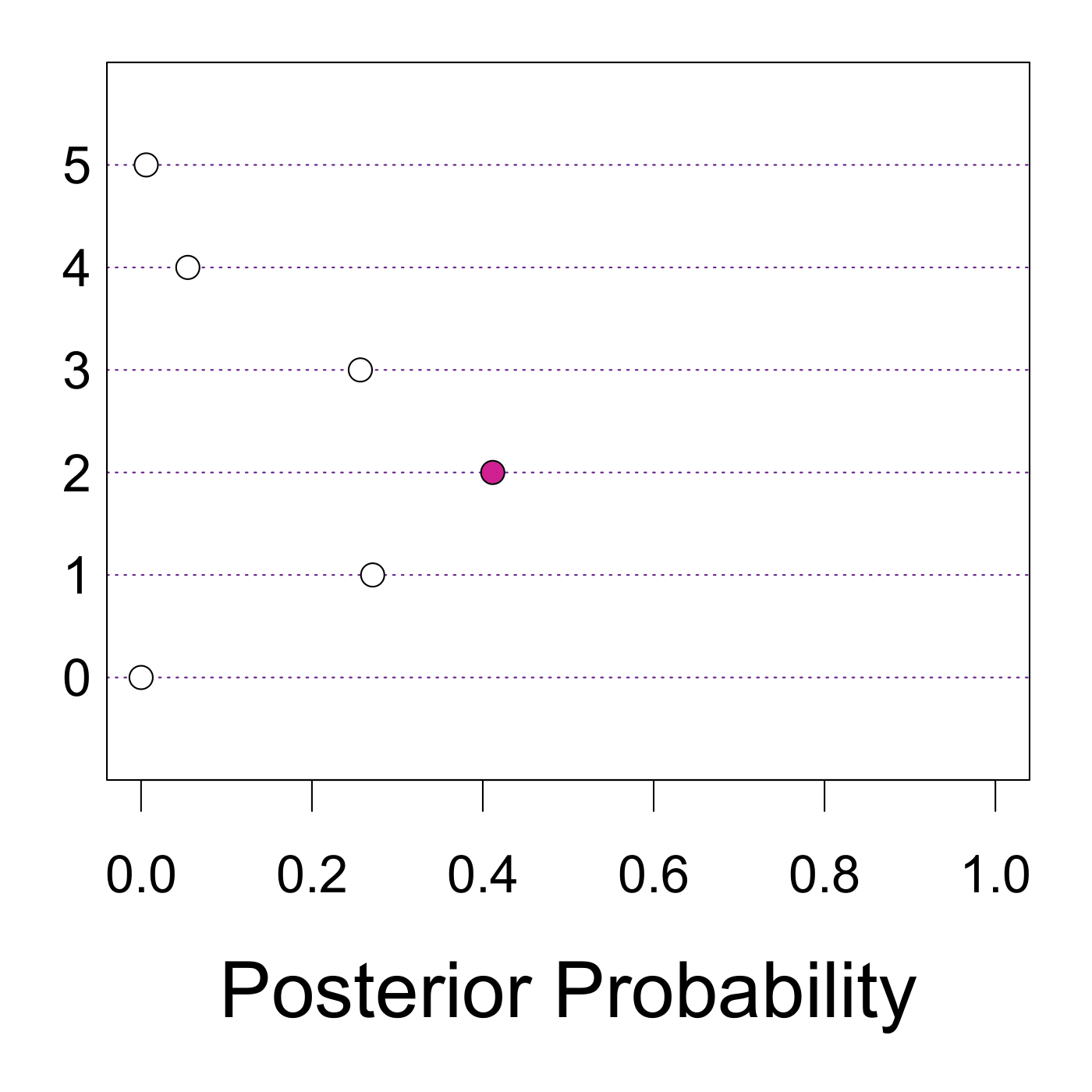}\label{sim1-dc7}}
   \subfigure[Case 1, BNHC]{\includegraphics[width=4.4 cm,height=4.4cm]{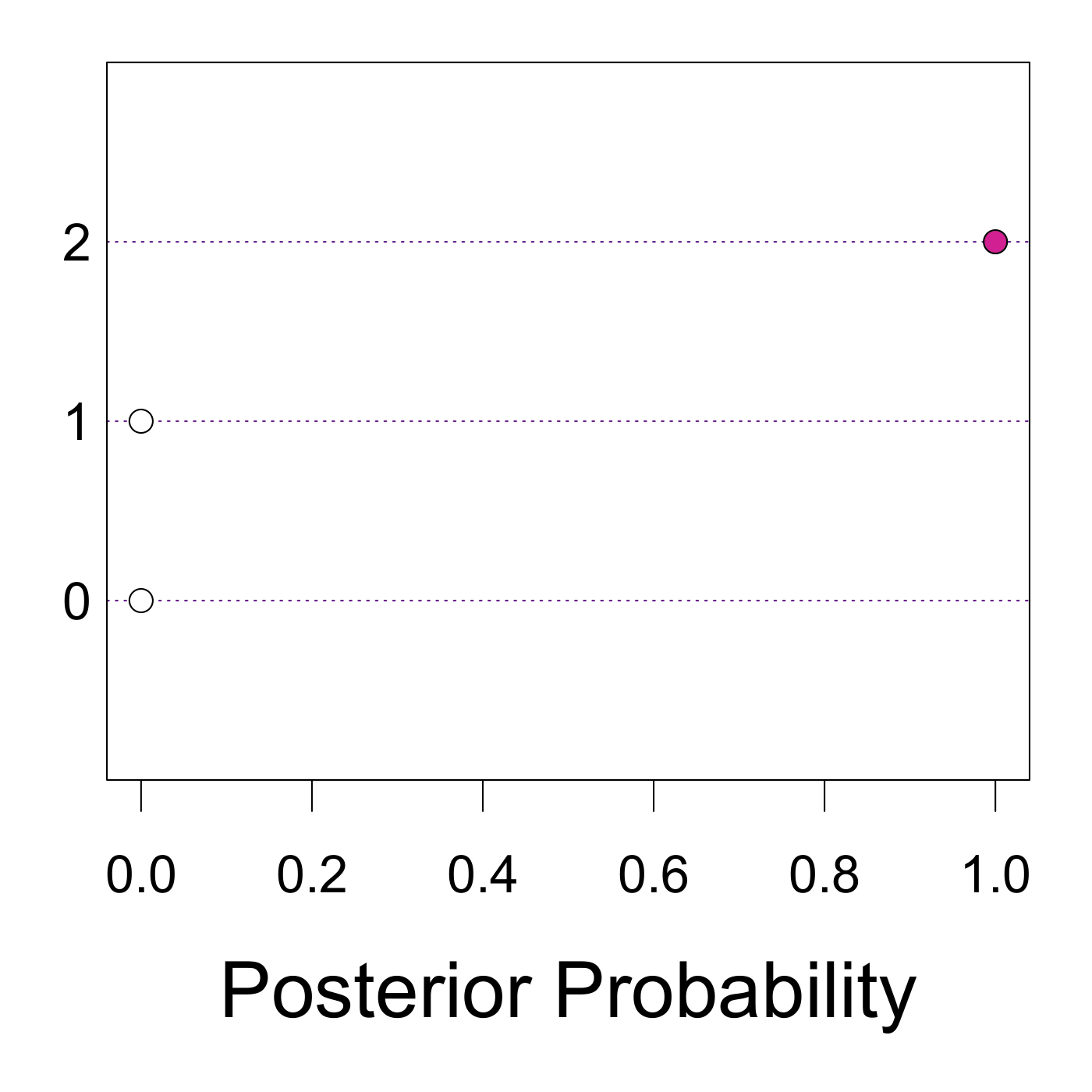}\label{sim1-dc1-hs}}
   \subfigure[Case 2, BNHC]{\includegraphics[width=4.4 cm,height=4.4cm]{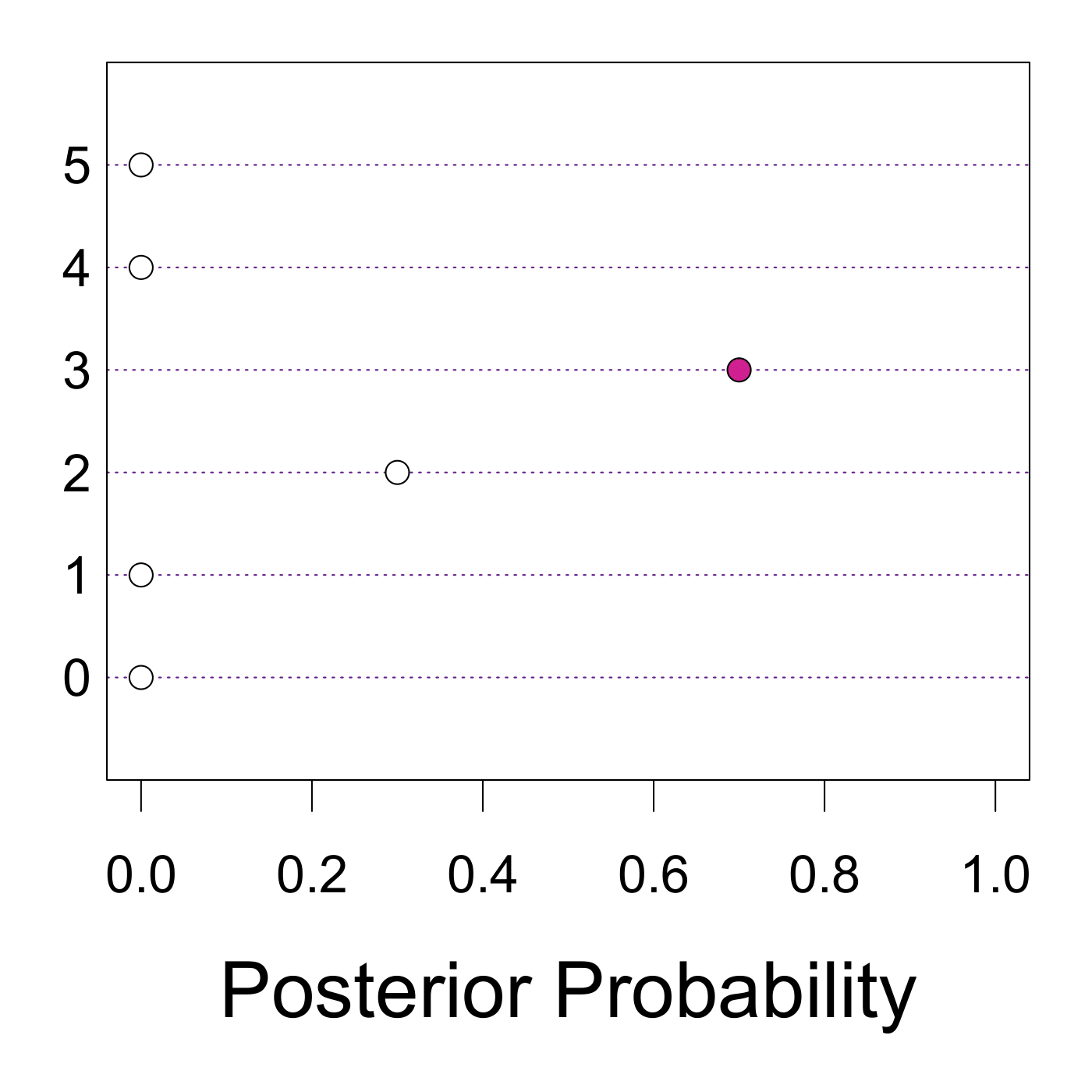}\label{sim1-dc5-hs}}\\
   \subfigure[Case 3, BNHC]{\includegraphics[width=4.4 cm,height=4.4cm]{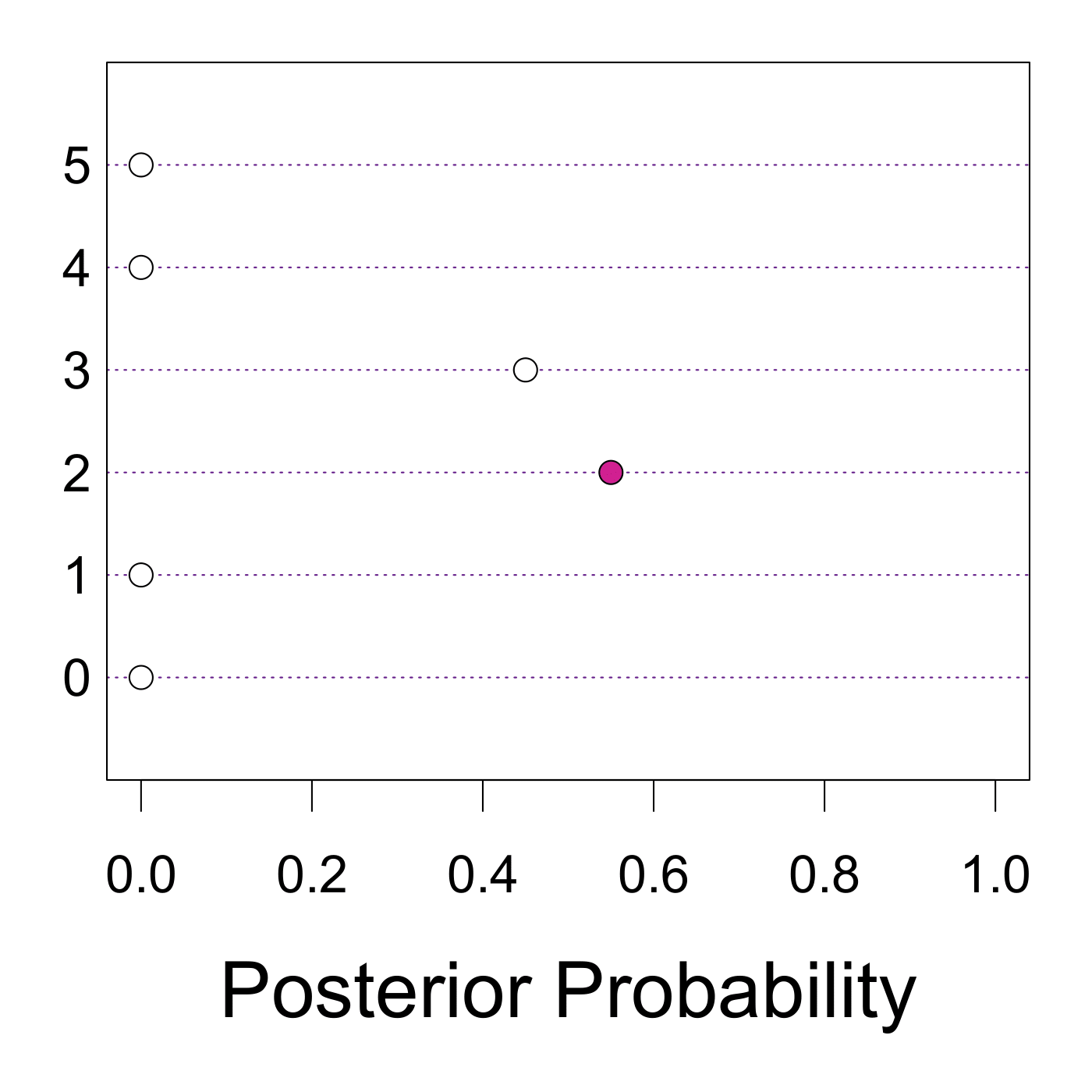}\label{sim1-dc6-hs}}
   \subfigure[Case 4, BNHC]{\includegraphics[width=4.4 cm,height=4.4cm]{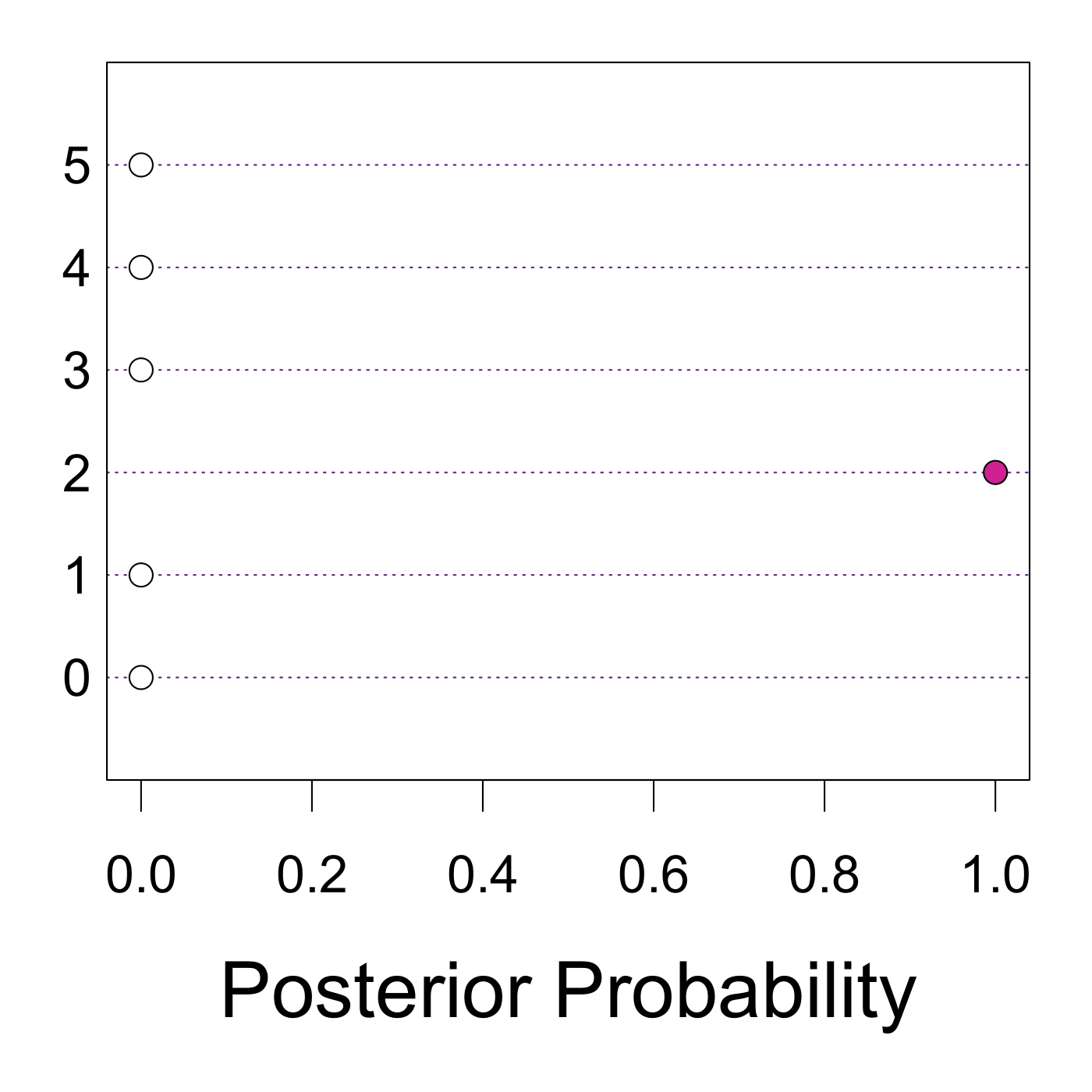}\label{sim1-dc7-hs}}
 \end{center}
 \caption{Plots showing posterior probability distribution of effective dimensionality for BNLC and BNHC models in all $4$ cases in Simulation 1. Filled bullets indicate the true value of effective dimensionality.}\label{effective}
\end{figure}

\begin{figure}
  \begin{center}
    \subfigure[Case 1, BNLC]{\includegraphics[width=4.4 cm,height=4.4cm]{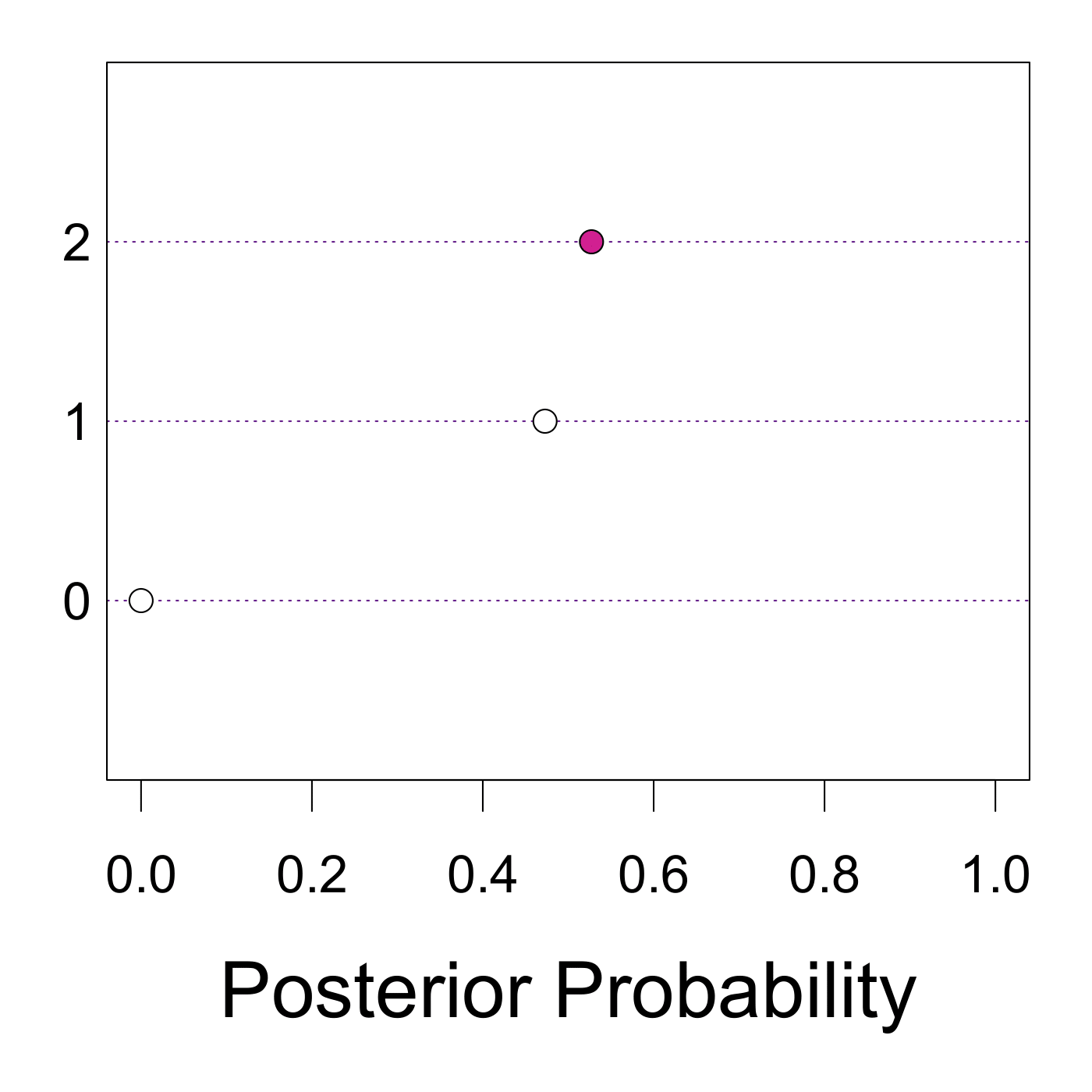}\label{sim2-dc1}}
   \subfigure[Case 2, BNLC]{\includegraphics[width=4.4cm,height=4.4cm]{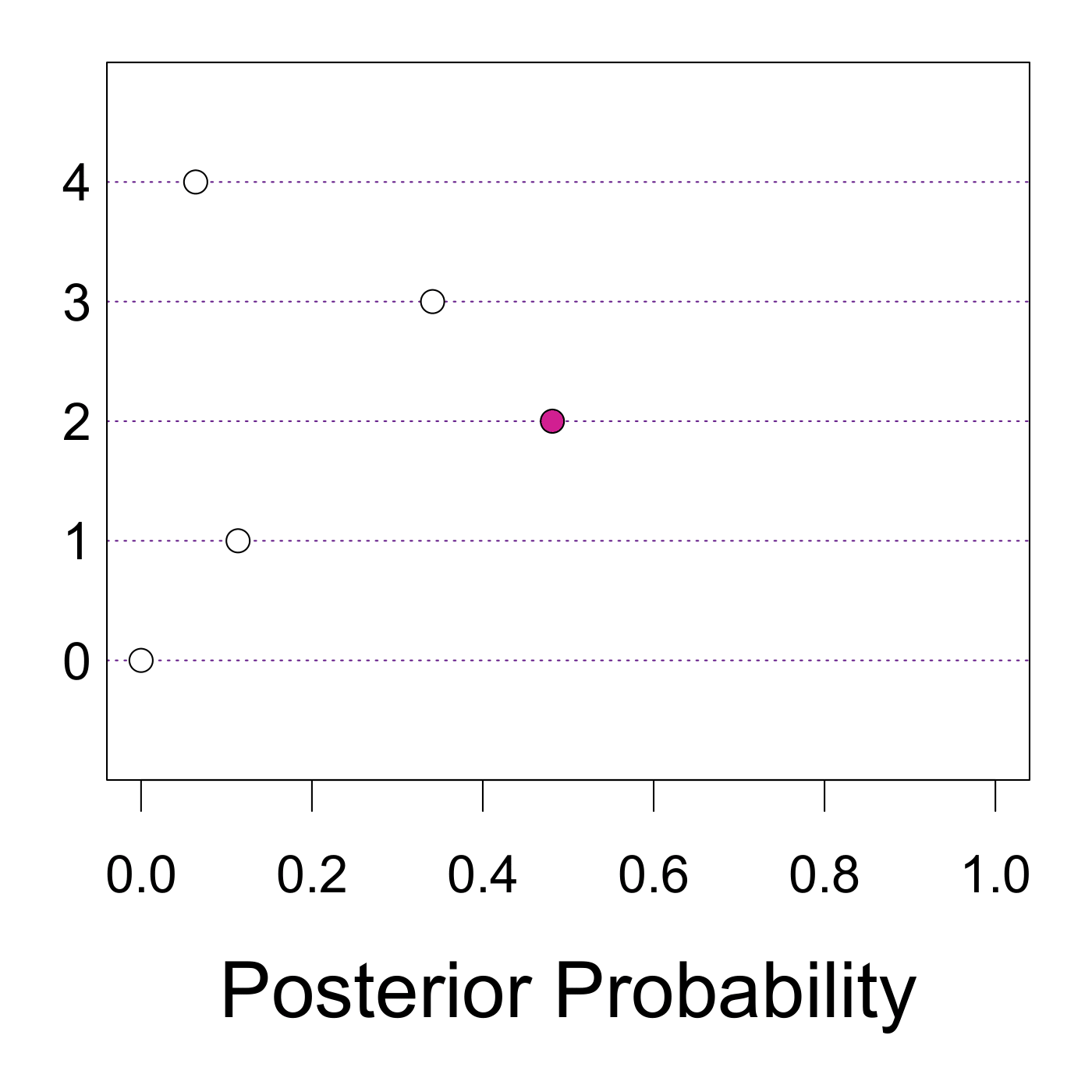}\label{sim2-dc2}}
   \subfigure[Case 3, BNLC]{\includegraphics[width=4.4 cm,height=4.4cm]{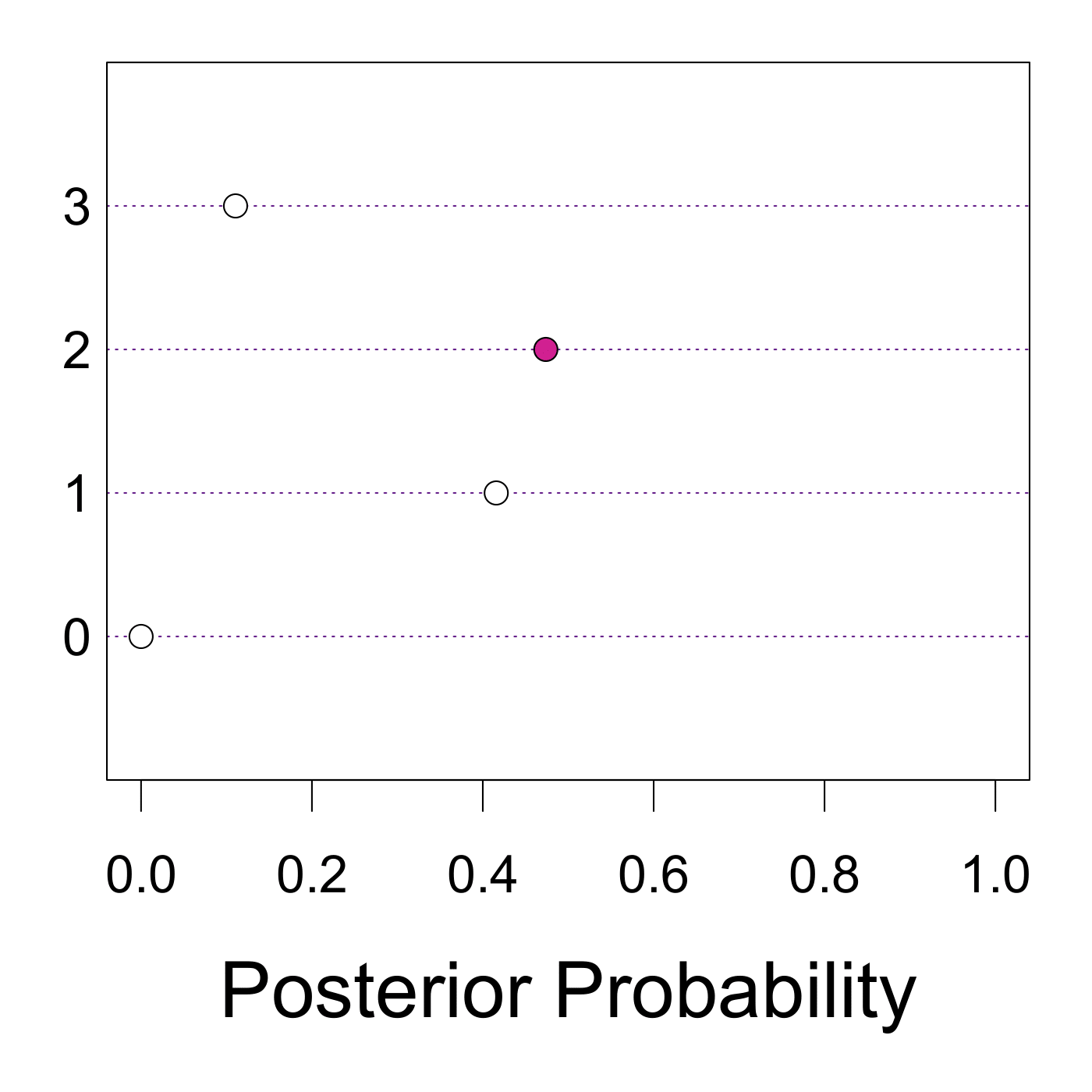}\label{sim2-dc3}}\\
   \subfigure[Case 4, BNLC]{\includegraphics[width=4.4 cm,height=4.4cm]{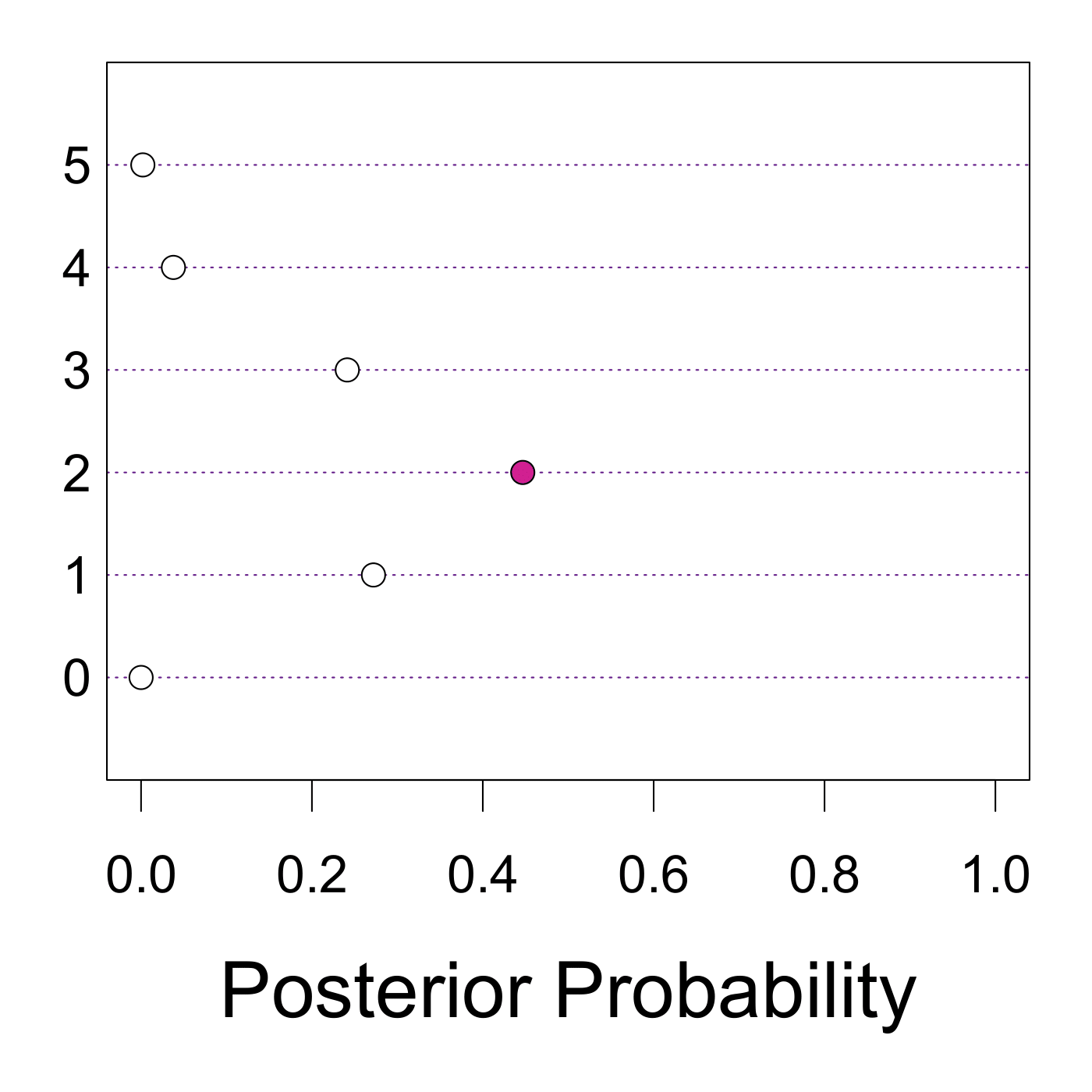}\label{sim2-dc4}}
   \subfigure[Case 1, BNHC]{\includegraphics[width=4.4 cm,height=4.4cm]{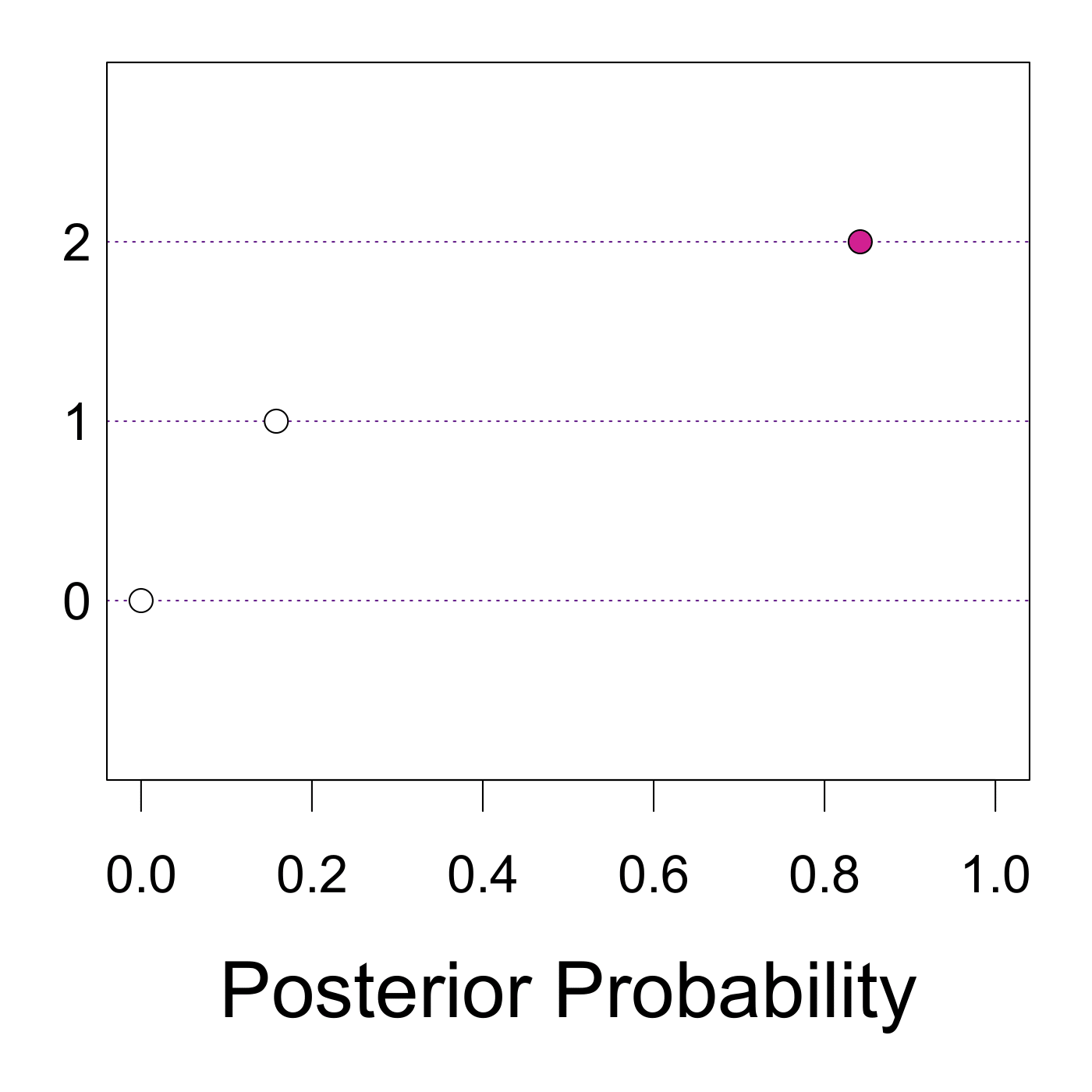}\label{sim2-dc1-hs}}
   \subfigure[Case 2, BNHC]{\includegraphics[width=4.4cm,height=4.4cm]{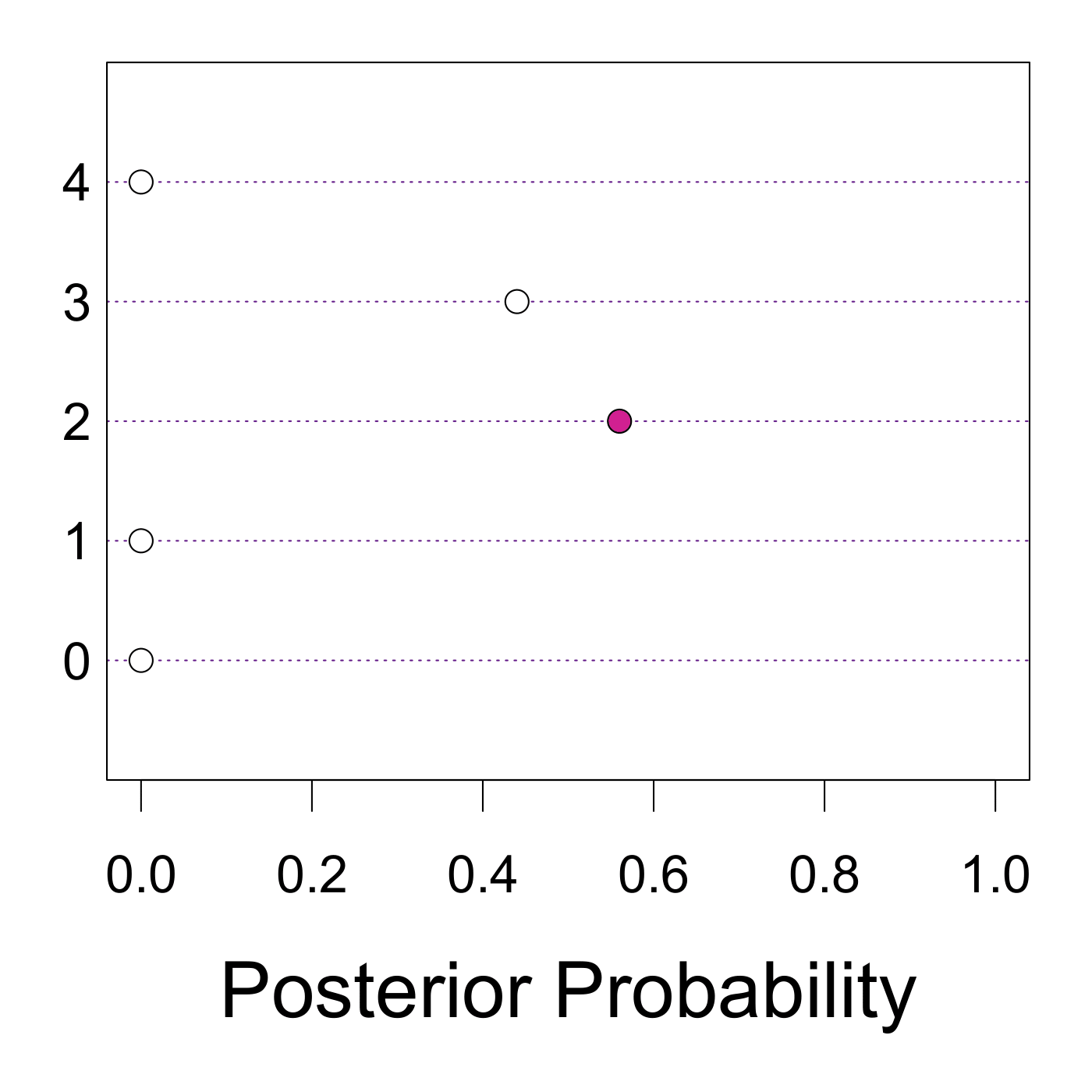}\label{sim2-dc2-hs}}\\
   \subfigure[Case 3, BNHC]{\includegraphics[width=4.4 cm,height=4.4cm]{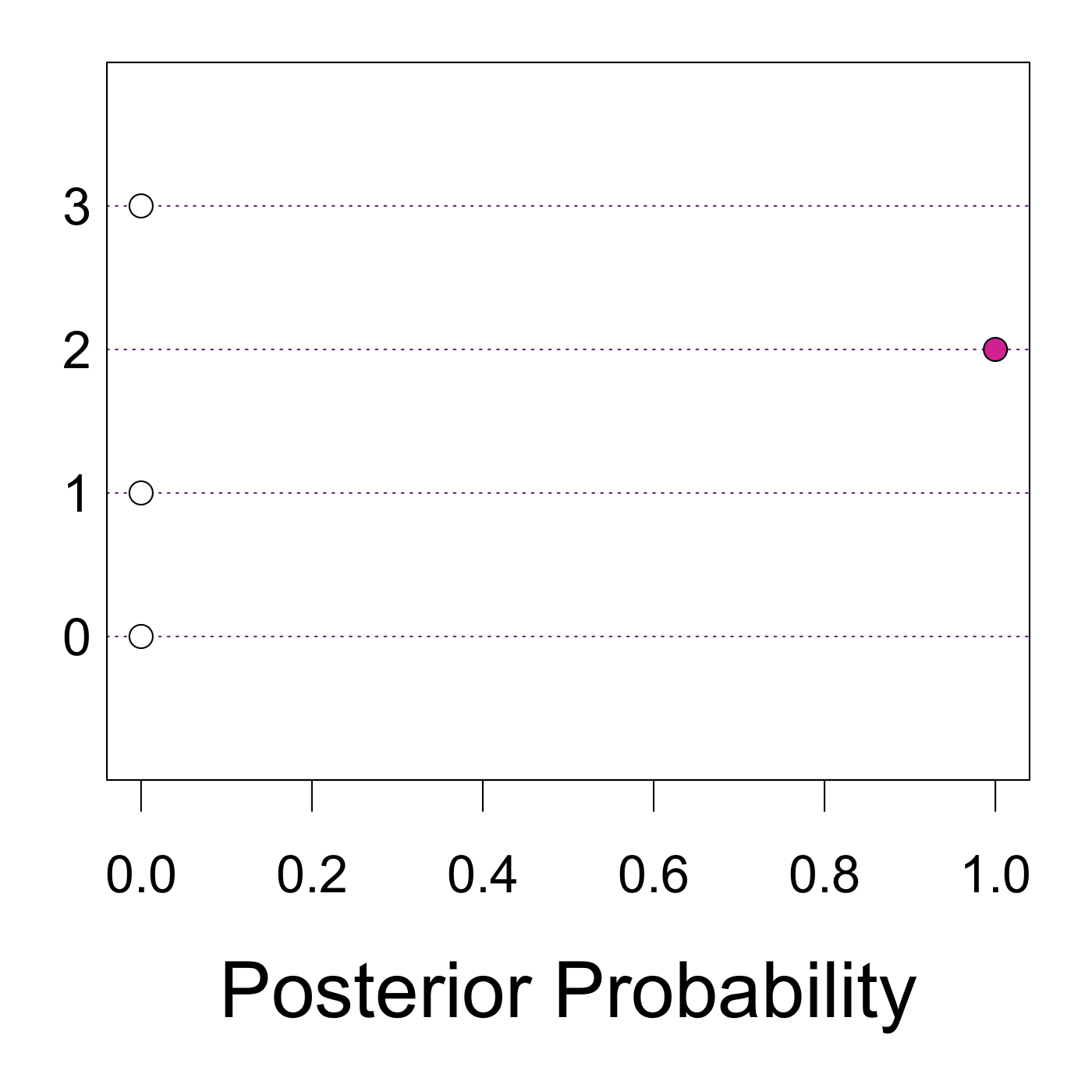}\label{sim2-dc3-hs}}
   \subfigure[Case 4, BNHC]{\includegraphics[width=4.4 cm,height=4.4cm]{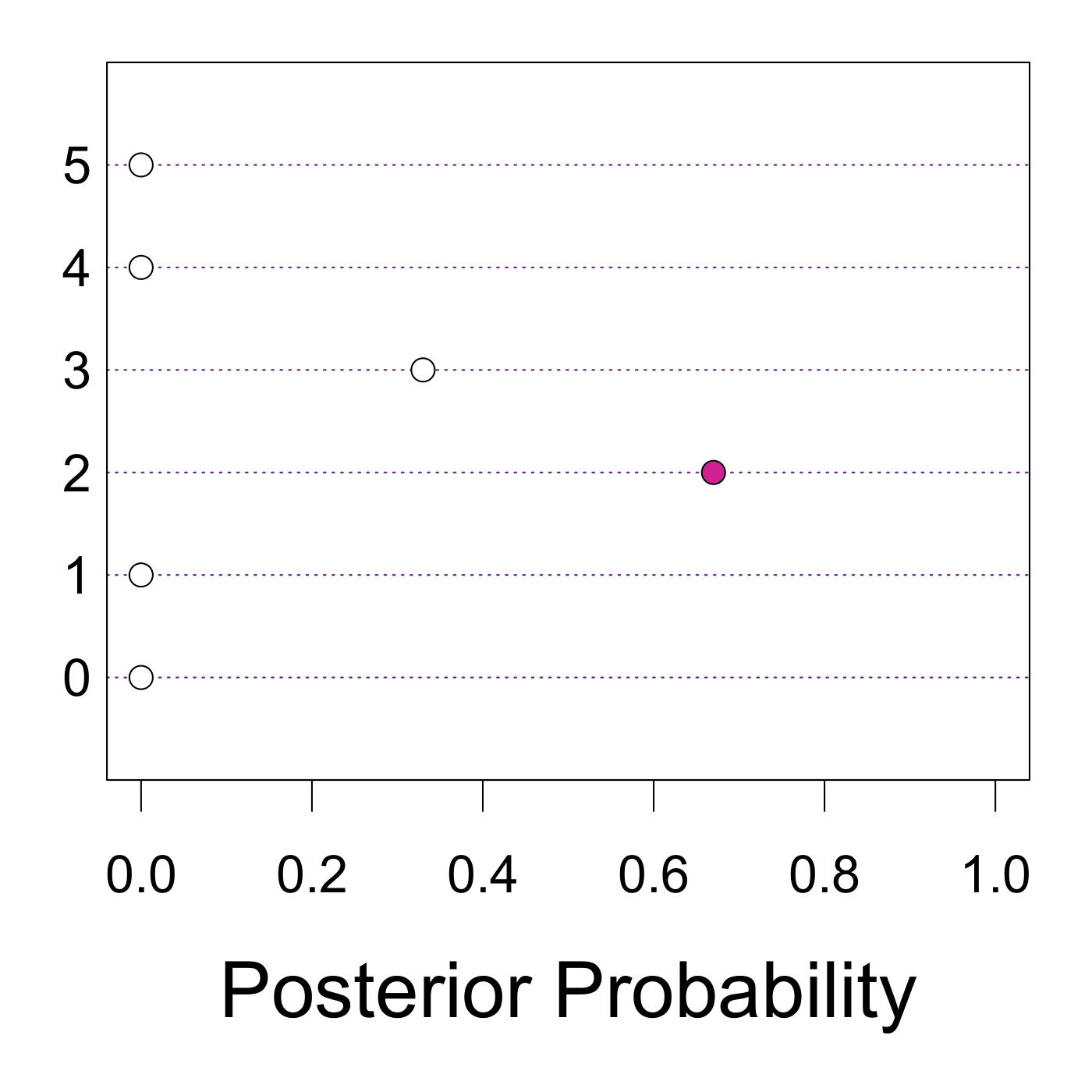}\label{sim2-dc4-hs}}
   \end{center}
 \caption{Plots showing posterior probability distribution of effective dimensionality for BNLC and BNHC models in all $4$ cases in Simulation 2. Filled bullets indicate the true value of effective dimensionality.}\label{effective2}
\end{figure}

\begin{table}[!th]
\begin{center}
\begin{tabular}
[c]{c|cccccc}
\hline
\multicolumn{1}{c}{} & \multicolumn{5}{|c}{MSE}\\
\hline
Cases & BNLC & BNHC & Lasso &  Reli\'{o}n(2017) & Binary  & Binary \\
           &                      &  &   &                       & BL       & Horseshoe \\
\hline
Case - 1 & \textbf{0.164} & 0.683 & 1.197 & 1.387 &  0.980 & 1.160\\
Case - 2 & \textbf{2.349} & 3.568 & 3.943 & 4.368 & 3.502 & 3.993\\
Case - 3 & \textbf{0.106} & 0.467  & 0.906 & 1.056 & 0.695 & 0.856\\
Case - 4 & \textbf{0.166} & 0.200 & 0.485 & 0.617 & 0.329 & 0.415\\
\hline
\end{tabular}
\caption{Performance of BNLC and BNHC vis-a-vis competitors for cases in Simulation 1. Parametric inference in terms of point estimation of edge coefficients has been captured through the Mean Squared Error (MSE). The minimum MSE among competitors for any case is made bold.}\label{Tab_sim1_B}
\end{center}
\end{table}

\subsection{Estimation of Effective Dimensionality}
Figures~\ref{effective} and \ref{effective2} present posterior probabilities of effective dimensionality of the latent positions $\bu_1, \ldots, \bu_V$ for BNLC and BNHC in Simulations 1 and 2, respectively.
Note that the true dimension of the latent space is known and recorded for all simulations in Tables~\ref{Tab1} and \ref{Tab2}. In all $8$ cases, the posterior mode corresponds to the true dimension of the latent space for both BNLC and BNHC. Compared to BNLC, the posterior distribution of $R_{eff}$ in BNHC concentrates more sharply around $R_g$ in all cases.

\begin{table}[!th]
\begin{center}
\begin{tabular}
[c]{c|cccccc}
\hline
\multicolumn{1}{c}{} & \multicolumn{5}{|c}{MSE}\\
\hline
Cases & BNLC & BNHC & Lasso &  Reli\'{o}n(2017) & Binary  & Binary \\
           &                      &   &   &                       & BL       & Horseshoe \\
\hline
Case - 1 & \textbf{0.279} & 0.418 & 0.807 & 0.939 &  0.712 & 0.739\\
Case - 2 & \textbf{0.180} & 0.388 & 0.514 & 0.665 &  0.423 & 0.548\\
Case - 3 & \textbf{0.134} & 0.549 & 0.906 &  1.097 & 0.748 & 0.883\\
Case - 4 & \textbf{0.066} & 0.106 & 0.167 &  0.221 & 0.137 & 0.141\\
\hline
\end{tabular}
\caption{Performance of BNLC and BNHC vis-a-vis competitors for cases in Simulation 2. Parametric inference in terms of point estimation of edge coefficients has been captured through the Mean Squared Error (MSE). The minimum MSE among competitors for any case is made bold.}\label{Tab_sim2_B}
\end{center}
\end{table}

\subsection{Sensitivity to the choice of Hyperparameters}\label{sim_ch3_sens}

To assess how sensitive the inferences from BNLC and BNHC are, we analyze BNLC and BNHC with different combinations of hyperparameters. Specifically for BNLC, we use the five different combinations given by, (i) $a_{\Delta}=1, b_{\Delta}=9$; (ii) $\nu=20,\delta=5$ (iii) $\nu=50,\delta=5$ (iv) $\nu=20,\delta=0.2$ (v) $\nu=50,\delta=0.2$. Combination (i) ensures small prior mean for $\xi_k$'s, while combinations (ii)-(v) allow a range of prior means for $\theta$ and $\bM$.
On the other hand, the three different combinations we employ for BNHC are, (i)' $a=1$, $b=9$ (ii)' $\nu=10$ (iii)' $\nu=50$. With these hyperparameter combinations for BNLC and BNHC, we analyze the data simulated in case 4, Simulation 1 (case chosen randomly), report performances on influential node and edge identification and the MSE values for estimating the network coefficient matrix. All these inferences  with different choices of hyperparameters are compared among themselves and compared with the inferences reported earlier on case 4, Simulation 1.

Table~\ref{Tab_sensitivity_chap3} records the MSE values for estimating the network coefficient under all these combinations. The MSE values for BNLC range between $0.10$ and $0.30$ (please see table~\ref{Tab_sim1_B}). MSE values for BNHC are found to range between $0.19$ and $0.28$ with different choices of hyperparameters, as shown in able~\ref{Tab_sim1_B}. Figure ~\ref{node_s_ch2} shows the posterior probabilities of a node being identified as influential under all these hyperparameter combinations. It shows probabilities being only little affected by the change of hyper-parameters. In fact, under hyper-parameter combinations (i),(ii) and (iv), BNLC  identifies the same set of nodes as influential which have been identified as influential by the original BNLC prior. Under combination (iii), BNLC does not identify node $9$ as influential which has been identified as influential by the original BNLC prior. Under combination (iv) BNLC identifies one additional node (node $21$) as influential over the set of nodes identified by the original prior. Under hyperparameter combination (i)', BNHC identifies the same set of nodes with the original BNHC prior except nodes $4,9,18,25$ which are identified as influential by the original prior, but not by the combination (i)'. Combinations (ii)' and (iii)' also identify the same set of nodes with the original BNHC prior except for nodes $9,18,25$. Finally, Table~\ref{Tab_sensitivity_tpr_chap3} offers TPR and FPR values corresponding to the identification of influential edges for BNLC and BNHC under various combinations of hyper-parameters. The TPR for BNHC under combination (iii)' turns out to be a little higher than the rest, but overall numbers do not show a lot of variation. We emphasize that the results turn out to be better than our competitors under all combinations.

\begin{table}[!th]
\begin{center}
\begin{tabular}
[c]{c|ccccc|ccc}
\hline
\multicolumn{1}{c|}{} &\multicolumn{5}{c|}{BNLC} & \multicolumn{3}{c}{BNHC}\\
\hline
Combinations & (i) & (ii) & (iii) &  (iv) & (v)  & (i)' & (ii)' & (iii)'\\
\hline
MSE & 0.14 & 0.30 & 0.22 & 0.10 &  0.22 & 0.19 & 0.28 & 0.28\\
\hline
\end{tabular}
\caption{Mean Squared Error (MSE) of estimating the network coefficient in BNLC and BNHC for different combinations of hyper-parameters.}\label{Tab_sensitivity_chap3}
\end{center}
\end{table}

\begin{figure}[!ht]
  \begin{center}
  \subfigure[BNLC Sensitivity]{\includegraphics[width=7cm,height=7cm]{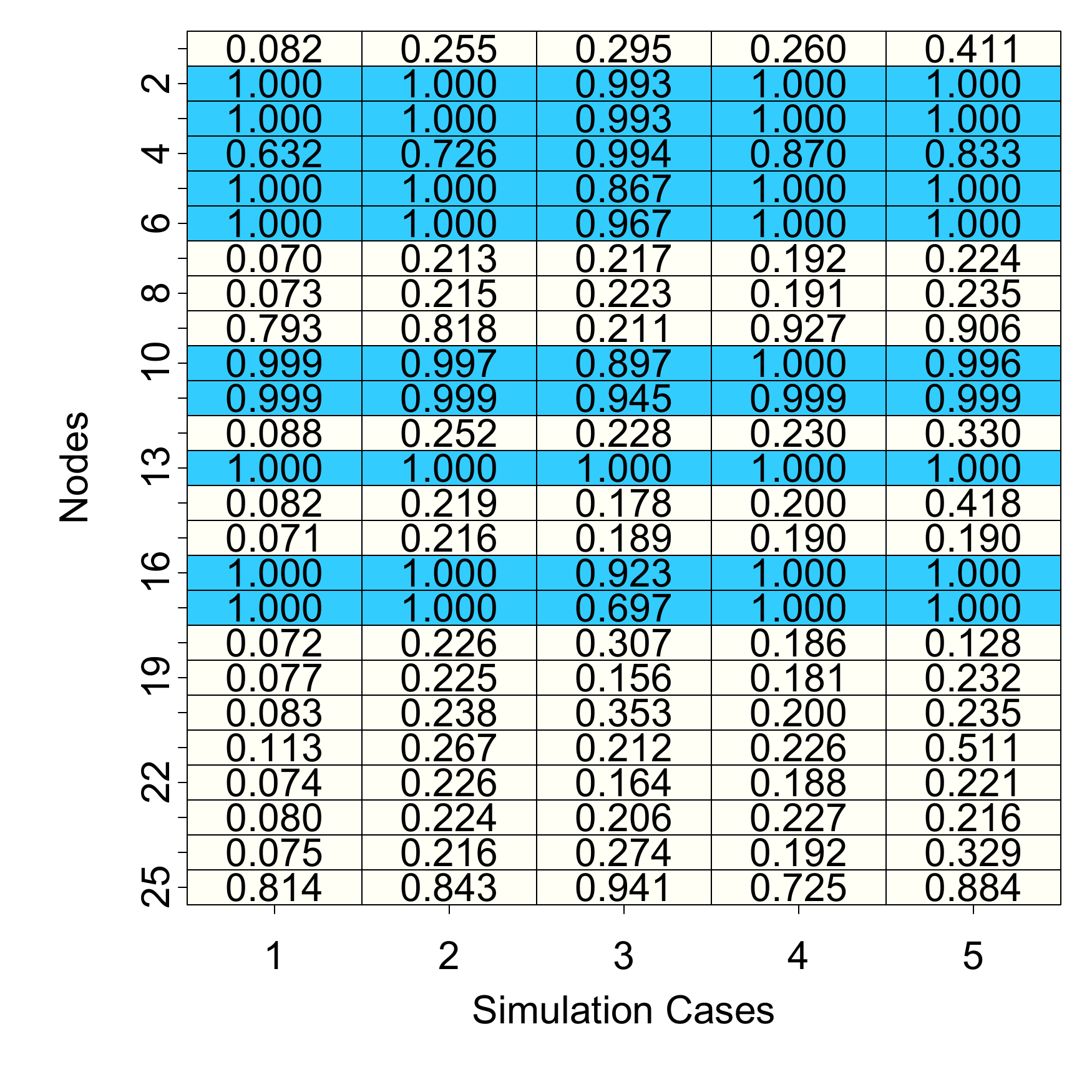}\label{sim1-node_ch2}}
  \subfigure[BNHC Sensitivity]{\includegraphics[width=7cm,height=7cm]{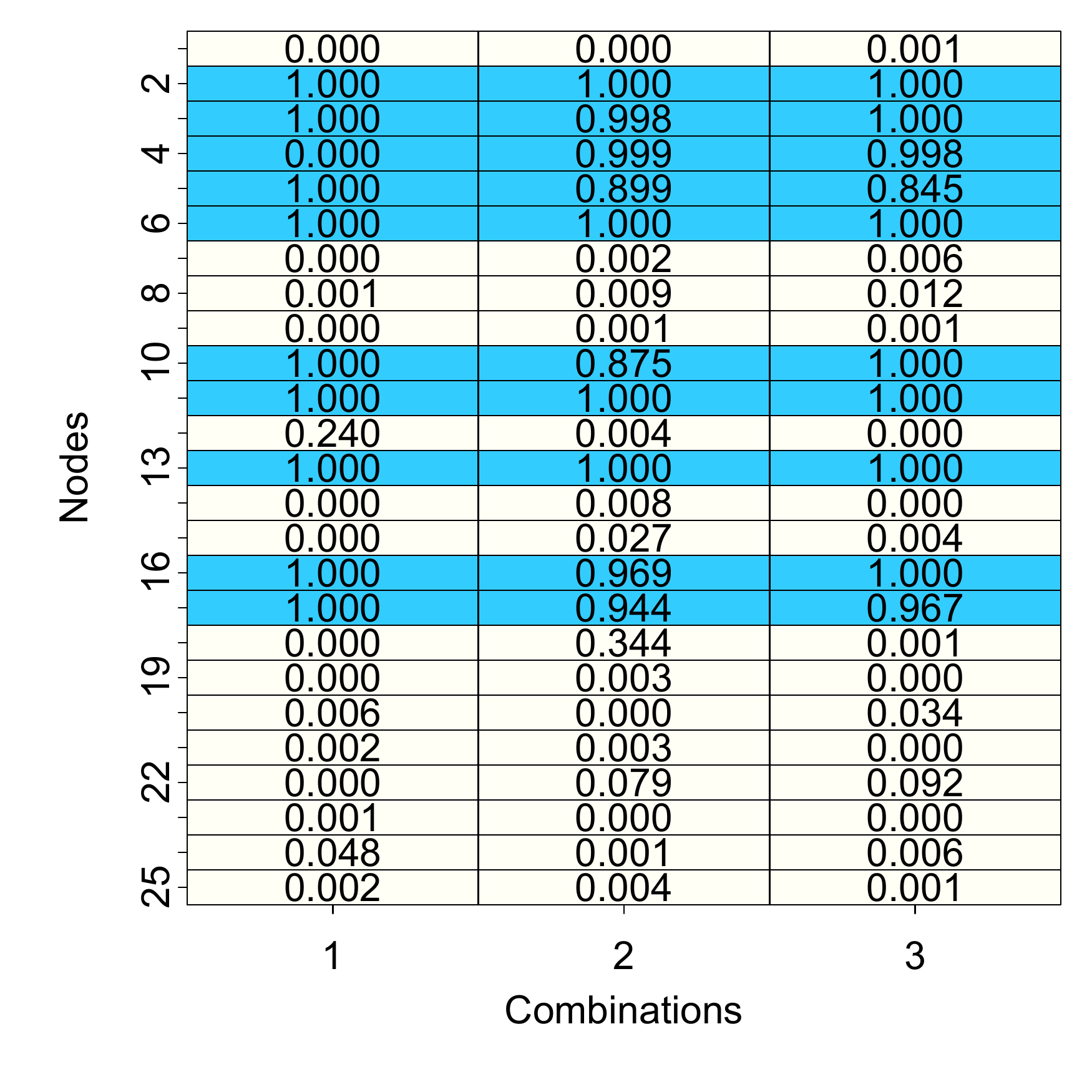}\label{sim2-node_ch2}}
 \end{center}
 \caption{Figure shows $P(\xi_k=1|Data)$ for BNLC and BNHC under different hyper-parameter combinations in the simulated data for case 4 (Simulation 1).}
\label{node_s_ch2}
\end{figure}

\begin{table}[!th]
\begin{center}
\begin{tabular}
[c]{c|ccccc|ccc}
\hline
\multicolumn{1}{c|}{} &\multicolumn{5}{c|}{BNLC} & \multicolumn{3}{c}{BNHC}\\
\hline
Combinations & (i) & (ii) & (iii) &  (iv) & (v)  & (i)' & (ii)' & (iii)'\\
\hline
TPR & 0.80 & 0.76 & 0.82 & 0.83 &  0.78 & 0.64 & 0.88 & 0.82\\
FPR & 0.16 & 0.21 & 0.17 & 0.21 &  0.18 & 0.19 & 0.24 & 0.18\\
\hline
\end{tabular}
\caption{True Positive Rates (TPR) and False Positive Rates (FPR) of identifying influential edges in BNLC and BNHC for different combinations of hyper-parameters.}\label{Tab_sensitivity_tpr_chap3}
\end{center}
\end{table}

\section{Brain Connectome Application}\label{connectome}
In this section, we present the inferential and classification ability of BNLC and BNHC in the context of a weighted diffusion tension imaging (DTI) dataset. Our dataset contains information on the \emph{full scale intelligence quotient} (FSIQ) for multiple individuals. Full scale intelligence quotient (FSIQ) is a  measure of an individual's complete cognitive capacity. It is derived from administration of selected sub-tests from the Wechsler Intelligence Scales (WIS), designed to provide a measure of an individual's overall level of general cognitive and intellectual functioning, and is a summary score derived from an individual's performance on a variety of tasks that measure acquired knowledge, verbal reasoning, attention to verbal materials, fluid reasoning, spatial processing, attentiveness to details, and visual-motor integration \cite{caplan2011encyclopedia}. A substantial body of literature has suggested that there is an IQ threshold (usually described as an IQ of approximately $120$ points) that may be characterized as superior reasoning ability \cite{brown2009executive,carson2003decreased}. Following this literature, we have converted the FSIQ scores into a binary response variable $y$, which takes value 0 if FSIQ is less or equal to 120, and takes value 1 if FSIQ is greater than 120. Thus, we classify the subjects in our study as belonging to the \emph{low IQ} group if $y=0$, and the \emph{high IQ} group if $y=1$.

Along with FSIQ measurements, brain connectome information for $n = 114$ subjects is gathered using weighted diffusion tensor imaging (DTI). DTI  is a brain imaging technique that enables measurement of the restricted diffusion of water in tissue in order to produce neural tract images. The brain imaging data we use has been pre-processed using the NDMG pre-processing pipeline \cite{kiar2016ndmg}; \cite{kiar2017example}; \cite{kiar2017science}.
In the context of DTI, the human brain is divided according to the Desikan atlas \cite{desikan2006automated}, which identifies $34$ cortical regions of interest (ROIs) both in the left and right hemispheres of the human brain, implying $68$ cortical ROIs in all.
Similar to \cite{guha2020bayesian}, this results in a brain network of a $68\times 68$ matrix for each individual. Our scientific goals in this setting include identification of brain regions or network nodes significantly related to FSIQ and classification of a subject into the low IQ or high IQ group based on his/her brain connectome information.


 Identical prior distributions for all the parameters as in the simulation studies have been used. BNLC and BNHC are both fitted with $R=4$, which is found to be sufficient for this study. Further, \cite{guha2020bayesian} show robust inference as long as the chosen $R$ is bigger than the effective dimensionality of the latent variables. Similar to article \cite{guha2020bayesian}, we also do a sensitivity study to check the impact of $R$ on predictive inference.
 The choice of hyperparameters for BNLC and BNHC are made similar to the simulation studies. A brief explanation for such choices of hyper parameters is provided in the simulation section. The MCMC chain is run for $50,000$ iterations, with the first $30,000$ iterations discarded as burn-in. Convergence is assessed by comparing different simulated sequences of representative parameters started at different initial values \cite{gelman2014bayesian}. All inference is based on the remaining $20,000$ post burn-in iterates appropriately thinned. 

\subsection{Findings from the Brain Connectome Application}
As in simulation studies, we put our emphasis on identifying influential brain regions of interest (ROIs) associated with FSIQ. 
The BNLC model estimates posterior probabilities over $0.5$ (hence detecting as \emph{influential}) for $38$ ROIs, out of which $20$ regions are in the left hemisphere and $18$ regions are in the right hemisphere. Among the regions detected in both the hemispheres, a large number belong to the \emph{frontal}, \emph{temporal} and \emph{cingulate} lobes. 
Using the same principle, the BNHC model identifies $48$ nodes to be influential.  Out of the $48$ influential nodes, $26$ are detected in the left hemisphere and the rest in the right hemisphere. The ROIs are mainly detected in the \emph{temporal, frontal, parietal} and \emph{cingulate} lobes in both hemispheres. 
Figure~\ref{nodes} plots the estimated posterior probability of an ROI being detected as influential by the BNLC and BNHC models. Notably, there are $29$ ROIs identified by both BNLC and BNHC, given in Table~\ref{Tab_chbin_nodes}. 

A large number of the $29$ influential nodes detected by both BNLC and BNHC are part of the \emph{frontal} lobes in both the hemispheres. Numerous studies have linked the frontal region to an individual's intelligence and cognitive functions \cite{yoon2017brain,stuss1985subtle,razumnikova2007creativity,miller1985cognitive,kolb1981performance}. Our method also finds a significant association between FSIQ and the left \emph{inferior parietal lobule}, the left \emph{precuneus} and the \emph{supramarginal gyri} in both the hemispheres, in the \emph{parietal} lobe, regions also found to be significantly related to FSIQ by \cite{yoon2017brain}.  

We additionally look into ROIs which are detected by only of the two methods (lets say, BNLC), and report the posterior probabilities of these ROIs being active under the other method (i.e., BNHC). 
Figure ~\ref{cross-ROIs} shows the posterior probabilities of nodes being active under the `other' method as discussed above. It is observed that the nodes selected by BNHC but not by BNLC have probabilities not very far from $0.5$ under BNLC, which says that BNLC is not enough confident to exclude these nodes from the set of influential nodes. However, most of the nodes selected by BNLC but not by BNHC show smaller probabilities of being influential under BNHC. Perhaps, BNLC is more conservative in including nodes in the set of influential nodes, which is responsible for the discrepancy between the number of identified nodes by BNHC and BNLC.

\begin{figure}[!ht]
  \begin{center}
  \includegraphics[width=12cm,height=12cm]{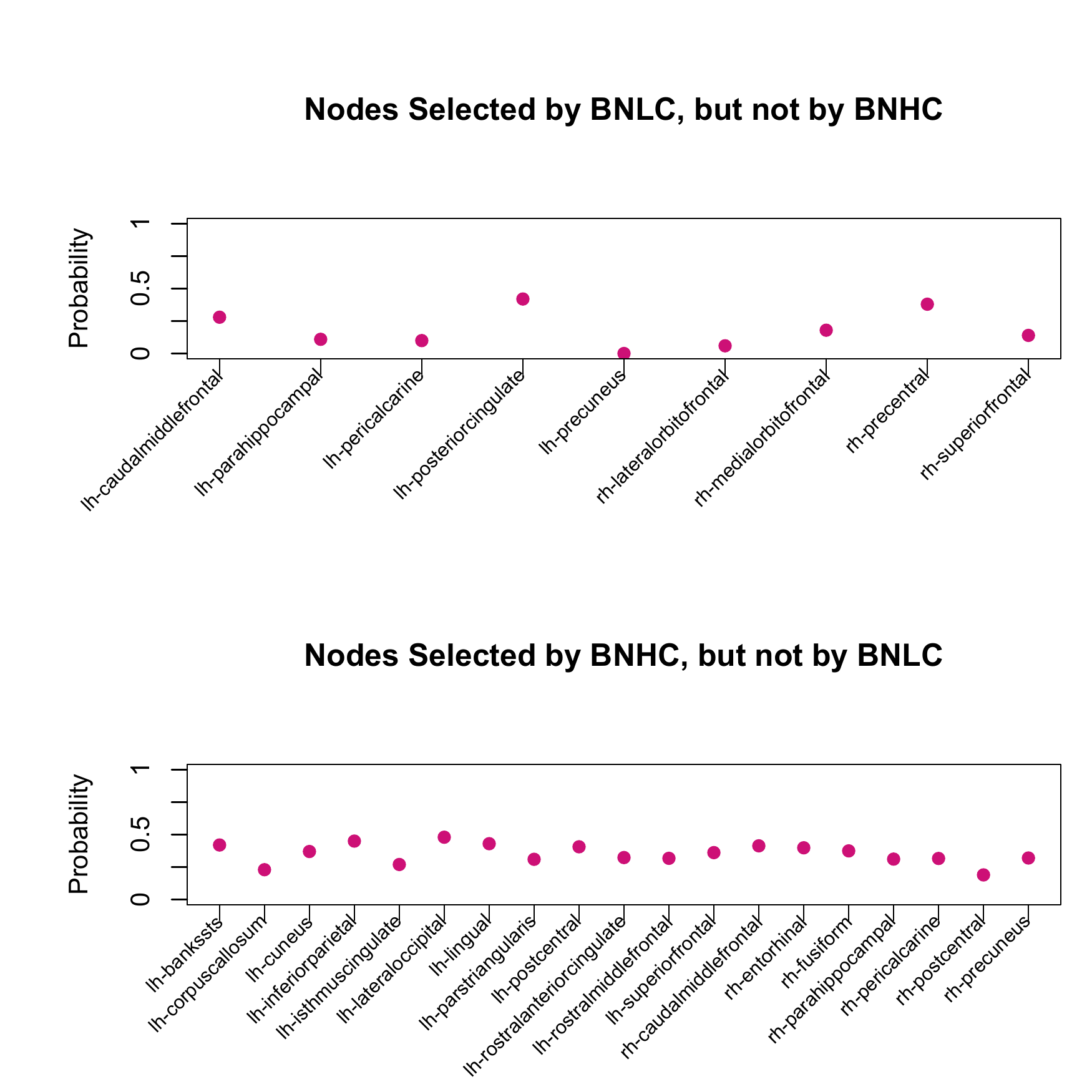}
 \end{center}
 \caption{Figure shows the posterior probabilities of nodes selected as \emph{influential} by one method, but not by another, of being active.}
\label{cross-ROIs}
\end{figure}

As described earlier, we identify influential edges connecting pairs of influential nodes using the algorithm described in Appendix C. Figure~\ref{edges} presents the influential edges (among all edges connecting pairs of influential nodes) identified by the BNLC and BNHC models. Note that BNLC and BNHC identify $142$ and $291$ edges as being influential out of $\binom{38}{2}$ and $\binom{48}{2}$ possibilities, respectively. Since a different number of nodes are detected as influential by BNHC and BNLC, to make a fair comparison, we consider the $29$ nodes detected as influential by both  methods, and use our algorithm to find the number of influential edges among these $\binom{29}{2}$ possibilities for both BNLC and BNHC. The numbers turn out to be $96$ and $184$, respectively. We note that there are a few nodes which are identified as influential by either BNHC or BNLC, but none of the edges connecting these nodes are found to be influential. As an example, although the \emph{frontal pole} and the \emph{temporal pole} in the left hemisphere are identified as influential nodes by BNLC, none of the edges connecting these two nodes turn out to be influential. This phenomenon may be due to the use of the FDR in the edge selection procedure, which finds edges that are most likely to be active while controlling for false discoveries.  Hence, not identifying an edge does not necessarily mean that the edge is not active, it just means that there are others that satisfy the criteria better.


Similar to simulation studies, we dig deeper to analyze the discrepancy in the number of influential edges identified by BNLC and BNHC. Specifically, we rank the $\binom{29}{2}=406$ edges connecting the nodes found to be influential by both BNLC and BNHC, according to the absolute values of their posterior means.  Table~\ref{Tab_intersect_edge} shows between 23-74\% intersections. 

To examine the predictive ability of the Bayesian network classification model, we report the area under curve (AUC) of the ROC curve for BNLC and BNHC, along with all competing methods. The AUCs are computed using a $10$-fold cross validation approach. The AUC estimates presented in Table~\ref{Tab5} indicate better performance of both BNLC and BNHC, with BNLC slightly outperforming. Frequentist Binary Lasso turns out to be the next best performer, while BLasso and BHS perform very similar to a random classifier. Finally, the effective dimensionality of the model is investigated for both BNLC and BNHC, and they turn out to be $2.17$ and $2$, respectively. 

\begin{figure}[!ht]
   \begin{center}
   \subfigure[BNLC]{\includegraphics[width=10 cm, height = 7 cm]{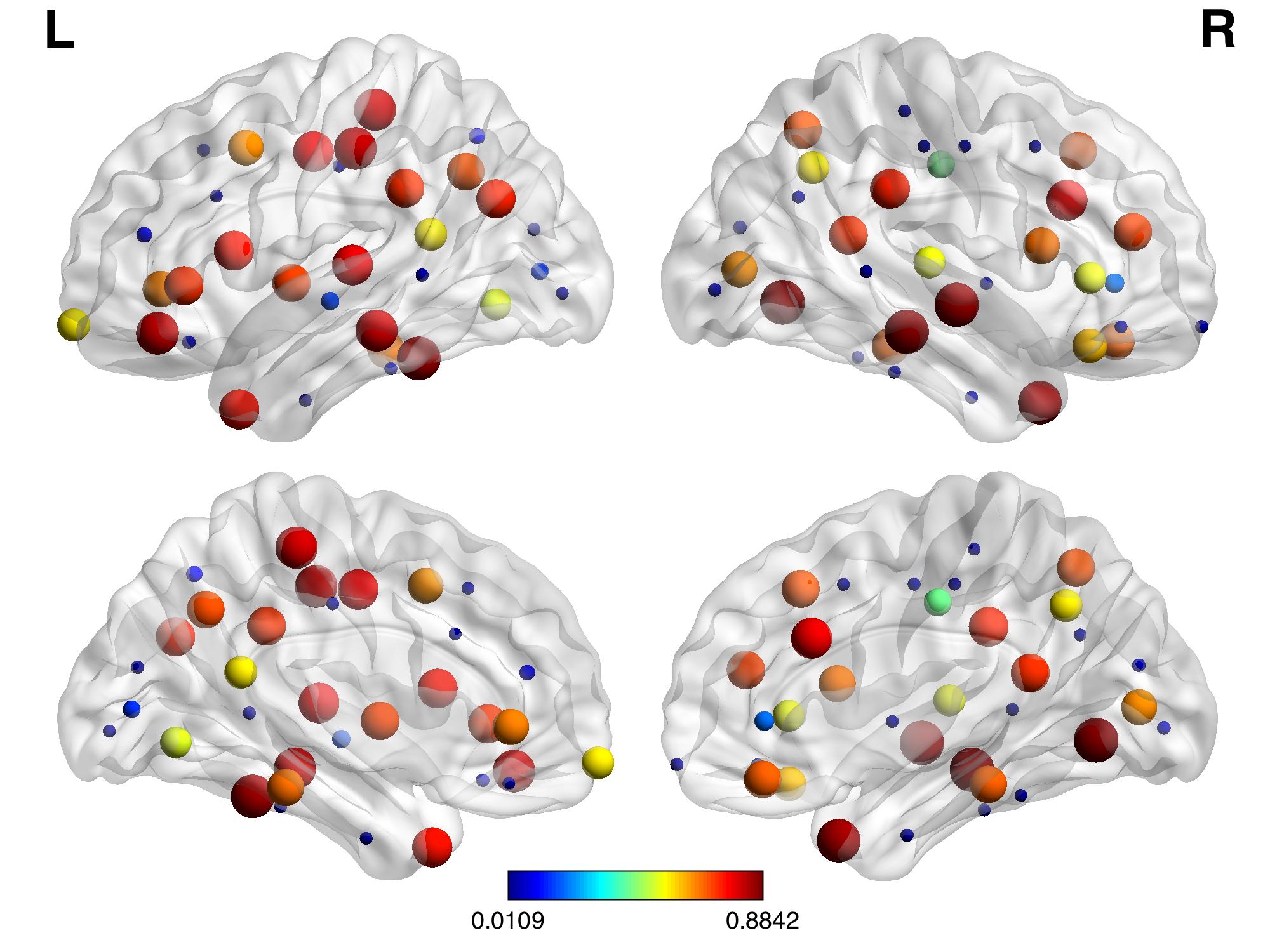}}\label{BNLC_node}\\
   \subfigure[BNHC]{\includegraphics[width=10 cm, height = 7 cm]{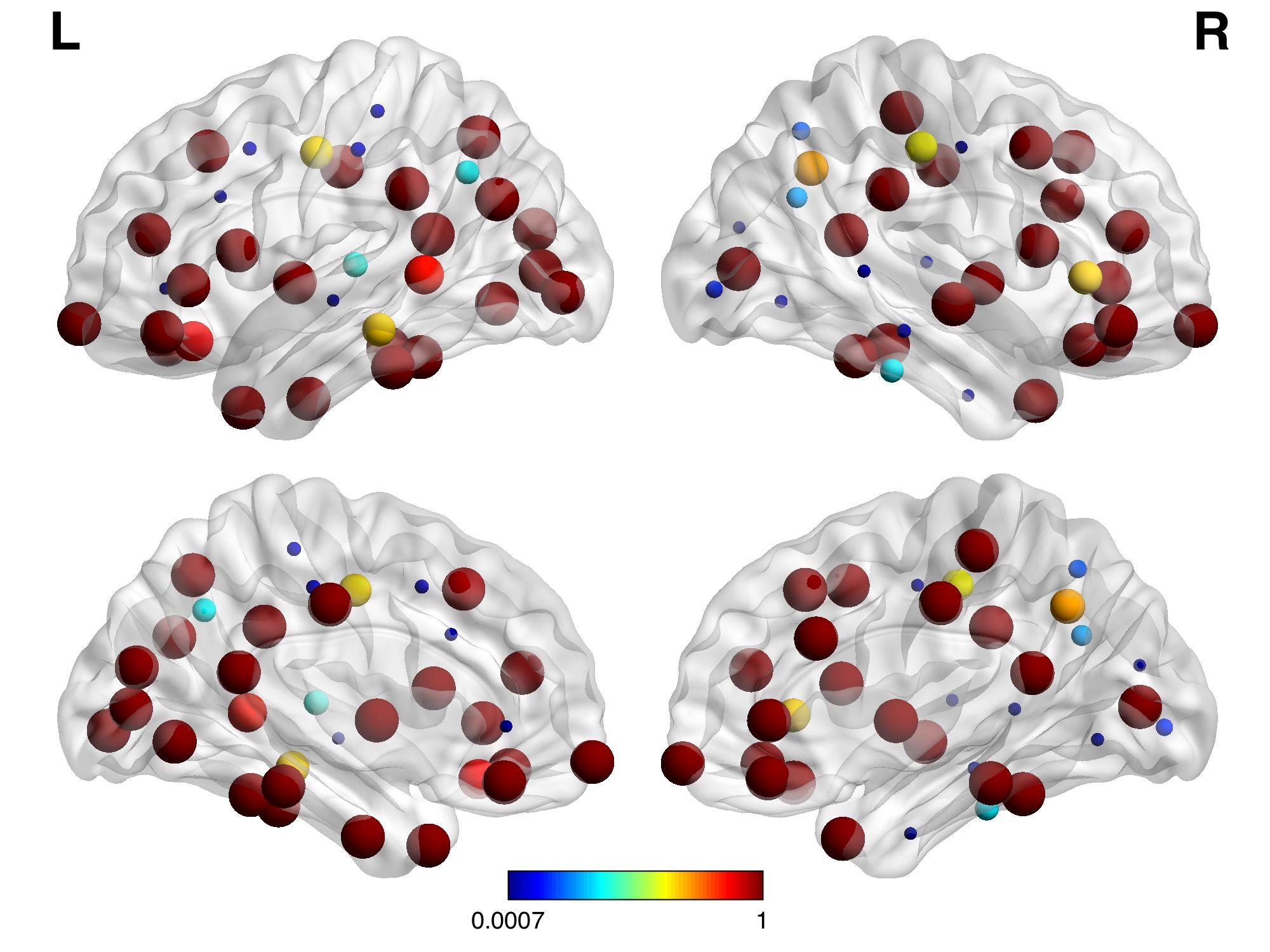}}\label{BNHC_node}
    \end{center}
\caption {Lateral and medial views of the brain (left and right hemispheres) showing all $68$ regions of interest (ROIs). The size and color of the ROIs vary according to the value of the posterior probabilities of them being actively related to the binary response for both BNLC and BNHC models.}\label{nodes}
\end{figure}

\begin{table}
\footnotesize
\begin{tabular}{ lll }
\hline
\hline
Hemisphere & Lobe & Node \\ \hline
\multirow{6}{*}{Left} & Temporal & fusiform, middle temporal gyrus, parahippocampal, temporal pole, transverse temporal\\
  & Cingulate & isthmus cingulate cortex\\
  & Frontal & pars opercularis, pars orbitalis, pars triangularis, frontal pole\\
  & Occipital & lingual\\
  & Parietal &  inferior parietal lobule, precuneus, supramarginal gyrus\\
     & Insula &  insula\\ \hline
\multirow{7}{*}{Right} & Temporal & parahippocampal, superior temporal gyrus, temporal pole  \\
 & Cingulate & caudal anterior cingulate, isthmus cingulate cortex\\
 & Frontal &  lateral orbitofrontal, medial orbitofrontal, pars opercularis, pars orbitalis,\\
 &         & rostral middle frontal gyrus, superior frontal gyrus\\
   & Occipital & pericalcarine\\
   & Parietal & supramarginal gyrus\\
    & Insula &  insula\\
 \hline
\hline
\end{tabular}
\caption{Nodes identified as influential by both BNLC and BNHC.}\label{Tab_chbin_nodes}
\end{table}

\begin{table}[!th]
\begin{center}
\begin{tabular}
[c]{cccc}
\hline
Top 100 & Top 200 & Top 300 \\
\hline
23 & 99 & 222 \\
\hline
\end{tabular}
\caption{Top 100 represents the number of edges common among the top $100$ edges identified by BNLC and BNHC. Top 200 and Top 300 are defined analogously.}\label{Tab_intersect_edge}
\end{center}
\end{table}

\begin{table}[!th]
\begin{center}
\begin{tabular}
[c]{c|cccccc}
\hline
\textbf{Method} & BNLC & BNHC & Lasso &  Reli\'{o}n(2017) & Binary  & Binary \\[-0.15in]
&                &   &  &                       & BL       & BHS \\
\hline
\textbf{AUC} & 0.617 & 0.598 & 0.532 & 0.466 &  0.461 & 0.484\\
\hline
\end{tabular}
\caption{Predictive performance of Bayesian Network Classification (BNC) vis-a-vis competitors in terms of Area Under Curve (AUC) of the ROC. AUC has been calculated in each case using 10-fold cross validation.}\label{Tab5}
\end{center}
\end{table}

\begin{figure}[!ht]
   \begin{center}
   \subfigure[BNLC]{\includegraphics[width=8 cm, height = 7.5 cm]{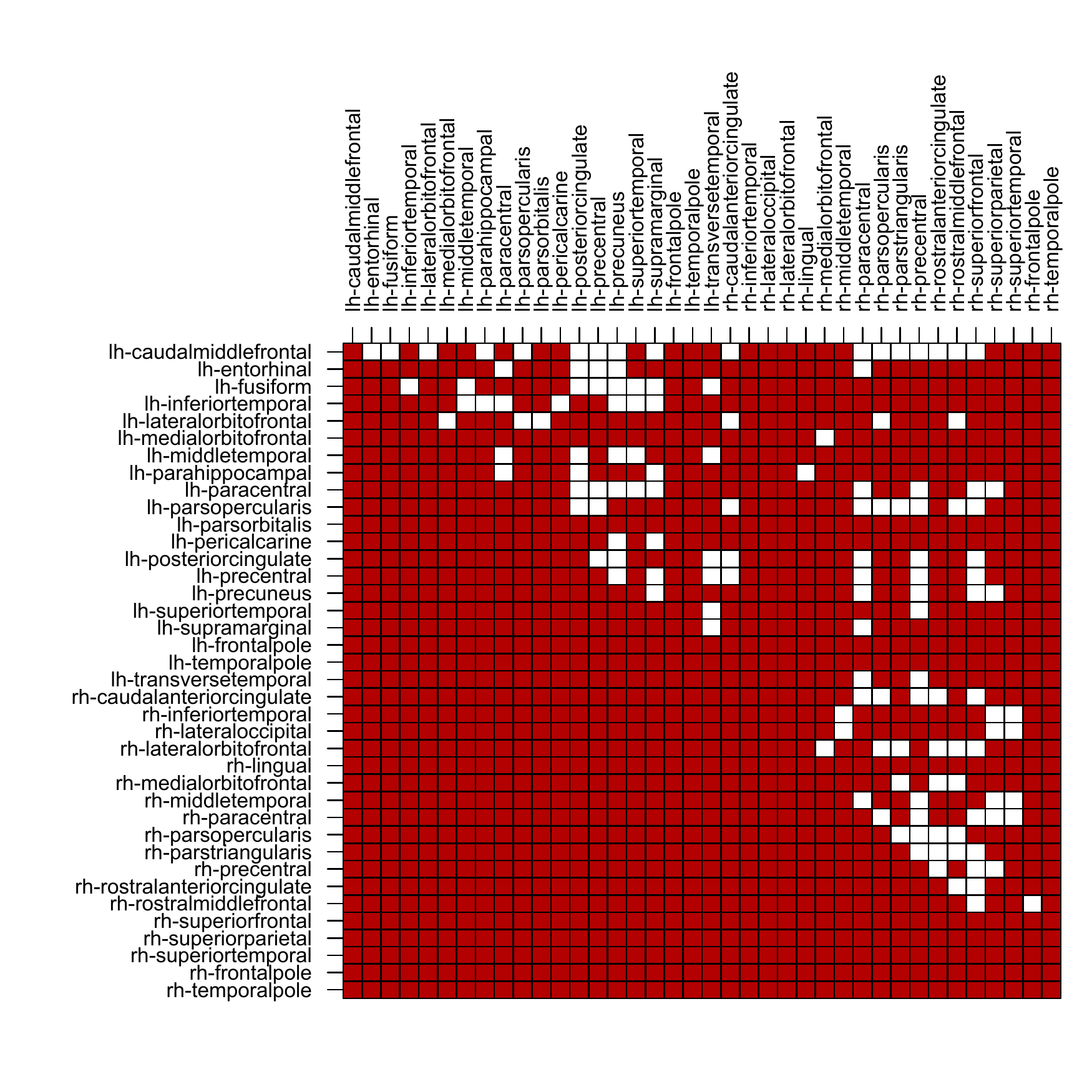}}\label{BNLC_edge}\\
   \subfigure[BNHC]{\includegraphics[width=8 cm, height = 7.5 cm]{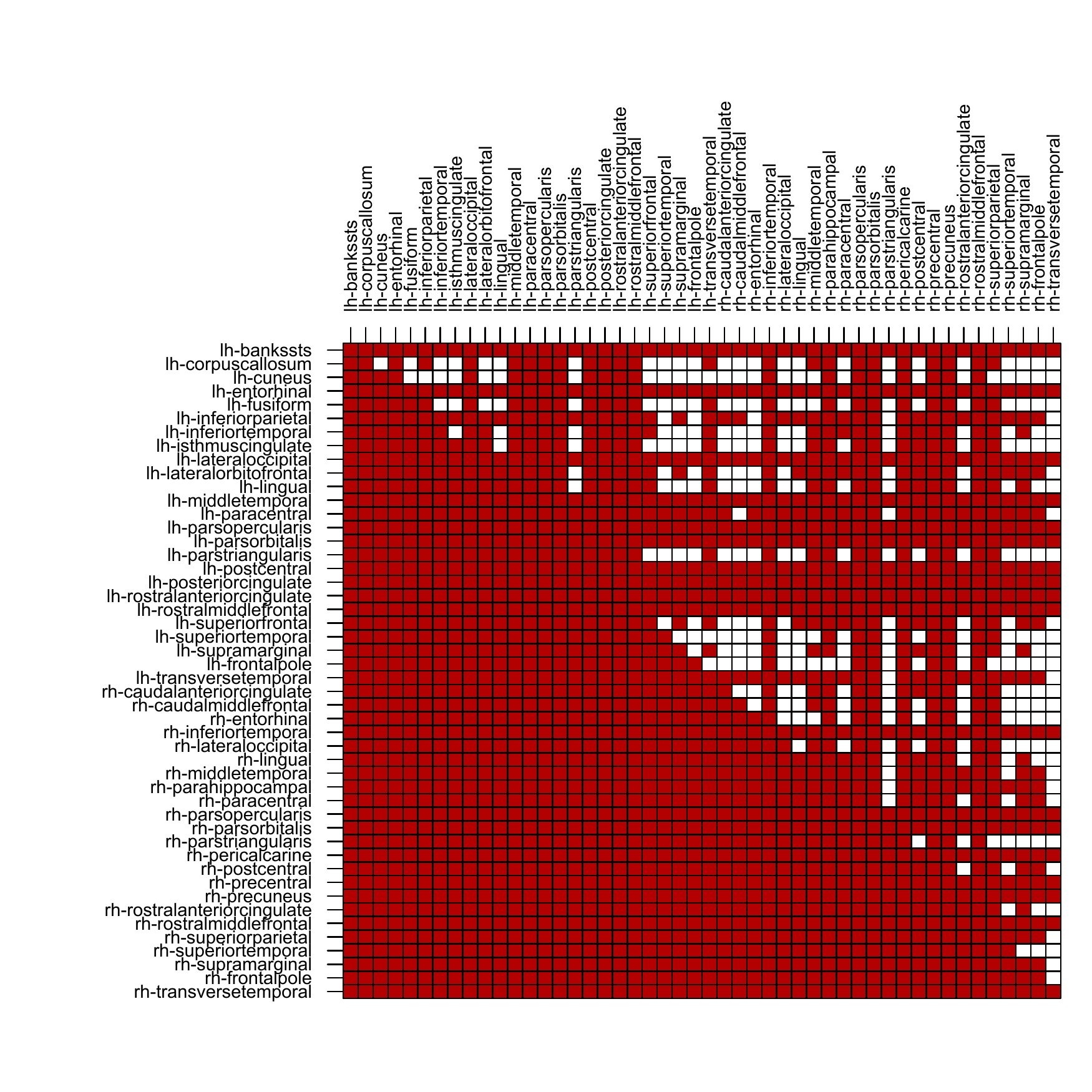}}\label{BNHC_edge}
    \end{center}
\caption {Plot showing whether an edge connecting two influential nodes is influential or not. Note that the map is a $M \times M$ symmetric matrix, where $M$ denotes the number of influential nodes, and each cell denotes an edge connecting the corresponding pair of nodes. The axis labels are the abbreviated names of the influential ROIs in the left (starting with `lh -') and the right (starting with `rh -') hemispheres of the brain. Full names of the ROIs can be obtained from the widely available Desikan brain atlas. A white cell represents an influential edge, while red cell represents a non-influential edge.}\label{edges}
\end{figure}

\subsection{Sensitivity to the choice of hyperparameters}\label{sense_real_362}

We have already discussed how the hyperparameters are chosen for the simulation studies and data analysis. To assess how sensitive the inferences from BNLC and BNHC are, we analyze BNLC and BNHC with different combinations of hyperparameters. Specifically for BNLC, we use the five different combinations (i)-(v) given in Section~\ref{sim_ch3_sens}, and three different combinations (i)'-(iii)' for BNHC also mentioned in Section~\ref{sim_ch3_sens}. We report performances on the number of influential nodes identified. We also find the number of influential edges connecting influential nodes.

\begin{table}[!th]
\begin{center}
\begin{tabular}
[c]{c|ccccc|ccc}
\hline
\multicolumn{1}{c|}{} &\multicolumn{5}{c|}{BNLC} & \multicolumn{3}{c}{BNHC}\\
\hline
Combinations & (i) & (ii) & (iii) &  (iv) & (v)  & (i)' & (ii)' & (iii)'\\
\hline
\# Nodes detected & 35 & 39 & 34 & 40 &  37 & 45 & 49 & 44\\
\# Intersections with original analysis & 34 & 36 & 34 & 37 & 37 & 42 & 45 & 43\\
\hline
\end{tabular}
\caption{Number of nodes identified as influential for all combinations are presented. The table also presents the number of intersections of influential nodes between different combinations and the original analysis.} \label{Tab_sensitivity_chap3_realdata}
\end{center}
\end{table}

\begin{table}[!th]
\begin{center}
\begin{tabular}
[c]{c|ccccc|ccc}
\hline
\multicolumn{1}{c|}{} &\multicolumn{5}{c|}{BNLC} & \multicolumn{3}{c}{BNHC}\\
\hline
Combinations & (i) & (ii) & (iii) &  (iv) & (v)  & (i)' & (ii)' & (iii)'\\
\hline
\# Edges detected & 122 & 113 & 125 & 118 &  107 & 272 & 265 & 262\\
\# Intersections with original analysis & 117 & 112 & 119 & 111 & 101 & 263 & 264 & 257\\
\hline
\end{tabular}
\caption{Number of edges identified as influential for all combinations are presented. The table also presents the number of intersections of influential nodes between different combinations and the original analysis.} \label{Tab_sensitivity_chap3_realdata_edges}
\end{center}
\end{table}

Table~\ref{Tab_sensitivity_chap3_realdata} records the number of nodes identified as influential and the number of intersections of influential nodes between different combinations and the original analysis.
Recall that the original analysis of BNLC identifies $38$ influential nodes. Since this is a high dimensional regression paradigm with number of parameters far exceeding the sample size, one expects the prior hyper-parameters to have some effect on the inference. Indeed, there is some variation in the number of identified nodes, though they largely agree with each other under different hyperparameter settings. In fact, we find a large number of intersections among the identified nodes in the original analysis with the nodes identified under different hyperparameter combinations. A similar story emerges from BNHC. We also find $31$ nodes identified by all hyperparameter combinations in BNLC. Similarly, $40$ nodes are identified by all hyperparameter combinations of BNHC. We calculate the number of influential edges among these $\binom{31}{2}$ edges and $\binom{40}{2}$ edges in BNLC and BNHC respectively, for all hyperparameter combinations. Table~\ref{Tab_sensitivity_chap3_realdata_edges} presents the number of edges detected as influential, as well as the number of intersecting edges with the original analysis. Again, due to the high dimensionality of the problem, the variation in the number of identified edges with different choices of hyperparameters is expected, though the variation turns out not to be very significant. 

Finally, to check sensitivity to the choice of $R$ on the performance of BNLC and BNHC, we run the data analysis for BNHC and BNLC with $R=8$ and $R=10$, and report the posterior mean of the effective dimensionality, along with AUC. Table~\ref{Tab_sensitivity_tpr_chap3_R_realdata} reports the posterior mean of effective dimensionality, which shows very moderate increase with increasing $R$. However, increasing $R$ seems to have almost no effect on AUC.

\begin{table}[!th]
\begin{center}
\begin{tabular}
[c]{c|ccc|ccc}
\hline
\multicolumn{1}{c|}{} &\multicolumn{3}{c|}{BNLC} & \multicolumn{3}{c}{BNHC}\\
\hline
 & $R=4$ & $R=8$ & $R=10$ & $R=4$ & $R=8$ & $R=10$\\
\hline
Posterior mean Eff. Dim. & 2.17 & 2.78 & 2.96 & 2.00 & 2.74 & 3.04 \\
AUC & 0.61 & 0.63 & 0.59 & 0.59 & 0.60 & 0.59\\
\hline
\end{tabular}
\caption{AUC and posterior mean of effective dimensionality for BNLC and BNHC under different choices of $R$.}\label{Tab_sensitivity_tpr_chap3_R_realdata}
\end{center}
\end{table}

\section{Conclusion}\label{conclusion}
We develop a binary Bayesian network regression model that enables classifying multiple networks with ``labeled nodes" into two groups, identifies influential network nodes and predicts the class in which a newly observed network belongs. Our contribution lies in carefully constructing a class of network global-local shrinkage priors on the network predictor coefficient while recognizing the latent network structure in the predictor variable. In particular, we investigate two specific network shrinkage priors from this general class, leading to two network classifiers BNLC and BNHC. Our extensive simulation study shows competitive performance between BNLC and BNHC in terms of inference and classification with no clear winner, and both of them are found to outperform other competitors. Another major contribution of the proposed framework remains theoretically understanding the Bayesian network classifier model with the Network Lasso shrinkage prior. Specifically, we develop theory guaranteeing accurate classification as the sample size tends to infinity. The theoretical developments allow the number of possible interconnections in the network predictor to grow at a faster rate than the sample size. We analyze a brain connectome dataset with brain connectivity networks between different regions of interest for multiple individuals, and information on whether an individual is in a \emph{low} or a \emph{high} IQ category. BNC shows satisfactory out of sample classification and identifies important brain regions actively influencing the FSIQ of an individual.
\bibliographystyle{natbib}
\bibliography{biblio}

\section{Appendix}
\subsection{Appendix A}\label{appA}

This section provides full conditionals for all the parameters in the Bayesian binary network regression with network lasso shrinkage prior on $\bgamma$.
Assume $\bW=(\bu_1'\bLambda\bu_2,...,\bu_1'\bLambda\bu_V,....,\bu_{V-1}'\bLambda\bu_V)'$, $\bD=diag(s_{1,2}^2,...,s_{V-1,V}^2)$ and $\bgamma=(\gamma_{1,2},...,\gamma_{V-1,V})'$.
Thus, with $n$ data points, the hierarchical model with the \emph{network lasso prior} in the binary setting can be written as
\begin{align*}
&\qquad\qquad\qquad\qquad\qquad \bt \sim \mathrm{N}(\mu+\bX\bgamma,\bOmega^{-1})\\
&\bgamma \sim \mathrm{N}(\bW,\bD),\:\:
\bu_k|\xi_k=1 \sim N(\bu_k \given \bzero, \bQ),\:\:\bu_k|\xi_k=0\sim \delta_{\bzero},\:\:\xi_k\sim Ber(\Delta),\:\:\mu \sim flat() \\
& s_{k,l}^2 \sim Exp(\theta^2/2),\:\:\:\:\theta^2 \sim Gamma(\zeta,\iota),\:\:\:\:
\bQ \sim IW(\nu,\bI),\:\:\:\:
\Delta \sim Beta(a_{\Delta},b_{\Delta})\\
&\qquad\qquad\qquad p(\omega_i) \sim PG(1,0),\:\: \lambda_r \sim Ber (\pi_{r}), \:\:
 \pi_{r} \sim Beta(1, r^{\eta}),\:\eta>1.
\end{align*}

The full conditional distributions of the model parameters are given below.
\begin{itemize}
\item $\mu \given - \sim N\left(\frac{{\boldsymbol 1}'\bOmega(\bt-\bX\bgamma)}{{\boldsymbol 1}'\bOmega {\boldsymbol 1}},\frac{1}{{\boldsymbol 1}'\bOmega {\boldsymbol 1}}\right)$
\item $\bgamma \given - \sim N(\bmu_{\bgamma \given \cdot}, \bSigma_{\bgamma \given \cdot})$,
where $\bmu_{\bgamma \given \cdot} = {(\bX' \bOmega \bX + \bD^{-1})}^{-1}(\bX' \bOmega (\bt-\mu{\boldsymbol 1}) + \bD^{-1}\bW)$ and $\bSigma_{\bgamma \given \cdot} = {(\bX' \bOmega \bX + \bD^{-1})}^{-1}$
\item $s_{k,l}^2 \given - \sim GIG\left[\frac{1}{2}, (\gamma_{k,l} - \bu_k '\bLambda \bu_l)^2, \theta^2 \right]$, where GIG denotes the generalized inverse Gaussian distribution.
\item $\theta^2 \given - \sim Gamma\left[ \left({\zeta} + \frac{V(V-1)}{2}\right), \left(\iota + \sum_{k < l} \frac{s_{k,l}^2}{2} \right) \right]$
\item $\bu_k \given - \sim   w_{\bu_k} \: \delta_0 (\bu_k)  +  (1 - w_{\bu_k}) \: N(\bu_k \given \bm_{\bu_k}, \bSigma_{\bu_k})$, where
$\bU^{\ast}_k=(\bu_1:\cdots:\bu_{k-1}:\bu_{k+1}:\cdots:\bu_{V})' \bLambda,\:\:\bH_k=diag(s_{1,k}^2,...,s_{k-1,k}^2,s_{k,k+1}^2,...,s_{k,V}^2),\:\:\bgamma_k=(\gamma_{1,k},...,\gamma_{k-1,k},\gamma_{k,k+1},...,\gamma_{k,V})$, and
\begin{align*}
&\bSigma_{\bu_k} = \left(\bU^{\ast'}_h\bH_k^{-1}\bU^{\ast}_k+\bQ^{-1}\right)^{-1},\:\:\bm_{\bu_k}=\bSigma_{\bu_k}\bU^{\ast'}_k\bH_k^{-1}\bgamma_k\\
& w_{\bu_k} = \frac{(1-\Delta)N(\bgamma_k\given \bzero,\bH_k)}{(1-\Delta)N(\bgamma_k\given \bzero,\bH_k)+\Delta N(\bgamma_k\given \bzero,\bH_k+\bU^{\ast}_k\bQ\bU^{\ast'}_k)}
\end{align*}
\item $\xi_k|-\sim Ber(1 - w_{\bu_k})$
\item $\Delta \given - \sim Beta\left[(a_{\Delta} + \sum_{k = 1}^{V} \xi_k),   (b_{\Delta} + \sum_{k = 1}^{V} (1 - \xi_k))\right]$.
\item $\bQ \given - \sim IW [(\nu + \{\#k : \bu_k \neq \bzero \}),(\bI + \sum_{k : \bu_k \neq \bzero} \bu_k\bLambda \bu_k') ]$.
\item $\lambda_r \given - \sim Ber(p_{\lambda_r})$, where
       $p_{\lambda_r}=\frac{\pi_{r}N(\bgamma\given \bW_1,\bD)}{\pi_{r}N(\bgamma\given \bW_1,\bD)+
       (1-\pi_{r})N(\bgamma\given \bW_0,\bD)}$. Here \\ $\bW_1=(\bu_1'\bLambda_1\bu_2,...,\bu_1'\bLambda_1\bu_V,....,\bu_{V-1}'\bLambda_1\bu_V)'$, $\bW_0=(\bu_1'\bLambda_0\bu_2,...,\bu_1'\bLambda_0\bu_V,....,\bu_{V-1}'\bLambda_0\bu_V)'$,
       $\bLambda_1=diag(\lambda_1,..,\lambda_{r-1},1,\lambda_{r+1},..,\lambda_R)$, $\bLambda_0=diag(\lambda_1,..,\lambda_{r-1},0,\lambda_{r+1},..,\lambda_R)$, for $r=1,..,R$.
\item $\pi_{r}\given - \sim Beta(\lambda_r+1,1-\lambda_r+r^{\eta})$, for $r=1,..,R$.

Using the relationship, $PG(x \given b,c) \propto \exp(- \frac{c^2 x}{2}) PG (x \given 1,0)$ \cite{polson2013bayesian}, we obtain
\item $\omega_i \given - \sim PG(1,\mu + \bx'_i \bgamma)$, for $i=1,..,n$. \\
\end{itemize}

\subsection{Appendix B}\label{appB}

This section provides full conditionals for all the parameters in the Bayesian network classifier model introduced in this article with Bayesian network horseshoe prior.
Assume $\bW=(\bu_1'\bLambda\bu_2,...,\bu_1'\bLambda\bu_V,....,\bu_{V-1}'\bLambda\bu_V)'$, $\bD=diag(\sigma^2s^2_{1,2},...,\sigma^2s^2_{V-1,V})$ and $\bgamma=(\gamma_{1,2},...,\gamma_{V-1,V})'$.
Thus, with $n$ data points, the hierarchical model with the network horseshoe prior in the binary setting can be written as
\begin{align*}
&\qquad\qquad\qquad\qquad\qquad \bt \sim \mathrm{N}(\mu+\bX\bgamma,\bOmega^{-1})\\
&\bgamma \sim \mathrm{N}(\bW,\bD),\:\:
\bu_k|\xi_k=1 \sim N(\bu_k \given \bzero, \bQ),\:\:\bu_k|\xi_k=0\sim \delta_{\bzero},\:\:\xi_k\sim Ber(\Delta),\:\:\mu \sim flat() \\
& s_{k,l} \sim C^+(0,1),\:\:\:\:\sigma \sim C^+(0,1),\:\:\:\:
\bQ \sim IW(\nu,\bI),\:\:\:\:
\Delta \sim Beta(a_{\Delta},b_{\Delta})\\
&\qquad\qquad\qquad p(\omega_i) \sim PG(1,0),\:\: \lambda_r \sim Ber (\pi_{r}), \:\:
 \pi_{r} \sim Beta(1, r^{\eta}),\:\eta>1.
\end{align*}

Note that, following \cite{makalic2015simple},
\begin{align*}
& s_{k,l} \sim C^+(0,1),\:\:\:\:\sigma \sim C^+(0,1)\:\:\:\:
\end{align*}

can be written in an augmented form as
\begin{align*}
& s^2_{k,l} \given \nu_{k,l} \sim IG\left(\frac{1}{2},\frac{1}{\nu_{k,l}}\right),\:\:\:\:\nu_{k,l} \sim IG\left(\frac{1}{2},1\right),\:\:\:\:\sigma^2 \given \sigma_2 \sim IG\left(\frac{1}{2},\frac{1}{\sigma_2}\right),\:\:\:\:\sigma_2 \sim IG\left(\frac{1}{2},1\right).
\end{align*}

With the model formulation described above, the full conditional distributions of the model parameters are given by the following distributions:
\begin{itemize}
\item $\mu \given - \sim N\left(\frac{{\boldsymbol 1}'\bOmega(\bt-\bX\bgamma)}{{\boldsymbol 1}'\bOmega {\boldsymbol 1}},\frac{1}{{\boldsymbol 1}'\bOmega {\boldsymbol 1}}\right)$
\item $\bgamma \given - \sim N(\bmu_{\bgamma \given \cdot}, \bSigma_{\bgamma \given \cdot})$,
where $\bmu_{\bgamma \given \cdot} = {(\bX' \bOmega \bX + \bD^{-1})}^{-1}(\bX' \bOmega (\bt-\mu{\boldsymbol 1}) + \bD^{-1}\bW)$ and $\bSigma_{\bgamma \given \cdot} = {(\bX' \bOmega \bX + \bD^{-1})}^{-1}$
\item $s^2_{k,l} \given - \sim IG\left[1, (\frac{1}{\nu_{k,l}} + \frac{(\gamma_{k,l} - \bu_k '\bLambda \bu_l)^2}{2\sigma^2}) \right]$
\item $\sigma^2 \given - \sim IG \left[ \left(\frac{1}{2} + \frac{V(V-1)}{4}\right), \left(\frac{1}{\sigma_2} +  \sum_{k < l} \frac{(\gamma_{k,l} - \bu_k '\bLambda \bu_l)^2}{ 2s^2_{k,l} }\right) \right]$
\item $\nu_{k,l} \given - \sim IG\left[1, (1 + \frac{1}{s^2_{k,l}}) \right]$
\item $\sigma_2 \given - \sim IG\left[1, (1 + \frac{1}{\sigma^2}) \right]$

\item $\bu_k \given - \sim   w_{\bu_k} \: \delta_0 (\bu_k)  +  (1 - w_{\bu_k}) \: N(\bu_k \given \bm_{\bu_k}, \bSigma_{\bu_k})$, where
$\bU^{\ast}_k=(\bu_1:\cdots:\bu_{k-1}:\bu_{k+1}:\cdots:\bu_{V})' \bLambda,\:\:\bH_k=diag(s_{1,k}^2,...,s_{k-1,k}^2,s_{k,k+1}^2,...,s_{k,V}^2),\:\:\bgamma_k=(\gamma_{1,k},...,\gamma_{k-1,k},\gamma_{k,k+1},...,\gamma_{k,V})$, and
\begin{align*}
&\bSigma_{\bu_k} = \left(\bU^{\ast'}_h\bH_k^{-1}\bU^{\ast}_k/\sigma^2+\bQ^{-1}\right)^{-1},\:\:\bm_{\bu_k}=\bSigma_{\bu_k}\bU^{\ast'}_k\bH_k^{-1}\bgamma_k/\sigma^2\\
& w_{\bu_k} = \frac{(1-\Delta)N(\bgamma_k\given \bzero,\sigma^2\bH_k)}{(1-\Delta)N(\bgamma_k\given \bzero,\sigma^2\bH_k)+\Delta N(\bgamma_k\given \bzero,\sigma^2\bH_k+\bU^{\ast}_k\bQ\bU^{\ast'}_k)}
\end{align*}
\item $\xi_k|-\sim Ber(1 - w_{\bu_k})$
\item $\Delta \given - \sim Beta\left[(a_{\Delta} + \sum_{k = 1}^{V} \xi_k),   (b_{\Delta} + \sum_{k = 1}^{V} (1 - \xi_k))\right]$.
\item $\bQ \given - \sim IW [(\nu + \{\#k : \bu_k \neq \bzero \}),(\bI + \sum_{k : \bu_k \neq \bzero} \bu_k\bLambda \bu_k') ]$.
\item $\lambda_r \given - \sim Ber(p_{\lambda_r})$, where
       $p_{\lambda_r}=\frac{\pi_{r}N(\bgamma\given \bW_1,\sigma_2^2\bD)}{\pi_{r}N(\bgamma\given \bW_1,\sigma_2^2\bD)+
       (1-\pi_{r})N(\bgamma\given \bW_0,\sigma_2^2\bD)}$. Here \\ $\bW_1=(\bu_1'\bLambda_1\bu_2,...,\bu_1'\bLambda_1\bu_V,....,\bu_{V-1}'\bLambda_1\bu_V)'$, $\bW_0=(\bu_1'\bLambda_0\bu_2,...,\bu_1'\bLambda_0\bu_V,....,\bu_{V-1}'\bLambda_0\bu_V)'$,
       $\bLambda_1=diag(\lambda_1,..,\lambda_{r-1},1,\lambda_{r+1},..,\lambda_R)$, $\bLambda_0=diag(\lambda_1,..,\lambda_{r-1},0,\lambda_{r+1},..,\lambda_R)$, for $r=1,..,R$.
\item $\pi_{r}\given - \sim Beta(\lambda_r+1,1-\lambda_r+r^{\eta})$, for $r=1,..,R$.

Using the relationship, $PG(x \given b,c) \propto \exp(- \frac{c^2 x}{2}) PG (x \given b,0)$ \cite{polson2013bayesian}, we obtain
\item $\omega_i \given - \sim PG(1,\mu + \bx'_i \bgamma)$, for $i=1,..,n$. \\
\end{itemize}


\subsection{Appendix C}\label{appCW}
Similar to the assumptions made by \cite{wei2017contraction} in their proof of posterior consistency for binary logistic regression, we prove our results assuming that the centering parameter $\mu=0$ in both the true and the data generating models. We note that the main structure of the proof will remain unchanged with this assumption and the result proved in this article can be trivially extended to the setting with nonzero $\mu$.

We begin by defining some notations. In the proof, $\Pi(\cdot)$ will be used to denote the generic probability notation.
We define the notation of the log-likelihood function by
\begin{align}\label{abar}
w_{\bgamma,n}(\by_n)=\sum\limits_{i=1}^{n}[(\bx_i'\bgamma)y_i-z(\bx_i'\bgamma)],\:\:z(\bx_i'\bgamma)=\log(1+\exp(\bx_i'\bgamma)).
\end{align}
We also introduce the function $C_{\by_n,n}(\cdot)$ to quantify the curvature of $w_{\bgamma,n}(\by_n)$ around $\bgamma^{(0)}$,
\begin{align}\label{curvature}
C_{\by_n,n}(\bgamma)=w_{\bgamma,n}(\by_n)-w_{\bgamma^{(0)},n}(\by_n)-\nabla w_{\bgamma^{(0)},n}(\by_n)'(\bgamma-\bgamma^{(0)}),
\end{align}
where $\nabla w_{\bgamma^{(0)},n}(\by_n)$ is the derivative of $w_{\bgamma^{(0)},n}(\by_n)$ w.r.t. $\bgamma$, evaluated at $\bgamma^{(0)}$. Also the likelihood $p_{\bgamma}(\by_n)$ can be written using the above notations as
$p_{\bgamma}(\by_n)=\prod_{i=1}^n \exp(w_{\bgamma,n}(y_i))$. The notations $E_{\bgamma}(\cdot)$ and $E_{\bgamma^{(0)}}(\cdot)$ have been reserved to denote expectation w.r.t the distribution of $\by_n|\bgamma$ and $\by_n|\bgamma^{(0)}$ respectively.

The proof of Theorem \ref{theorem:main} relies in part on the existence of exponentially consistent sequence of tests.
\paragraph{\bf Definition}
An exponentially consistent sequence of test functions $\Phi_n$ for testing $H_0:\bgamma=\bgamma^{0}$ vs. $H_1:\bgamma\in
\mathcal{A}_n^c$ satisfies
\begin{equation*}
E_{\bgamma^{0}}(\Phi_n)\leq d_1\exp(-h_1n), \qquad \sup\limits_{\bgamma\in\mathcal{A}_n^c}E_{\bgamma}(1-\Phi_n)\leq d_2\exp(-h_2n)
\end{equation*}
for some $d_1, d_2, h_1, h_2>0$.
\begin{lemma}\label{lemma:numconc}
For some $h>0$, there exists a sequence of test functions for testing $H_0:\bgamma=\bgamma^{0}$ vs. $H_1:\bgamma\in \mathcal{A}_n^c$, which satisfy
\begin{equation}\label{exp_cons}
E_{\bgamma^{0}}(\Phi_n)\leq \exp(-h n), \qquad \sup\limits_{\bgamma\in\mathcal{A}_n^c}E_{\bgamma}(1-\Phi_n)\leq \exp(-h n).
\end{equation}
\end{lemma}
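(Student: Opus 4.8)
The plan is to follow the classical construction of exponentially consistent tests (Ghosal--van der Vaart, as specialized to binary regression in \cite{wei2017contraction}): first build a test for each single alternative from a bounded linear statistic in the responses that is centered under the null, then take a maximum over a suitably pruned finite family of such tests so that the type II error stays uniformly controlled while the type I error only picks up a union-bound factor equal to the cardinality of the family.

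For the single-alternative piece, fix $\bgamma_1\in\mathcal{A}_n^c$, write $f_i^{(0)}=f_{\bgamma^{(0)}}(\bx_i)$, and use that, since the logistic link is strictly increasing and $\mu=0$ in the working model of Appendix C, $\mathrm{sign}(f_{\bgamma_1}(\bx_i)-f_i^{(0)})=\mathrm{sign}(\bx_i'(\bgamma_1-\bgamma^{(0)}))$. I would take $\phi_{\bgamma_1}=\mathbf{1}\{\tfrac1n\sum_{i=1}^n s_i(\bgamma_1)(y_i-f_i^{(0)})>\epsilon/2\}$ with $s_i(\bgamma_1)=\mathrm{sign}(f_{\bgamma_1}(\bx_i)-f_i^{(0)})$. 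Under $P_{\bgamma^{(0)}}$ each summand lies in $[-1,1]$ and has mean zero, so Hoeffding's inequality gives $E_{\bgamma^{(0)}}\phi_{\bgamma_1}\le e^{-n\epsilon^2/8}$. Under $P_{\bgamma_1}$ the $i$th summand has mean $|f_{\bgamma_1}(\bx_i)-f_i^{(0)}|$, hence the statistic has mean $\tfrac1n\sum_i|f_{\bgamma_1}(\bx_i)-f_i^{(0)}|>\epsilon$ --- precisely the defining inequality of $\mathcal{A}_n^c$ --- and a second Hoeffding bound yields $E_{\bgamma_1}(1-\phi_{\bgamma_1})\le e^{-n\epsilon^2/8}$.

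To combine, note that $\phi_{\bgamma_1}$ depends on $\bgamma_1$ only through the sign vector $\bs(\bgamma_1)\in\{-1,0,1\}^n$, so I set $\Phi_n=\max_{\bs\in\mathcal{S}_n}\mathbf{1}\{\tfrac1n\sum_i s_i(y_i-f_i^{(0)})>\epsilon/2\}$, where $\mathcal{S}_n$ is the set of realizable sign vectors. Then $\Phi_n\ge\phi_{\bgamma_1}$ for every $\bgamma_1\in\mathcal{A}_n^c$, so $\sup_{\bgamma_1\in\mathcal{A}_n^c}E_{\bgamma_1}(1-\Phi_n)\le e^{-n\epsilon^2/8}$ holds automatically, whereas a union bound gives $E_{\bgamma^{(0)}}\Phi_n\le|\mathcal{S}_n|\,e^{-n\epsilon^2/8}$. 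Everything thus reduces to showing $\log|\mathcal{S}_n|=o(n)$, and this counting step is where the structural assumptions must enter; I expect it to be the main obstacle, precisely because the paper is interested in the regime $q_n\gg n$, in which the set of all sign patterns of $(\bx_i'v)_i$ is far too large.

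To bound $|\mathcal{S}_n|$ I would use that $\bs(\bgamma_1)$ is the sign pattern of $(\bx_i'v)_i$ with $v=\bgamma_1-\bgamma^{(0)}$, together with the fact --- established separately in the main proof by a prior-mass bound that leans on Condition (F) and the Network Lasso hierarchy --- that the posterior concentrates on a sieve $\mathcal{F}_n$ of ``low-rank-plus-nearly-sparse'' coefficients: $v$ built from $O(V_n)$ free latent coordinates (the $\bu_k$, with $R$ and $\bLambda$ fixed) plus a residual with $O(s_{2,n}^0)$ appreciable entries. Restricting $\mathcal{S}_n$ to sign patterns generated by such $v$, the count is at most $\binom{q_n}{k_n}$ (with $k_n=O(s_{2,n}^0)$) times a Warren/VC-type factor $\exp\!\big(O((V_n+s_{2,n}^0)\log n)\big)$, i.e. at most $\exp\!\big(O(V_n\log n+s_{2,n}^0\log q_n)\big)$, which is $e^{o(n)}$ by Conditions (B) and (D) (using also $\log q_n\asymp\log n$). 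Picking any $h<\epsilon^2/8$ then makes both inequalities in \eqref{exp_cons} hold for all large $n$. The two places I anticipate real work are (i) making $\mathcal{F}_n$ explicit and checking it carries asymptotically all of the prior mass, so that pruning $\mathcal{S}_n$ down to $\mathcal{F}_n$-generated sign patterns is legitimate, and (ii) the entropy estimate that converts ``low rank'' into a polynomial-in-$n$ bound on the number of sign patterns.
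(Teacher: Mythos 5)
Your single-alternative tests are fine, and they coincide with the construction the paper itself relies on: the paper's proof of this lemma is nothing more than a pointer to Theorem 2 and Lemma 4 of \cite{ghosal2006posterior}, whose per-alternative test is exactly your sign-weighted Hoeffding statistic. The genuine gap is the uniformization step, and it is not a deferrable technicality. In the regime the paper targets, $q_n=V_n(V_n-1)/2\geq n$ and a generic design matrix $\bX$ has full row rank $n$, so $v\mapsto(\bx_1'v,\ldots,\bx_n'v)$ is surjective and every sign vector in $\{-1,+1\}^n$ is realizable; hence $\log|\mathcal{S}_n|\asymp n$ and your union bound gives a type I error of order $2^n e^{-n\epsilon^2/8}$, which is vacuous for small $\epsilon$ (and the lemma is needed for every $\epsilon>0$). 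Moreover, no sharper count can rescue the unrestricted maximum: take $\bgamma^{(0)}=\bzero$, pick $\delta$ with $1/(1+e^{-\delta})=1/2+\epsilon'$ for some $\epsilon'>\epsilon$, and for each $s\in\{-1,+1\}^n$ choose $\bgamma_s$ with $\bx_i'\bgamma_s=s_i\delta$ (possible under full row rank); each $\bgamma_s\in\mathcal{A}_n^c$, yet the uniform mixture of the product Bernoulli laws $P_{\bgamma_s}$ coincides exactly with the null law, so for any test the type I error plus the worst-case type II error over $\mathcal{A}_n^c$ is at least one. Uniform exponential power over all of $\mathcal{A}_n^c$ therefore cannot be obtained from your construction without restricting the alternative.

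Your proposed restriction --- pruning $\mathcal{S}_n$ to sign patterns generated by a low-rank-plus-nearly-sparse sieve $\mathcal{F}_n$ said to carry essentially all prior mass --- is not carried out, and as sketched it does not prove the statement. The Network Lasso prior is a continuous shrinkage prior, so it puts zero mass on exactly sparse residuals; $\mathcal{F}_n$ would have to be defined by thresholding, and the discarded small-but-dense coordinates still perturb all $n$ linear functionals, so the sign patterns generated by $\mathcal{F}_n$ are not captured by your $\binom{q_n}{k_n}$-times-Warren count --- handling them requires an approximation/covering argument (showing the test built for a nearby sieve point retains power at $\bgamma_1$) that is absent. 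The prior-mass input you want to lean on (Condition (F) via Step 3 of the paper's argument) is a \emph{lower} bound on mass near $\bgamma^{(0)}$, not an \emph{upper} bound on the mass of $\mathcal{F}_n^c$, and no sieve-complement bound appears anywhere in the paper. Finally, even if all of this were supplied, you would obtain tests uniformly powerful only over $\mathcal{A}_n^c\cap\mathcal{F}_n$, a strictly weaker statement than the lemma, and Step 2 of the proof of Theorem~\ref{theorem:main} --- which bounds $E_{\bgamma^{(0)}}[(1-\Phi_n)\mathcal{N}_n]$ by $\sup_{\bgamma\in\mathcal{A}_n^c}E_{\bgamma}(1-\Phi_n)$ --- would then require an additional term controlling the posterior mass of $\mathcal{F}_n^c$. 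In short, your per-point test matches the construction the paper cites, but the combining step, which you correctly identify as the crux, is left open and cannot be closed in the form you propose.
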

\begin{proof}
The construction of the test is provided in the proof of Theorem 2 and Lemma 4 in \cite{ghosal2006posterior}.
\end{proof}
We also state another result which will be subsequently used in the proof.
\begin{lemma}\label{lemma:priorc}
Let $\bu_k^{(0)}=(u_{k,1}^{(0)},...,u_{k,R}^{(0)})'$ for $k=1,..,V_n$, and $\upsilon_{k,l}$ be the only positive root of the equation
\begin{align}\label{rooteq}
x^2+x(||\bu_k^{(0)}||_2+||\bu_l^{(0)}||_2)-\eta_1=0,\:\:k<l.
\end{align}
Assume $\upsilon={\rm min}_{k,l}\:\upsilon_{k,l}$. Then, for $\bW=(\bu_1'\bu_2,...,\bu_{V_n-1}'\bu_{V_n})'$ and
$\bW^{(0)}=(\bu_1^{(0)'}\bu_2^{(0)},...,\bu_{V_n-1}^{(0)'}\bu_{V_n}^{(0)})'$
\begin{align}\label{Multifactor}
\Pi(||\bW-\bW^{(0)}||_{\infty}<\eta_1)\geq \Pi(||\bu_k-\bu_k^{(0)}||_2\leq\upsilon,\:\forall\:k=1,..,V_n).
\end{align}
\end{lemma}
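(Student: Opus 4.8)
This lemma is the device that converts a sup-norm neighborhood of the low-rank mean vector $\bW^{(0)}$ into a product-type event on the individual latent positions $\bu_k$, and it will be used inside the proof of Theorem~\ref{theorem:main} to lower-bound the prior mass of a Kullback--Leibler-type neighborhood of $\bgamma^{(0)}$. The plan is to prove it deterministically --- as a containment of events --- and then invoke monotonicity of $\Pi$. Concretely, I would show
\[
\Bigl\{\,\|\bu_k-\bu_k^{(0)}\|_2\le\upsilon\ \text{ for all }k=1,\dots,V_n\,\Bigr\}\ \subseteq\ \Bigl\{\,\|\bW-\bW^{(0)}\|_\infty<\eta_1\,\Bigr\}
\]
up to a prior-null set. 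Since $\|\bW-\bW^{(0)}\|_\infty=\max_{1\le k<l\le V_n}\bigl|\bu_k'\bu_l-\bu_k^{(0)'}\bu_l^{(0)}\bigr|$, it suffices to control each pairwise inner-product difference by $\eta_1$ on the left-hand event, \emph{uniformly} over the $\binom{V_n}{2}$ pairs --- which is exactly the reason for taking $\upsilon=\min_{k,l}\upsilon_{k,l}$.

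For a fixed pair $k<l$, on the event $\|\bu_k-\bu_k^{(0)}\|_2\le\upsilon$ and $\|\bu_l-\bu_l^{(0)}\|_2\le\upsilon$, I would use the telescoping identity
\[
\bu_k'\bu_l-\bu_k^{(0)'}\bu_l^{(0)}=(\bu_k-\bu_k^{(0)})'\bu_l+\bu_k^{(0)'}(\bu_l-\bu_l^{(0)}),
\]
then the triangle and Cauchy--Schwarz inequalities together with $\|\bu_l\|_2\le\|\bu_l^{(0)}\|_2+\|\bu_l-\bu_l^{(0)}\|_2\le\|\bu_l^{(0)}\|_2+\upsilon$, to arrive at
\[
\bigl|\bu_k'\bu_l-\bu_k^{(0)'}\bu_l^{(0)}\bigr|\ \le\ \upsilon\bigl(\|\bu_l^{(0)}\|_2+\upsilon\bigr)+\|\bu_k^{(0)}\|_2\,\upsilon\ =\ \upsilon^2+\upsilon\bigl(\|\bu_k^{(0)}\|_2+\|\bu_l^{(0)}\|_2\bigr).
\]
The quadratic on the left of (\ref{rooteq}) is negative at $x=0$, tends to $+\infty$ as $x\to\infty$, and is strictly increasing on $[0,\infty)$, so (for $\eta_1>0$) it has a unique positive root $\upsilon_{k,l}$; hence the map $x\mapsto x^2+x(\|\bu_k^{(0)}\|_2+\|\bu_l^{(0)}\|_2)$ is strictly increasing on $[0,\infty)$, and since $\upsilon\le\upsilon_{k,l}$ it follows that $\upsilon^2+\upsilon(\|\bu_k^{(0)}\|_2+\|\bu_l^{(0)}\|_2)\le\upsilon_{k,l}^2+\upsilon_{k,l}(\|\bu_k^{(0)}\|_2+\|\bu_l^{(0)}\|_2)=\eta_1$. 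Maximizing over $k<l$ gives $\|\bW-\bW^{(0)}\|_\infty\le\eta_1$ on the left-hand event, which is (\ref{Multifactor}) up to the boundary issue discussed next.

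The only subtlety is strictness: the displayed chain yields $\le\eta_1$, and equality can occur only for a pair with $\upsilon_{k,l}=\upsilon$ and only when the Cauchy--Schwarz and triangle steps are simultaneously tight with $\|\bu_k-\bu_k^{(0)}\|_2=\|\bu_l-\bu_l^{(0)}\|_2=\upsilon$; across the finitely many minimizing pairs this describes a lower-dimensional set, hence a prior-null set under the continuous (slab) components of the prior on the $\bu_k$'s, and it may be discarded without changing the right-hand probability in (\ref{Multifactor}). I do not foresee any real difficulty here: the content is the elementary bilinear estimate above, the one genuine design choice is that a single radius $\upsilon=\min_{k,l}\upsilon_{k,l}$ controls all $\binom{V_n}{2}$ inner-product differences at once, and the strictness/measure-zero bookkeeping is the only mildly fiddly step.
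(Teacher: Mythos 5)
Your proof is correct and follows essentially the same route as the paper's: the same telescoping decomposition of $\bu_k'\bu_l-\bu_k^{(0)'}\bu_l^{(0)}$, Cauchy--Schwarz together with $\|\bu_l\|_2\le\|\bu_l^{(0)}\|_2+\upsilon$, and monotonicity of the quadratic in (\ref{rooteq}) at $\upsilon\le\upsilon_{k,l}$, followed by monotonicity of $\Pi$. The only difference is your explicit handling of the strict-versus-nonstrict boundary case, a minor point the paper's proof passes over silently.
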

\begin{proof}
for $k<l$,
\begin{align*}
|\bu_k'\bu_l-\bu_k^{(0)'}\bu_l^{(0)}| &=|\sum\limits_{r=1}^{R}u_{k,r}u_{l,r}-\sum\limits_{r=1}^{R}u_{k,r}^{(0)}u_{l,r}^{(0)}|\\
& \leq |\sum\limits_{r=1}^{R}(u_{k,r}-u_{k,r}^{(0)})u_{lr}|+|\sum\limits_{r=1}^{R}(u_{l,r}-u_{l,r}^{(0)})u_{k,r}^{(0)}|\\
&\leq ||\bu_k-\bu_k^{(0)}||_2||\bu_l||_2+||\bu_l-\bu_l^{(0)}||_2||\bu_k^{(0)}||_2\\
&\leq ||\bu_k-\bu_k^{(0)}||_2\left[||\bu_l-\bu_l^{(0)}||_2+||\bu_l^{(0)}||_2\right]+||\bu_l-\bu_l^{(0)}||_2||\bu_k^{(0)}||_2.
\end{align*}
If $||\bu_k-\bu_k^{(0)}||_2\leq\upsilon,\:\forall\:k=1,..,V_n$, the above inequality implies
\begin{align*}
|\bu_k'\bu_l-\bu_k^{(0)'}\bu_l^{(0)}|\leq \upsilon(\upsilon+||\bu_l^{(0)}||_2)+\upsilon||\bu_k^{(0)}||_2\leq \eta_1,\:\forall\:k<l.
\end{align*}
Hence $\Pi(||\bW-\bW^{(0)}||_{\infty}<\eta_1)\geq \Pi(||\bu_k-\bu_k^{(0)}||_2\leq\upsilon,\:\forall\:k=1,..,V_n)$.
\end{proof}

\noindent \underline{\textbf{Proof of Theorem~\ref{theorem:main}}}\\
Suppose $\mathcal{E}_n=\left\{\by:||\nabla w_{\bgamma^{(0)},n}(\by)||_{\infty}\leq 2\sqrt{nq_n}\right\}$. Then the probability of the vector $\by_n$ belonging to the set $\mathcal{E}_n$ is given by,
\begin{align*}
P_{\bgamma^{(0)}}(\by_n\in\mathcal{E}_n)\geq 1-P_{\bgamma^{(0)}}(\max\limits_{1\leq j\leq q_n}|\sum\limits_{i=1}^{n}(y_i-\nabla z(\bx_i'(\bgamma-\bgamma^{(0)})))x_{ij}|>2\sqrt{nq_n})\geq 1-\frac{2}{q_n},
\end{align*}
where the last step follows from the Hoeffding inequality. Note that as $n\rightarrow \infty$, $q_n\rightarrow\infty$, hence
$P_{\bgamma^{(0)}}(\by_n\in\mathcal{E}_n)\rightarrow 1$. Hence, in the subsequent proof we can assume without loss of generality that $\by_n\in\mathcal{E}_n$.
It can be observed that
\begin{align}
\Pi_n(\mathcal{A}_n^c) =\frac{\int_{\mathcal{A}_n^c}p_{\bgamma}(\by_n)\pi_n(\bgamma)}{\int p_{\bgamma}(\by_n)\pi_n(\bgamma)}
=\frac{\int_{\mathcal{A}_n^c}\frac{p_{\bgamma}(\by_n)}{p_{\bgamma^{(0)}}(\by_n)}\pi_n(\bgamma)}{\int \frac{p_{\bgamma}(\by_n)}{p_{\bgamma^{(0)}}(\by_n)}\pi_n(\bgamma)}
= \frac{\mathcal{N}_n}{\mathcal{D}_n}\leq \Phi_n+(1-\Phi_n)\frac{\mathcal{N}_n}{\mathcal{D}_n}\label{consistency},
\end{align}
where $\Phi_n$ is the exponentially consistent sequence of tests given in Lemma~\ref{lemma:numconc}. The above equation is true as $\mathcal{N}_n/\mathcal{D}_n\leq 1$. This is in turn true as both are integrals of the same nonnegative functions,  $\mathcal{D}_n$ is the integral of that function over the entire set of possible $\bgamma$'s, while $\mathcal{N}_n$ is the integral over a subset $\mathcal{A}_n^c$.
In proving Theorem~\ref{theorem:main}, we will proceed in three steps as following.
\begin{enumerate}[(a)]
\item Step 1 shows that $\Phi_n\rightarrow 0$, as $n\rightarrow \infty$, almost surely.
\item Step 2 shows that $\exp(hn/2)(1-\Phi_n)\mathcal{N}_n\rightarrow 0$, as $n\rightarrow\infty$, almost surely.
\item Finally, step 3 shows that $\exp(hn/2)\mathcal{D}_n\rightarrow\infty$, as $n\rightarrow\infty$.
\end{enumerate}
Here $h$ is the one as defined in Lemma~\ref{lemma:numconc}. By (\ref{consistency}), (a)-(c) implies $\Pi_n(\mathcal{A}_n^c)\rightarrow 0$. We will now proceed proving (a)-(c).

\noindent \underline{(a) Step 1}\\
An application of the Markov inequality and (\ref{exp_cons}) in Lemma~\ref{lemma:numconc} yield,
\begin{align*}
P_{\bgamma^{(0)}}\left(\Phi_n>\exp(-nh/2)\right)\leq E_{\bgamma^{(0)}}\left(\Phi_n\right)\exp(nh/2)\leq \exp(-nh/2).
\end{align*}
Therefore $\sum_{n=1}^{\infty}P_{\bgamma^{(0)}}\left(\Phi_n>\exp(-nh/2)\right)<\infty$.\\

Applying Borel-Cantelli lemma, 
Thus, $P_{\bgamma^{(0)}}(\Phi_n>\exp(-nh/2)\:\mbox{happens infinitely often})=0$. 
This means that $\exists\:n_0$ and a set $\Omega$ with $P_{\bgamma^{(0)}}(
\Omega)=0$,  s.t. for all $n>n_0$, $\Phi_n(\omega)<\exp(-nh/2)$, for all $\omega\in\Omega^c$. Since $\exp(-nh/2)\rightarrow 0$, this means that $\Phi_n\rightarrow 0$ almost surely.

Thus,
\begin{align}\label{eq:show1}
\Phi_n \rightarrow 0 \quad a.s.
\end{align}
\noindent \underline{(b) Step 2}\\
We have
\begin{align*}
E_{\bgamma^{(0)}}((1-\Phi_n)\mathcal{N}_n) &=\int (1-\Phi_n)\int_{\mathcal{A}_n^c}\frac{p_{\bgamma}(\by_n)}{p_{\bgamma^{(0)}}(\by_n)}\pi_n(\bgamma)p_{\bgamma^{(0)}}(\by_n)\\
&=\int_{\mathcal{A}_n^c}\int (1-\Phi_n)p_{\bgamma}(\by_n)\pi_n(\bgamma)\\
&=\int_{\mathcal{A}_n^c} E_{\bgamma}(1-\Phi_n)\pi_n(\bgamma)\\
&\leq \sup\limits_{\bgamma\in\mathcal{A}_n^c}E_{\bgamma}(1-\Phi_n \Pi(\mathcal{A}_n^c)\\
&\leq \sup\limits_{\bgamma\in\mathcal{A}_n^c}E_{\bgamma}(1-\Phi_n)\leq \exp(-nh)\leq \exp(-nh/2).
\end{align*}
Consider the set $\mathcal{G}_{n,h,2}=\{(1-\Phi_n)\mathcal{N}_n\exp(nh/2)>\exp(-nh/4)\}$. The above inequality implies that $\sum_{n=1}^{\infty}P_{\bgamma^{(0)}}(\mathcal{G}_{n,h,2})<\infty$. Again since $h$ is fixed, applying Borel-Cantelli lemma $P_{\bgamma^{(0)}}(limsup_{n\rightarrow\infty}\mathcal{G}_{n,h,2})=0$. Using the definition of limsup of the sets $\mathcal{G}_{n,h,2}$ \cite{klenke2013probability}, $P_{\bgamma^{(0)}}(\mathcal{G}_{n,h,2}\:\mbox{happens infinitely often})=0$. Thus, $P_{\bgamma^{(0)}}((1-\Phi_n)\mathcal{N}_n\exp(nh/2)>\exp(-nh/4)\:\mbox{happens infinitely often})=0$. Let $\Omega_2$ be the set s.t. $P_{\bgamma^{(0)}}(
\Omega)=0$ and $(1-\Phi_n(\omega))\mathcal{N}_n\exp(nh/2)>\exp(-nh/4)\:\mbox{happens infinitely often}$ for all $\omega\in\Omega_2$.
This means that $\exists\:n_{0,2}$ s.t. for all $n>n_{0,2}$, $(1-\Phi_n(\omega))\mathcal{N}_n\exp(nh/2)<\exp(-nh/4)$, for all $\omega\in\Omega_2^c$. Since $\exp(-nh/4)\rightarrow 0$, this means that $\exp(nh/2)(1-\Phi_n)\mathcal{N}_n \rightarrow 0$ almost surely.


\begin{align}
\exp(nh/2)(1-\Phi_n)\mathcal{N}_n \rightarrow 0 \quad a.s.\label{eq:show2}.
\end{align}
\noindent \underline{(c) Step 3}
\begin{align*}
\int\frac{p_{\bgamma}(\by_n)}{p_{\bgamma^{(0)}}(\by_n)}\pi(\bgamma)
&=\int\exp\left(\nabla w_{\bgamma^{(0)},n}(\by_n)'(\bgamma-\bgamma^{(0)})+C_{\by_n,n}(\bgamma)\right)\pi(\bgamma)\\
&\geq\int\exp\left(-||\nabla w_{\bgamma^{(0)},n}(\by_n)||_{\infty}||\bgamma-\bgamma^{(0)}||_2-\frac{n}{8}||\bgamma-\bgamma^{(0)}||_2^2\right)\pi(\bgamma)\\
&\geq\int\exp\left(-2\sqrt{nq_n}||\bgamma-\bgamma^{(0)}||_2-\frac{n}{8}||\bgamma-\bgamma^{(0)}||_2^2\right)\pi(\bgamma)\\
&\geq \exp\left(-2\sqrt{nq_n}\frac{\eta_1}{n^{\rho/2}}-\frac{n\eta_1^2}{8n^{\rho}}\right)\Pi\left(||\bgamma-\bgamma^{(0)}||_2<\frac{\eta_1}{n^{\rho/2}}\right),
\end{align*}
where $\rho$ is the one defined in the statement of the theorem and the inequality in the second line follows from the Taylor series expansion after taking into account that $\nabla^2z(\cdot)\leq 1/4$ ($z(\cdot)$ defined in (\ref{abar})), which is true as $\frac{d^2}{df^2}\log\left(1+e^f\right)=\frac{e^f}{(1+e^f)^2}\leq 1/4.$
The inequality in the third line follows from the fact that $\by_n\in\mathcal{E}_n$.

First, observe that, given all the hierarchical parameters, the Bayesian network lasso prior distribution on $\bgamma$ can be written as $\bgamma=\bW+\bgamma_2$, where $\bgamma_2$ follows the ordinary Bayesian lasso shrinkage prior. With this observation, one can see
\begin{align*}
\Pi\left(||\bgamma-\bgamma^{(0)}||_2<\frac{\eta_1}{n^{\rho/2}}\right)
\geq\Pi\left(||\bgamma_2-\bgamma_2^{(0)}||_2<\frac{\eta_1}{2n^{\rho/2}}\right)\Pi\left(||\bW-\bW^{(0)}||_2<\frac{\eta_1}{2n^{\rho/2}}\right),
\end{align*}
where $\bW$ and $\bW^{(0)}$ are as defined in Lemma~\ref{lemma:priorc}. We will show sequentially\\
(i) $-\log\Pi\left(||\bW-\bW^{(0)}||_2<\frac{\eta_1}{2n^{\rho/2}}\right)=o(n)$ and\\
(ii) $-\log\left\{\Pi\left(||\bgamma_2-\bgamma_2^{(0)}||_2<\frac{\eta_1}{2n^{\rho/2}}\right)\right\}=o(n)$.

\noindent (i) Note that, with $R$ (dimensions of the latent variables) and $\Delta$ (probability of a node being influential) as defined before we obtain,
\begin{align}\label{eq_int}
\Pi(||\bW-\bW^{(0)}||_2<\frac{\eta_1}{2n^{\rho/2}})
&\geq\Pi(||\bu_k-\bu_k^{(0)}||_2\leq\upsilon_n,\:\forall\:k=1,..,V_n)\nonumber\\
&\geq\mbox{E}\left[\Pi(||\bu_k-\bu_k^{(0)}||_2\leq\upsilon_n,\:\forall\:k=1,..,V_n|\Delta)\right]\nonumber\\
&\geq\mbox{E}\left[\prod\limits_{k=1}^{V_n}\left\{\exp\left(-\frac{1}{2}\bu_k^{(0)'}\bu_k^{(0)}\right)
\Pi(||\bu_k||_2\leq\upsilon_n|\Delta)\right\}\right],
\end{align}
where the first inequality follows from Lemma~\ref{lemma:priorc} by replacing $\eta_1$ with $\frac{\eta_1}{2n^{\rho/2}}$ with a slight abuse of notation, and $\upsilon_n$ is
defined accordingly. The last inequality follows from the Anderson's Lemma. We will now make use of the fact that
$\int_{-a}^{a}\exp(-x^2/2)dx\geq \exp(-a^2)2a$ to conclude
\begin{align*}
&\Pi(||\bu_k||_2\leq\upsilon_n|\Delta)\geq \prod\limits_{r=1}^{R}\Pi\left(|u_{k,r}|\leq\frac{\upsilon_n}{R}|\Delta\right)
=\prod\limits_{r=1}^{R}\left((1-\Delta)+\frac{\Delta}{\sqrt{2\pi}}\int_{-\upsilon_n/R}^{\upsilon_n/R}\exp(-x^2/2)\right)\\
&\geq\prod\limits_{r=1}^{R}\left((1-\Delta)+\frac{\Delta}{\sqrt{2\pi}}\exp(-\upsilon_n^2/R^2)\frac{2\upsilon_n}{R}\right)
\geq\left[(1-\Delta)+\frac{\Delta}{\sqrt{2\pi}}\exp(-\upsilon_n^2/R^2)\frac{2\upsilon_n}{R}\right]^{R}.
\end{align*}
\begin{align*}
\prod\limits_{k=1}^{V_n}\Pi(||\bu_k||_2\leq\upsilon_n)&\geq
\mbox{E}\left[(1-\Delta)+\frac{\Delta}{\sqrt{2\pi}}\exp(-\upsilon_n^2/R^2)\frac{2\upsilon_n}{R}\right]^{RV_n}\\
&=\mbox{E}\left[\sum\limits_{h_1=1}^{RV_n}{\binom{RV_n}{h_1}}(1-\Delta)^{h_1}\Delta^{RV_n-h_1}\left(\frac{2\upsilon_n}{R}\right)^{RV_n-h_1}\exp\left(-(RV_n-h_1)\upsilon_n^2/R^2\right)\right]\\
&\geq\sum\limits_{h_1=1}^{RV_n}{\binom{RV_n}{h_1}}Beta(RV_n-h_1+1,h_1+1)\\
&\left(\frac{2\upsilon_n}{R}\right)^{RV_n-h_1}\exp\left(-(RV_n-h_1)\upsilon_n^2/R^2\right)\\
&\geq\sum\limits_{h_1=1}^{RV_n}\frac{(RV_n)!}{h_1!(RV_n-h_1)!}\frac{h_1!(RV_n-h_1)!}{(RV_n+1)!}\\
&\left(\frac{2\upsilon_n}{R}\right)^{RV_n-h_1}\exp\left(-(RV_n-h_1)\upsilon_n^2/R^2\right)\\
&\geq\frac{RV_n}{RV_n+1}\left(\frac{2\upsilon_n}{R}\right)^{RV_n}\exp(-V_n\upsilon_n^2/R).
\end{align*}
Where the last inequality follows from Lemma~\ref{lemma:priorc} by considering the fact that,\\
 $\upsilon_n=\min\limits_{k,l}\frac{-[||\bu_k^{(0)}||+||\bu_l^{(0)}||]+\sqrt{[||\bu_k^{(0)}||+||\bu_l^{(0)}||]^2+2\eta_1/n^{\rho/2}}}{2}
\leq \frac{\sqrt{\eta_1}}{\sqrt{2}n^{\rho/4}}$. Hence, $0<\frac{2\upsilon_n}{R}<1$ for large $n$. It now follows from (\ref{eq_int}) that
\begin{align*}
-\log\Pi\left(||\bW-\bW^{(0)}||_2 <\frac{\eta_1}{2n^{\rho/2}}\right)&\leq\sum\limits_{k=1}^{V_n}\frac{\bu_k^{(0)'}\bu_k^{(0)}}{2}
+\frac{V_n\eta_1}{2Rn^{\rho/2}}-(RV_n)\log\left(\frac{2\sqrt{\eta_1}}{\sqrt{2}Rn^{\rho/4}}\right)+\log(RV_n+1)\\
&-\log(RV_n)=o(n),
\end{align*}
by the assumptions (A) and (B). This proves (i).

We will now prove (ii). Let $\mathcal{S}^0=\{j:\gamma_{2,j}^{(0)}\neq 0\}$. Define $\bs$ as the vector of upper triangular part of the matrix with $(k,l)$th entry $s_{k,l}$. It follows that
\begin{align}\label{eq:new}
\Pi\left(||\bgamma_2-\bgamma_2^{(0)}||_2<\frac{\eta_1}{2n^{\rho/2}}\right)
\geq \Pi\left(|\gamma_{2,j}-\gamma_{2,j}^{(0)}|<\frac{\eta_1}{2\sqrt{q_n}n^{\rho/2}},j\in\mathcal{S}^0\right)
\Pi\left(\sum_{j\not\in\mathcal{S}^0}|\bgamma_{2,j}|^2<\frac{(q_n-s_{2,n}^0)\eta_1^2}{4q_n n^{\rho}}\right).
\end{align}
We will lower bound two components of the product in (\ref{eq:new}) individually. By Chebyshev's inequality
\begin{align}\label{first}
\Pi\left(\sum_{j\not\in\mathcal{S}^0}|\gamma_{2,j}|^2<\frac{(q_n-s_{2,n}^0)\eta_1^2}{4q_n n^{\rho}}\right)
&\geq \left(1-\frac{E[\sum_{j\not\in\mathcal{S}^0}|\gamma_{2,j}|^2]4q_nn^{\rho}}{(q_n-s_{2,n}^0)\eta_1^2}\right)\nonumber\\
& = \left(1-\frac{2\theta_nq_nn^{\rho}}{\eta_1^2}\right).
\end{align}

\begin{align*}
&\Pi\left(|\gamma_{2,j}-\gamma_{2,j}^{(0)}|<\frac{\eta_1}{2\sqrt{q_n}n^{\rho/2}},j\in\mathcal{S}^0\right)
=E\left[\Pi\left(|\gamma_{2,j}-\gamma_{2,j}^{(0)}|<\frac{\eta_1}{2\sqrt{q_n}n^{\rho/2}},j\in\mathcal{S}^0|\bs_{\mathcal{S}^0}\right)\right]\\
&=E\left[\prod_{j\in\mathcal{S}^0}\Pi\left(|\gamma_{2,j}-\gamma_{2,j}^{(0)}|<\frac{\eta_1}{2\sqrt{q_n}n^{\rho/2}}|\bs_{\mathcal{S}^0}\right)\right].
\end{align*}
Using the fact that $\int_{a}^{b}e^{-x^2/2}dx\geq e^{-(a^2+b^2)/2}(b-a)$, one obtains
\begin{align*}
& \prod_{j\in\mathcal{S}^0}\Pi\left(|\gamma_{2,j}-\gamma_{2,j}^{(0)}|<\frac{\eta_1}{2\sqrt{q_n}n^{\rho/2}}|\bs_{\mathcal{S}^0}\right)
\geq \prod_{j\in\mathcal{S}^0}\left\{\left(\frac{\eta_1}{\sqrt{2q_nn^{\rho}\pi s_j^2}}\right)\exp\left(-\frac{|\gamma_{2,j}^{0}|^2+\eta_1^2/(4q_n n^{\rho})}{s_j^2}\right)\right\}.
\end{align*}
Thus
\begin{align*}
&\Pi\left(|\gamma_{2,j}-\gamma_{2,j}^{(0)}|<\frac{\eta_1}{2\sqrt{q_n}n^{\rho/2}},j\in\mathcal{S}^0\right)\\
& \geq E\left[\prod_{j\in\mathcal{S}^0}\left\{\left(\frac{\eta_1}{\sqrt{2q_nn^{\rho}\pi s_j^2}}\right)\exp\left(-\frac{|\gamma_{2,j}^{0}|^2+\eta_1^2/(4q_n n^{\rho})}{s_j^2}\right)\right\}\right]\nonumber\\
&\geq \left(\frac{\eta_1\theta_n}{\sqrt{2q_n n^{\rho}\pi}}\right)^{s_{2,n}^0}
\prod_{j\in\mathcal{S}^0}\int_{s_j}\left\{\frac{1}{\sqrt{s_j^2}}\exp\left(-\frac{|\gamma_{2,j}^{0}|^2+\eta_1^2/(4q_n n^{\rho})}{s_j^2}-\frac{\theta_n s_j^2}{2}\right)ds_j^2\right\}.
\end{align*}
Use the change of variable $\frac{1}{s_j^2}=z_j$ and the normalizing constant from the inverse Gaussian density to deduce
\begin{align*}
&\int_{s_j}\left\{\frac{1}{\sqrt{s_j^2}}\exp\left(-\frac{|\gamma_{2,j}^{0}|^2+\eta_1^2/(4q_n n^{\rho})}{s_j^2}-\frac{\theta_{n}s_j^2}{2}\right)ds_j^2\right\}\\
&=\int_{z_j}\left\{\frac{1}{\sqrt{z_j^3}}\exp\left(-(|\gamma_{2,j}^{0}|^2+\eta_1^2/(4q_n n^{\rho})z_{j}-\frac{\theta_{n}}{2z_{j}}\right)dz_{j}\right\}\\
&=\sqrt{\left(\frac{2\pi}{\theta_{n}}\right)}\exp\left(-\theta_{n}\sqrt{2\left(|\gamma_{2,j}^{0}|^2+\eta_1^2/(4q_n n^{\rho})\right)}\right).
\end{align*}
Therefore,
\begin{align}\label{second}
\Pi\left(|\gamma_{2,j}-\gamma_{2,j}^{(0)}|<\frac{\eta_1}{2\sqrt{q_n}n^{\rho/2}},j\in\mathcal{S}^0\right)
\geq \left(\frac{\eta_1\sqrt{\theta_n}}{\sqrt{q_n n^{\rho}}}\right)^{s_{2,n}^0} \exp\left(-\theta_{n}\sum_{j\in\mathcal{S}^0}\sqrt{2\left(|\gamma_{2,j}^{0}|^2+\eta_1^2/(4q_n n^{\rho})\right)}\right).
\end{align}
Combining results from (\ref{first}) and (\ref{second})
\begin{align*}
\Pi\left(||\bgamma_2-\bgamma_2^{(0)}||_2<\frac{\eta_1}{2n^{\rho/2}}\right)
&\geq \left(\frac{\eta_1\sqrt{\theta_n}}{\sqrt{q_n n^{\rho}}}\right)^{s_{2,n}^0} \exp\left(-\theta_{n}\sum_{j\in\mathcal{S}^0}\sqrt{2\left(|\gamma_{2,j}^{0}|^2+\eta_1^2/(4q_n n^{\rho})\right)}\right)\\
&\qquad \left(1-\frac{2\theta_nq_nn^{\rho/2}}{\eta_1^2}\right).
\end{align*}
Referring to Assumption (F),
\begin{align}\label{prob_n}
&-\log\Pi\left(||\bgamma_2-\bgamma_2^{(0)}||_2<\frac{\eta_1}{2n^{\rho/2}}\right)
\leq s_{2,n}^0[\eta+\log(q_n)+(3\rho/4)\log(n)+\log(\log(n))/2]\nonumber\\
&\qquad\qquad+\frac{\sqrt{2\left(|\gamma_{2,j}^{0}|^2+\eta_1^2/(4q_n n^{\rho})\right)}}{q_n n^{\rho/2}\log(n)}-\log\left(1-\frac{2}{\eta^2\log(n)}\right)=o(n),
\end{align}
under assumptions (B)-(F).

Finally,
\begin{align*}
-\log(\mathcal{D}_n)&\leq 2\sqrt{nq_n}\frac{\eta_1}{n^{\rho/2}}+\frac{n\eta^2}{8n^{\rho}}-\log\Pi\left(||\bgamma-\bgamma^{(0)}||_2<\frac{\eta_1}{n^{\rho/2}}\right)\\
&=2\eta\sqrt{q_n}n^{(1-\rho)/2}+\frac{\eta_1^2}{8}n^{1-\rho}-\log\Pi\left(||\bgamma-\bgamma^{(0)}||_2<\frac{\eta_1}{n^{\rho/2}}\right).
\end{align*}
Using (\ref{prob_n}), the fact that $(1-\rho)/2\in (-1/2,0)$ and assumption (B), we obtain $-\log(\mathcal{D}_n)=o(n)$. Thus
(c) follows.

\end{document}